\documentclass[12pt]{amsart}
\usepackage{amsmath, amssymb, amsbsy, amsfonts, amsthm, latexsym, amsopn, amstext, amsxtra, euscript, amscd, color, mathrsfs}
\usepackage{amsmath,amscd}
\usepackage[normalem]{ulem}
\usepackage{soul}

\hoffset -1.3cm
\voffset -1cm
\textwidth 15.5truecm
\textheight 22.5truecm

\makeatletter
\@namedef{subjclassname@2020}{%
  \textup{2020} Mathematics Subject Classification}
\makeatother
\PassOptionsToPackage{hyphens}{url}
\usepackage[hidelinks]{hyperref}

\usepackage{tabularx}
\usepackage[colorinlistoftodos,prependcaption,textsize=tiny]{todonotes}


\usepackage{amscd}
\usepackage{color,enumerate}
\usepackage{array, multirow, multicol}
\usepackage{longtable}
\usepackage{bm}  
\newcommand{\RNum}[1]{\lowercase\expandafter{\romannumeral #1\relax}}

\setlength{\marginparwidth}{2.2cm}

\newtheorem{thm}{Theorem}[section]
\newtheorem{lem}[thm]{Lemma}

\newtheorem{exmp}[thm]{Example}

\newtheorem{rmk}[thm]{Remark}

\newtheorem{thm-con}[thm]{Theorem-Conjecture}
\numberwithin{equation}{section}

\theoremstyle{definition}
\newtheorem{defn}[thm]{Definition}

\newcommand{\F}{\mathbb F}
\newcommand{\bP}{\mathbb P}

\def\Tr{{\rm Tr}}

\begin{document}
\title[Permutation polynomials with  a few terms over finite fields]{Permutation polynomials with  a few terms \\ over finite fields}

\author[K. Garg]{Kirpa Garg}
\address{Department of Mathematics, Indian Institute of Technology Jammu, Jammu 181221, India}
\email{2020RMA1030@iitjammu.ac.in}
\author[S. U. Hasan]{Sartaj Ul Hasan}
\address{Department of Mathematics, Indian Institute of Technology Jammu, Jammu 181221, India}
\email{sartaj.hasan@iitjammu.ac.in}
\author[C. Li]{Chunlei Li}
\address{Department of Informatics, University of Bergen, PB 7803, N-5020, Bergen, Norway}
\email{chunlei.li@uib.no}
\author[H. Kumar]{Hridesh Kumar}
\address{Department of Mathematics, Indian Institute of Technology Jammu, Jammu 181221, India}
\email{2021RMA2022@iitjammu.ac.in}
\author[M. Pal]{Mohit Pal}
\address{Department of Informatics, University of Bergen, PB 7803, 5020, Bergen, Norway}
\email{mohit.pal@uib.no}
\thanks{The work of S. U. Hasan was supported by the Science and Engineering Research Board (SERB), Government of India (Grant No. CRG/2022/005418), and the Indo-Norwegian Cooperation Programme 2024 (Project No. INCP2-2024/10213). He also acknowledges the support of SPIRE project from Univ. of Bergen, Norway. H. Kumar acknowledges support from the Prime Minister’s Research Fellowship (PMRF), Government of India, under PMRF ID 3002900 at IIT Jammu. C. Li’s work was supported by the Research Council of Norway (Grant No. 311646) and the Indo-Norwegian Cooperation Programme 2024 (Project No. INCP2-2024/10213). M. Pal acknowledges support from the Research Council of Norway under Grant No. 314395.}

\begin{abstract}

This paper considers permutation polynomials over the finite field $\F_{q^2}$ by utilizing low-degree permutation rational functions over $\F_q$, where $q$ is a prime power.
As a result, we obtain two classes of permutation binomials, six classes of permutation quadrinomials and six classes of permutation pentanomials over $\F_{q^2}$. Additionally, we show that the obtained binomials, quadrinomials and pentanomials are quasi-multiplicative inequivalent to 
the known ones in the literature.
\end{abstract}
\keywords{Finite fields, Permutation polynomials, Permutation binomials, Permutation quadrinomials, Permutation pentanomials}
\subjclass[2020]{12E20, 11T06}
\maketitle

\section{Introduction}
Let $\F_q$ be the finite field with $q=p^m$ elements, where $p$ is a prime number and $m$ is a positive integer. We denote by $\F_q^*$ the multiplicative cyclic group of non-zero elements of the finite field $\F_q$ and by $\F_q[X]$ the ring of polynomials in one variable $X$ with coefficients in $\F_q$. Let $f$ be a function from the finite field $\F_q$ to itself. It is elementary and well-known that $f$ can be uniquely represented by a polynomial in $\F_q[X]$ of degree strictly less than $q$. A polynomial $f(X)\in \F_q[X]$ is called a permutation polynomial (PP) if the induced mapping $X \mapsto f(X)$ is a bijection of $\F_q$. Permutation polynomials over finite fields have been an interesting area of research for many years due to their wide range of applications in coding theory~\cite{ Ding_C_13, Chapuy_C_07}, cryptography~\cite{Bertoni_Daemen,  Dwokin_SHA3, Lidl_Cr_84, Schwenk_Cr_98}, combinatorial design theory~\cite{Ding_Co_06}, and several other branches of mathematics and engineering. For a survey of recent developments in permutation polynomials over finite fields, the reader may refer to~\cite{Hou_PP_15,WangIndex}.

Permutation polynomials with a few terms are of particular interest due to their simple algebraic structures,  which make them suitable for implementation in various applications. A lot of research has been done in recent years on the construction of permutation polynomials with a few terms. The most simple polynomials are the monomials $X^n$, which permutes $\F_q$ if and only if $\gcd(n,q-1)=1$. This fact makes the classification of permutation monomials easy.
On the other hand, the classification of permutation polynomials having a few terms, such as binomials, trinomials, quadrinomials, and pentanomials, is non-trivial and has not yet been completely resolved.
In order to study permutation binomials, Hou argued in \cite{Hou_bi_tri} that it is enough to consider binomials of the form $f(X)=X^r(X^{\frac{q-1}{d}}+a)$, where $ r \geq 1, d \mid (q-1)$ and $a\in \F_q^{*}$. For $r=1$ and a fixed $d$, Carlitz and Wells ~\cite{CW} proved that there always exist elements $a\in \F_q^*$ such that $f(X)$ permutes $\F_q$ for sufficiently large $q$. A lot of research~\cite{ JLTZ, WL,MZ,OM} has been done on the characterization of $r’$s and $a’$s, for which $f(X)$ is a permutation binomial. For  binomials of the form $f(X) = X^n +aX^m$ Turnwald in~\cite{TG}, and further, Masuda and Zieve in~\cite{MZ} have given certain conditions on the positive integers $m, n$, and $a \in \F_q^{*}$ for which $f(X)$ are not permutations of $\F_q$. Further, the permutation binomials having the form $X(X^{r(q-1)}+ a)$ over $\F_{q^2}$ for $r \in \{2,3,5,7\}$ are investigated in ~\cite{HXX,HL,LS}. For explicit constructions of permutation binomials and trinomials so far, the interested reader may refer to ~\cite{BZ,BS,SG,HV, LFW,LQ,Wang_2007}.

Unlike permutation binomials and trinomials, relatively little is known about permutation quadrinomials over finite fields. Permutation quadrinomials with Niho-like exponents have attracted a lot of attention because of
their good cryptographic properties ~\cite{KLi_quad_2021, NLi_quad_2021}. Gupta~\cite{Gupta20} constructed several families of permutation quadrinomials of $\F_{q^2}$ with coefficients derived from the subfield $\F_q$ of characteristic 3 or 5. The author continued this work in~\cite{Gupta22} and gave more general classes of permutation quadrinomials where the coefficients were not restricted to $\F_q$ only. 
Recently, {\"O}zbudak and Tem{\"u}r~\cite{FB_quad_2023} provided a complete characterization for a class of permutation quadrinomials over $\F_{q^2}$, where $q$ is an odd prime power and the coefficients are coming from the finite field $\F_q$. For more results on permutation quadrinomials, the reader is referred to~\cite{ Chen, KLi_quad_2020, NLi_quad_2021,Gupta_resi_2023, ZTu_quad_19,  ZTu_quad_18_2, ZTu_quad_18_1, Zheng_quad_22} and references therein.

The study of permutation pentanomials has recently attracted growing interest and attention. For more than a decade, there was only one class of permutation pentanomials, which was given by Dobbertin~\cite{Dobbertin_niho} to prove Niho’s conjecture. Later, Xu, Cao and Ping~\cite{Xu_penta_18} constructed several classes of permutation pentanomials with trivial coefficients derived from fractional polynomials that permute the unit circle of $\F_{q^2}$ with order $q+1$, where $q=2^m$. 
Moreover, the authors in~\cite{LQW} proposed six new classes of permutation pentanomials over $\F_{2^{2m}}$.
Liu, Chen, Liu and Zou~\cite{QLiu_penta_2023} (for $p=2,3$) investigated a few more classes of permutation pentanomials with general coefficients over $\F_{p^{2m}}$. Recently, Rai and Gupta~\cite{Rai} characterized three new classes of permutation pentanomials over $\F_{q^2}$ where $q$ is an even prime power and coefficients are in $\F_q$.
For more details on permutation pentanomials, we refer the readers to the papers~\cite{Deng, KX, Shen, Zhang_penta_23} and references therein.

In this paper, we shall present several new classes of permutation polynomials (binomials, quadrinomials and pentanomials) of the form $X^rh(X^{q-1})$ for $r\geq 1$ by studying some low-degree polynomials $h(X)$, which are derived from low-degree rational functions that permute the unit circle $\mu_{q+1}$ of $\F_{q^2}$ with order $q+1$. 
Our method is based on the fractional polynomials which permute the projective line $\bP^1(\F_q)= \F_q \cup \{ \infty \}$ and will be used to construct low-degree bijections on $\mu_{q+1}$.
By applying the criterion \cite{Park_2001, Wang_2007, Zieve2008}  for $X^rh(X^{q-1})$ to permute $\F_{q^2}$, we derive several classes of permutation polynomials over $\F_{q^2}$.

The remainder of this paper will be organized as follows. In Section~\ref{S2}, we recall some definitions and lemmas that will be used in later sections. In Section ~\ref{S3}, we classify permutation quadrinomials from degree-three rational functions. Further, we give six classes of permutation pentanomials and two classes of permutation binomials derived from degree-four rational functions in Section ~\ref{S4}. In Section~\ref{S5}, we 
investigate the quasi-multiplicative equivalence among the obtained permutation polynomials and the already known classes of permutation polynomials. We conclude our work in Section~\ref{S6}.

\section{Preliminaries}\label{S2}
In this section, we review some definitions and provide several lemmas that will be used in the subsequent sections. 

A function of the form $f(X)=\frac{P(X)}{Q(X)} $ defines a mapping from $\bP^1(\F_q)=\F_q\cup\{\infty\}$ to itself, where $P(X), Q(X) \in \F_q[X]$ and $\gcd(P,Q)=1$, is called \textit{rational function}. We denote the degree of $f$ by $\deg(f)$, i.e.,$\deg(f):=\max(\deg(P), \deg(Q))$. The set of all such rational functions is denote by $\F_q(X)$. We call $f(X)$ a permutation rational (PR) function if the induced map $X \mapsto f(X)$ is a bijection from $\bP^1(\F_q)$ to itself.
\begin{defn}\cite{Ding2020}
Two rational functions $f$ and $g$ are equivalent if there exist degree-one rational functions $\phi$ and $\psi$ such that $g= \phi \circ f \circ \psi$.
\end{defn}
 It is easy to see that for equivalent functions $f,g\in \F_q(X)$, $f$ permutes $\bP^1(\F_q)$ if and only if $g$ permutes  $\bP^1(\F_q)$.
We now recall some lemmas that give us the complete classification of permutation rational functions of degree at most four over $\F_q$.

\begin{lem}\cite{Ding2020}\label{L20}
Every degree-one $f(X)\in \F_q(X)$ permutes $\bP^1(\F_q)$. A degree-two $f(X)\in \F_q(X)$ permutes $\bP^1(\F_q)$ if and only if $q$ is even and $f(X)$ is equivalent to $X^2$.
\end{lem}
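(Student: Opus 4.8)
The plan is to treat the two assertions separately, using throughout the fact recorded just before the lemma that equivalence preserves the property of permuting $\bP^1(\F_q)$. For the degree-one case, I would simply note that a degree-one $f(X)=\frac{aX+b}{cX+d}$ in lowest terms has $ad-bc\neq 0$, and therefore admits the explicit inverse $X\mapsto\frac{dX-b}{-cX+a}$; hence the induced map on $\bP^1(\F_q)$ is a bijection. The backward direction of the degree-two statement is then immediate: since equivalence preserves the permutation property, it suffices to check that $X^2$ permutes $\bP^1(\F_q)$ when $q$ is even, and in that case $x\mapsto x^2$ is the Frobenius automorphism of $\F_q$ (so a bijection of $\F_q$) while fixing $\infty$.

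The substance lies in the forward direction, and here my first move is a normalization. Suppose $f$ has degree two and permutes $\bP^1(\F_q)$. As $f$ is a bijection, $\infty$ has a unique preimage $a\in\bP^1(\F_q)$; choosing a degree-one $\psi$ with $\psi(\infty)=a$ and replacing $f$ by $f\circ\psi$, I may assume $f(\infty)=\infty$ and that $\infty$ is the only $\F_q$-point mapped to $\infty$. Writing $f=P/Q$ in lowest terms, the condition $f(\infty)=\infty$ forces $\deg P>\deg Q$, whence $\deg P=2$. If $\deg Q=1$, its root $r\in\F_q$ would furnish a second $\F_q$-preimage of $\infty$ (since $P(r)\neq 0$ by coprimality), contradicting uniqueness; so $\deg Q=0$ and $f$ is a quadratic polynomial. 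Post-composing with a suitable degree-one map then yields $f\sim X^2+\beta X$ for some $\beta\in\F_q$, reducing the problem to deciding when $X^2+\beta X$ permutes $\F_q$.

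The final step splits according to the characteristic, and this is the crux of the argument. In odd characteristic, completing the square gives $X^2+\beta X\sim X^2$, and since $x\mapsto x^2$ is two-to-one on $\F_q^{*}$ it is not a permutation; hence no degree-two rational function permutes $\bP^1(\F_q)$ when $q$ is odd. In characteristic two, the key observation is that $x\mapsto x^2+\beta x$ is additive, with kernel $\{0,\beta\}$, so it is injective precisely when $\beta=0$, that is, when $f\sim X^2$. Combining the two cases shows that a degree-two permutation forces $q$ even and $f\sim X^2$, as claimed. The main obstacles I anticipate are the pole-counting normalization that reduces $f$ to a quadratic polynomial, and the additivity observation in characteristic two that pins down $\beta=0$; the rest is routine manipulation with degree-one transformations.
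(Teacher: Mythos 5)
Your argument is correct and complete: the Möbius-inverse observation handles degree one, the pole-counting normalization correctly reduces a degree-two permutation to a quadratic polynomial $X^2+\beta X$ up to equivalence, and the characteristic split (two-to-one squaring in odd characteristic; additivity with kernel $\{0,\beta\}$ in characteristic two) correctly pins down $q$ even and $\beta=0$. The paper itself offers no proof of this statement --- it is quoted verbatim from Ding and Zieve --- so there is no in-paper argument to compare against; your elementary, self-contained derivation is a valid substitute for the citation.
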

The following theorem from~\cite{Ding2020} (see also \cite{Ferraguti2020,Hou_carlitz}) gives a complete classification of permutation rational functions of degree-three over $\F_q$.
 \begin{lem}\cite[Theorem 1.3]{Ding2020}
\label{T20}
  A degree-three $f(X) \in \F_q(X)$ permutes $\bP^1(\F_q)$ if and only if it is equivalent to the following
  \begin{enumerate}
   \item $X^3$~\mbox{where}~$q\equiv2 \pmod 3$
   \item $\zeta^{-1} \circ X^3 \circ \zeta$~\mbox{where}~$q\equiv1 \pmod 3$ and for some $\delta \in \F_{q^2}\setminus \F_q$ we have $\zeta(X)=(X-\delta^q)/(X-\delta)$ and $ \zeta^{-1}(X)=(\delta X-\delta^q)/(X-1)$
   \item $X^3 -\alpha X$ where $3 \mid q$ and either $\alpha =0$ or $\alpha$ is a non-square in $\F_q$.
  \end{enumerate}
 \end{lem}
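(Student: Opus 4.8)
The plan is to prove both directions, with the forward (sufficiency) direction a direct verification and the reverse (necessity) direction resting on a ramification analysis followed by a descent over $\F_q$.

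For the \emph{if} direction, since the permutation property is invariant under the equivalence $g=\phi\circ f\circ\psi$, it suffices to check that each listed representative permutes $\bP^1(\F_q)$. The map $X^3$ fixes $0$ and $\infty$ and restricts to $X\mapsto X^3$ on $\F_q^{*}$, which is a bijection exactly when $\gcd(3,q-1)=1$, i.e. $q\equiv 2\pmod 3$; this settles case (1). For case (2), a short computation shows that $\zeta$ carries $\bP^1(\F_q)$ bijectively onto the subgroup $\mu_{q+1}$ of norm-one elements of $\F_{q^2}^{*}$ (indeed $\zeta(X)^{q}=1/\zeta(X)$ for $X\in\F_q$), and cubing permutes the cyclic group $\mu_{q+1}$ precisely when $\gcd(3,q+1)=1$, i.e. $q\equiv 1\pmod 3$; conjugating by the degree-one $\zeta$ then yields a permutation of $\bP^1(\F_q)$. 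For case (3), in characteristic $3$ the polynomial $X^3-\alpha X$ is $\F_3$-linear, fixes $\infty$, and permutes $\F_q$ iff its kernel is trivial, i.e. $X^2=\alpha$ has no nonzero solution, which holds exactly when $\alpha=0$ or $\alpha$ is a non-square in $\F_q$.

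For the \emph{only if} direction, I would exploit the two-sided $\mathrm{PGL}_2(\F_q)\times\mathrm{PGL}_2(\F_q)$ action on $f$ and split according to whether $p=3$. Assume first $p\neq 3$, so that a degree-three $f$ is separable. Applying Riemann--Hurwitz to $f\colon\bP^1\to\bP^1$ over $\overline{\F_q}$ gives $\sum_P(e_P-1)=4$, and for degree three the only admissible tame ramification portraits are those with fiber types $(3,3)$, $(3,2,2)$, and $(2,2,2,2)$. The hypothesis that $f$ be a bijection on $\bP^1(\F_q)$ is an exceptionality condition on the cover, which I would translate, through the comparison of the arithmetic and geometric monodromy groups of $f$, into the statement that the geometric monodromy is cyclic of order $3$; this eliminates the portraits $(3,2,2)$ and $(2,2,2,2)$ and leaves only $(3,3)$. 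Consequently $f$ is equivalent over $\overline{\F_q}$ to the cube map $X^3$, whose ramification locus is exactly its two totally ramified points.

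It then remains to descend this identification to $\F_q$. The unordered pair of totally ramified points is Galois-stable, so either both points are $\F_q$-rational or they form a conjugate pair in $\F_{q^2}\setminus\F_q$. In the first case, an element of $\mathrm{PGL}_2(\F_q)$ moves them to $0$ and $\infty$, recovering $X^3$ and hence case (1), and bijectivity on $\F_q^{*}$ forces $q\equiv 2\pmod 3$. In the second case, writing the pair as $\delta,\delta^q$, the map $\zeta(X)=(X-\delta^q)/(X-\delta)$ sends them to $0,\infty$ and conjugates $f$ into the form of case (2), where bijectivity on $\mu_{q+1}$ forces $q\equiv 1\pmod 3$. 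Finally, when $p=3$ the cube map is inseparable and the tame Riemann--Hurwitz count no longer applies; here I would instead argue that a suitable equivalence turns the inseparable degree-three map into an additive (linearized) polynomial, necessarily of the shape $X^3-\alpha X$, and then read off the permutation condition on $\alpha$ as in the \emph{if} direction. The main obstacle is precisely this necessity direction: justifying that exceptionality forces the $(3,3)$ portrait (the monodromy step), controlling the $\F_q$-forms of the cube map so that exactly the models in (1) and (2) survive the descent, and handling the wild case $p=3$ by a separate inseparability argument.
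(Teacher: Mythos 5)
The paper does not prove this statement at all: it is imported verbatim as \cite[Theorem 1.3]{Ding2020}, so there is no internal proof to compare yours against. Judged on its own terms, your \emph{if} direction is correct and essentially complete: the equivalence-invariance of the permutation property, the computation showing that $\zeta$ carries $\bP^1(\F_q)$ bijectively onto $\mu_{q+1}$ (via $\zeta(X)^q=1/\zeta(X)$), and the kernel criterion for the additive polynomial $X^3-\alpha X$ in characteristic $3$ are all sound. One small imprecision: $\gcd(3,q-1)=1$ is equivalent to $q\not\equiv 1\pmod 3$, not to $q\equiv 2\pmod 3$; this is harmless for the forward direction because the hypothesis of case (1) already excludes $3\mid q$, but the phrase ``exactly when'' overstates it.

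The necessity direction has two genuine gaps that you partly acknowledge but do not close. First, bijectivity of $f$ on $\bP^1(\F_q)$ for a \emph{single} $q$ is not literally an exceptionality hypothesis; to conclude that the geometric monodromy group is $C_3$ rather than $S_3$ you need either a Weil- or Chebotarev-type estimate (which only works once $q$ is large relative to the degree, leaving finitely many small $q$ to be checked separately) or a direct combinatorial argument, and neither is supplied. Second, and more seriously, your treatment of $p=3$ rests on the claim that a degree-three map in characteristic $3$ is inseparable; this is false in general --- $X^3-\alpha X$ with $\alpha\neq 0$ is separable but wildly ramified at infinity --- so the correct dichotomy is tame versus wild/inseparable, and reducing an arbitrary wildly ramified degree-three permutation rational function to the additive form $X^3-\alpha X$ requires a wild-ramification (or direct computational) analysis that the proposal does not contain. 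Until those two steps are filled in, your argument establishes only the sufficiency half of the classification.
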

 Motivated by the results of~\cite{Ferraguti2020}, Hou~\cite{HouD4}, Ding and Zieve~\cite{Ding2020} classified degree-four permutation rational functions of $\bP^1(\F_q)$. The following lemma from~\cite{Ding2020} (see also~\cite{HouD4}) gives a complete classification of degree-four permutation rational functions of $\bP^1(\F_q)$.
 \begin{lem}\cite[Theorem 1.4]{Ding2020}
 \label{T21}
A degree-four $f(X)\in \F_q(X)$ permutes $\bP^1(\F_q)$ if and only if one of the following holds.
\begin{enumerate}
   \item $q$ is even and $f(X)$ is equivalent to $X^4+\alpha X^2+\beta X$ for some $\alpha, \beta \in \F_q$ such that $X^3+\alpha X+ \beta$ has no roots in $\F_q^*$
   \item $q\leq 8$ and $f(X)$ is equivalent to a rational function in Table $1$~\cite[Page 19]{Ding2020}
   \item $q$ is odd and $f(X)$ is equivalent to $$\frac{X^4-2\alpha X^2-8\beta X+\alpha^2}{X^3+\alpha X+\beta}$$ for some $\alpha, \beta \in \F_q$ such that $X^3+\alpha X+ \beta$ is irreducible in $\F_q[X].$
  \end{enumerate}
\end{lem}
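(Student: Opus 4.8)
The plan is to classify degree-four permutation rational functions by passing to the \emph{monodromy groups} of the associated degree-four cover $f\colon \bP^1\to\bP^1$, and to organize the argument by the parity of $q$ and by whether $q$ is large or small. First I would exploit the equivalence relation of Definition~2.2: since the degree-one rational functions are exactly the M\"obius transformations $\mathrm{PGL}_2(\F_q)$, I may pre- and post-compose $f$ by such maps to move its poles and ramification into convenient position. Setting $t=f(X)$, I would then introduce the geometric monodromy group $G$ (the Galois group of the Galois closure of $\overline{\F_q}(X)/\overline{\F_q}(t)$) and the arithmetic monodromy group $\hat G$ (over $\F_q$), viewed as transitive subgroups of $S_4$ with $G\trianglelefteq\hat G$ and $\hat G/G$ cyclic. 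The transitive subgroups of $S_4$ are $C_4,V_4,D_4,A_4,S_4$, and the Riemann--Hurwitz formula for a genus-zero degree-four cover, which forces the total tame ramification to equal $6$, together with the admissible inertia types, sharply limits which pairs $(G,\hat G)$ and which ramification data can occur.

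The next step is to convert the permutation property into group-theoretic language. The key reduction is that if $f$ permutes $\bP^1(\F_q)$, then the curve $C\colon\ \bigl(f(X)-f(Y)\bigr)/(X-Y)=0$ has no $\F_q$-rational point off the diagonal $\Delta$; by the Weil bound, any absolutely irreducible component of $C$ defined over $\F_q$ would carry at least $q+1-c\sqrt q$ points, which is positive once $q$ is large, a contradiction. Hence for large $q$ the map must be \emph{exceptional}. Since the absolutely irreducible components of $C$ correspond to the orbits of $G$ on $\Omega\times\Omega\setminus\Delta$ (where $\lvert\Omega\rvert=4$) and Frobenius merges these according to $\hat G$, exceptionality is equivalent to the criterion that \emph{no off-diagonal $G$-orbital is $\hat G$-invariant}. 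I would then run this criterion against the short list of $(G,\hat G)$ pairs surviving the Riemann--Hurwitz analysis, which should isolate exactly the two generic families in the statement.

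In even characteristic I expect wild ramification, with the surviving cover totally ramified over a single rational point; after normalization this makes $f$ an additive ($\F_2$-linear) polynomial, and one checks the normal form $f(X)=X^4+\alpha X^2+\beta X$. For this $f$ one has $f(x)+f(y)=f(x+y)$ and the factorization $f(s)=s\,(s^3+\alpha s+\beta)$, so injectivity holds exactly when $X^3+\alpha X+\beta$ has no root in $\F_q^{*}$, matching part~(1). In odd characteristic the ramification is tame, Riemann--Hurwitz forces the $A_4$/$S_4$ ramification type, and the classical dictionary between a quartic's Galois group and the factorization of its resolvent cubic identifies irreducibility of $X^3+\alpha X+\beta$ with the condition $A_4\le\hat G$. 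Producing the explicit invariant-theoretic normal form
\[
f(X)=\frac{X^4-2\alpha X^2-8\beta X+\alpha^2}{X^3+\alpha X+\beta}
\]
and verifying that it permutes $\bP^1(\F_q)$ precisely under that irreducibility hypothesis is the technical heart of the argument. Finally, since the Weil-bound reduction only controls large $q$, the small fields $q\le 8$ must be handled by a direct finite search, and this is what yields the sporadic entries recorded in Table~1, giving part~(2).

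The hard part will be the odd-characteristic case: connecting the abstract monodromy data ($A_4$ or $S_4$ with irreducible resolvent cubic) to the explicit rational function above, proving uniformly that ``$f$ permutes'' is equivalent to ``$X^3+\alpha X+\beta$ is irreducible,'' and simultaneously ruling out the intermediate groups $C_4,V_4,D_4$ as genuine large-$q$ possibilities. I would expect the delicate bookkeeping of wild-versus-tame inertia contributions in Riemann--Hurwitz, and the explicit construction of the invariant normal form, to absorb most of the effort, with the Weil estimate and the small-field enumeration playing the role of clean but essential auxiliary inputs.
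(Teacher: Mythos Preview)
The paper does not prove this lemma at all: it is quoted verbatim as \cite[Theorem 1.4]{Ding2020} (with a cross-reference to \cite{HouD4}) and used as a black-box input to the constructions in Section~\ref{S4}. There is therefore no ``paper's own proof'' to compare against; the authors treat the degree-four classification as a known result from the literature.

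Your sketch is a reasonable outline of the strategy that Ding--Zieve and Hou actually follow in the cited references: reduce to exceptionality via the Weil bound for large $q$, analyze the possible monodromy pairs $(G,\hat G)$ inside $S_4$ subject to Riemann--Hurwitz, identify the additive normal form in even characteristic and the resolvent-cubic family in odd characteristic, and mop up small $q$ by direct search. So as a description of how the external theorem is proved, it is on the right track. But since the present paper simply imports the result, the appropriate ``proof'' here is a one-line citation, not an independent argument.
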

Zieve in~\cite{Zieve2013} introduced a generic construction of permutation polynomials in which the induced function on $\mu_k$ was represented by R\' edei functions, namely, rational functions over a field which are conjugate to $X^n$ over an extension field. The main ingredient of this construction are the following lemmas that provide necessary and sufficient condition for a degree-one rational function to be a bijection from $\mu_{q+1}$ to $\bP^1(\F_q)$.
\begin{lem}\label{L21}
 Let $\rho(X) \in \F_{q^2}(X)$ be a degree-one rational function. Then $\rho(X)$ induces a bijection from $
\mu_{q+1}$ to $\bP^1(\F_q)$ if and only if $\displaystyle \rho(X)= \frac{\delta X -\beta \delta^q}{X-\beta}$ with $\beta \in \mu_{q+1}$ and $\delta \in \F_{q^2}\backslash \F_q$.
\end{lem}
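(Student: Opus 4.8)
The plan is to exploit two structural facts: both $\mu_{q+1}$ and $\bP^1(\F_q)$ have exactly $q+1$ elements, and a degree-one $\rho$ is a M\"obius transformation, hence a bijection of all of $\bP^1(\F_{q^2})$. The guiding idea is to realize the domain and codomain as fixed-point sets of natural involutions of $\bP^1(\F_{q^2})$. Writing $\sigma(X)=X^q$ for the Frobenius map and $\tau(X)=X^{-q}$ for the ``unit-circle'' involution, one checks $\bP^1(\F_q)=\mathrm{Fix}(\sigma)$ and $\mu_{q+1}=\mathrm{Fix}(\tau)$, using $x\in\mu_{q+1}\Leftrightarrow x^{q+1}=1\Leftrightarrow x^q=x^{-1}$. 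Both $\sigma$ and $\tau$ are $q$-semilinear (Frobenius-twisted) transformations of $\bP^1(\F_{q^2})$.

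For sufficiency, I would verify directly that $\rho(X)=(\delta X-\beta\delta^q)/(X-\beta)$ carries $\mu_{q+1}$ into $\bP^1(\F_q)$. First note nondegeneracy: the determinant is $\delta(-\beta)-(-\beta\delta^q)=\beta(\delta^q-\delta)\ne 0$ precisely because $\delta\notin\F_q$ and $\beta\ne 0$, so $\rho$ is genuinely degree one. Then for $x\in\mu_{q+1}$ I would compute $\rho(x)^q$, substitute $x^q=x^{-1}$, $\beta^q=\beta^{-1}$ and $\delta^{q^2}=\delta$, and clear denominators; a short manipulation returns $\rho(x)$ itself, so $\rho(x)^q=\rho(x)$, i.e.\ $\rho(x)\in\F_q\cup\{\infty\}$ (the value $\infty$ occurring exactly at $x=\beta\in\mu_{q+1}$). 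Since $\rho$ is injective on $\bP^1(\F_{q^2})$ and $\rho(\mu_{q+1})\subseteq\bP^1(\F_q)$ with both sets of size $q+1$, the induced map is a bijection.

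For necessity, suppose $\rho$ maps $\mu_{q+1}$ bijectively onto $\bP^1(\F_q)$. The crux is the intertwining relation $\rho\circ\tau=\sigma\circ\rho$. To obtain it, I would consider $g:=\sigma\circ\rho\circ\tau\circ\rho^{-1}$: since $\sigma$ and $\tau$ are each $q$-semilinear while $\rho,\rho^{-1}$ are linear, $g$ is an ordinary (linear) M\"obius transformation. For $y=\rho(x)\in\bP^1(\F_q)$ with $x\in\mu_{q+1}$ one has $\rho\tau\rho^{-1}(y)=\rho(\tau(x))=\rho(x)=y$ because $\tau$ fixes $x$, and then $\sigma(y)=y$ because $y\in\F_q\cup\{\infty\}$; hence $g$ fixes all $q+1\ge 3$ points of $\bP^1(\F_q)$ and is the identity, giving $\rho\tau=\sigma\rho$.

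Writing $\rho(X)=(aX+b)/(cX+d)$, the identity $\rho\tau=\sigma\rho$ becomes an equality of two degree-one rational functions in the variable $X^q$ (which ranges over all of $\F_{q^2}$), so their coefficient matrices must be proportional via some scalar $\lambda$, yielding $a^q=\lambda b$, $b^q=\lambda a$, $c^q=\lambda d$, $d^q=\lambda c$. I expect the main work to lie here. One first shows $c\ne 0$ (otherwise $d=0$ as well, contradicting nondegeneracy), normalizes $c=1$, and then deduces $\lambda=d^{-1}$, $d\in\mu_{q+1}$, and $b=da^q$; setting $\beta=-d\in\mu_{q+1}$ and $\delta=a$ recovers exactly $\rho(X)=(\delta X-\beta\delta^q)/(X-\beta)$, with $\delta\notin\F_q$ forced once more by nondegeneracy. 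The main obstacle is thus establishing the intertwining relation cleanly—carefully tracking how the two Frobenius twists compose so that $g$ is genuinely linear—and then extracting the precise parametrization from the coefficient equations; the remaining steps are routine verification.
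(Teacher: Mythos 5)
The paper does not prove this lemma at all --- it is stated as an ingredient imported from Zieve's framework \cite{Zieve2013} --- so there is no in-house argument to compare against; judged on its own, your proof is correct and essentially complete. Sufficiency is a direct verification: the determinant $\beta(\delta^q-\delta)\neq 0$, the computation $\rho(x)^q=\rho(x)$ using $x^q=x^{-1}$, $\beta^q=\beta^{-1}$, $\delta^{q^2}=\delta$, and the count $|\mu_{q+1}|=|\bP^1(\F_q)|=q+1$ together with injectivity of a M\"obius map do the job. For necessity, your intertwining argument is sound: $g=\sigma\circ\rho\circ\tau\circ\rho^{-1}$ is a genuine element of $\mathrm{PGL}_2(\F_{q^2})$ because the two $q$-power twists compose to the $q^2$-Frobenius, which is trivial on $\F_{q^2}$ (explicitly $g=\rho^{(q)}\circ(X\mapsto 1/X)\circ\rho^{-1}$, where $\rho^{(q)}$ has coefficients raised to the $q$-th power), and since it fixes the $q+1\geq 3$ points of $\bP^1(\F_q)$ it is the identity, giving $\rho\circ\tau=\sigma\circ\rho$. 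The coefficient extraction then goes through exactly as you indicate: the two sides are degree-one rational functions in $Y=X^q$ agreeing at all $q^2+1$ points of $\bP^1(\F_{q^2})$, so the matrices are proportional; $c=0$ would force $d=0$ against nondegeneracy; normalizing $c=1$ yields $\lambda=d^{-1}$, $d^{q+1}=1$, $b=da^q$, and the determinant $d(a-a^q)\neq 0$ forces $\delta=a\notin\F_q$, recovering the stated form with $\beta=-d$ (note $-d\in\mu_{q+1}$ in every characteristic). In a written-up version you should simply carry out the two computations you flag as ``I would compute''; both are routine and come out as you predict.
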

\begin{lem}\label{L22}
 Let $\nu(X) \in \F_{q^2}(X)$ be a degree-one rational function. Then $\nu(X)$ induces a bijection from $\bP^1(\F_q)$ to $\mu_{q+1}$ if and only if $\displaystyle \nu(X)= \tilde{\beta}\frac{X -\tilde{\delta^q}}{X-\tilde{\delta}}$ with $\tilde{\beta} \in \mu_{q+1}$ and $\tilde{\delta} \in \F_{q^2}\backslash \F_q$.
\end{lem}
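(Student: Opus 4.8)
The plan is to derive Lemma~\ref{L22} from Lemma~\ref{L21} by passing to compositional inverses, exploiting the fact that every nondegenerate degree-one rational function is a bijection of $\bP^1(\F_{q^2})$ whose inverse is again a degree-one rational function over $\F_{q^2}$. The key observation is that $\nu$ induces a bijection $\bP^1(\F_q)\to\mu_{q+1}$ if and only if its inverse $\nu^{-1}\in\F_{q^2}(X)$ induces a bijection $\mu_{q+1}\to\bP^1(\F_q)$. Since $\nu^{-1}$ is itself a degree-one rational function, Lemma~\ref{L21} applies to it verbatim, and the whole statement reduces to an elementary inversion of a M\"obius transformation together with a matching of parameters.

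For the forward direction, assuming $\nu$ is the asserted bijection, I would apply Lemma~\ref{L21} to $\nu^{-1}$ to conclude $\nu^{-1}(X)=(\delta X-\beta\delta^q)/(X-\beta)$ for some $\beta\in\mu_{q+1}$ and $\delta\in\F_{q^2}\setminus\F_q$. Inverting this transformation, namely clearing denominators, grouping the $X$-terms in the relation $X(\,\cdot\,-\delta)=\beta(\,\cdot\,-\delta^q)$, and isolating $X$, returns $\nu(X)=\beta(X-\delta^q)/(X-\delta)$. This is precisely the claimed normal form with $\tilde\beta=\beta$ and $\tilde\delta=\delta$.

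For the converse, I would run the same elementary computation in reverse. Starting from $\nu(X)=\tilde\beta(X-\tilde\delta^q)/(X-\tilde\delta)$ with $\tilde\beta\in\mu_{q+1}$ and $\tilde\delta\in\F_{q^2}\setminus\F_q$, solving for the inverse yields $\nu^{-1}(X)=(\tilde\delta X-\tilde\beta\tilde\delta^q)/(X-\tilde\beta)$, which is exactly the shape characterized in Lemma~\ref{L21} under the identification $\delta=\tilde\delta$, $\beta=\tilde\beta$. Hence $\nu^{-1}$ maps $\mu_{q+1}$ bijectively onto $\bP^1(\F_q)$, and therefore $\nu$ maps $\bP^1(\F_q)$ bijectively onto $\mu_{q+1}$.

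There is no genuine obstacle here; the only points requiring attention are bookkeeping. One should note that $\tilde\delta\in\F_{q^2}\setminus\F_q$ guarantees $\tilde\delta^q\neq\tilde\delta$, so that $\nu$ is nondegenerate and genuinely of degree one, and that the parameter constraints $\beta\in\mu_{q+1}$, $\delta\notin\F_q$ transfer to $\tilde\beta,\tilde\delta$ unchanged under inversion. Because the correspondence between $\nu^{-1}$ and the form in Lemma~\ref{L21} is an exact identification of the four M\"obius coefficients rather than a transformation of them, no auxiliary conditions are introduced and the argument closes immediately.
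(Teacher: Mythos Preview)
Your proposal is correct. The paper, however, does not actually prove Lemma~\ref{L22}: both Lemma~\ref{L21} and Lemma~\ref{L22} are stated as known facts attributed to Zieve~\cite{Zieve2013} and are quoted without proof in the preliminaries. Your derivation of Lemma~\ref{L22} from Lemma~\ref{L21} by inverting the M\"obius transformation is entirely valid and is precisely the natural way to see that the two statements are equivalent; the explicit inversion $\nu^{-1}(X)=(\tilde\delta X-\tilde\beta\tilde\delta^q)/(X-\tilde\beta)$ matches the form in Lemma~\ref{L21} on the nose, so nothing further is needed.
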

We shall now describe our strategy to construct new permutation polynomials over $\F_{q^2}$ from the permutation rational functions over $\F_q$. Firstly, by using bijective maps $\rho$ and $\nu$ as discussed in Lemma~\ref{L21} and Lemma~\ref{L22}, respectively, and also by making use of permutation rational function $f$, we shall explicitly determine the expression of $\nu \circ f \circ \rho$ which is a bijection on $\mu_{q+1}$ as shown in the following diagram.

$$\begin{CD}
\bP^1(\F_q) @>f>> \bP^1(\F_q)\\
 @A\rho AA   @VV \nu V\\
 \mu_{q+1}  @>\nu \circ f \circ
 \rho>>  \mu_{q+1}
\end{CD}$$

Secondly, we shall use such bijective maps $\nu \circ f \circ \rho$  to construct new permutation polynomials over $\F_{q^2}$ by using the following lemma that demonstrates a general method of producing permutation polynomials. This lemma was studied in various forms; see, for example, Wan and Lidl ~\cite{WL}, Park and Lee ~\cite{Park_2001}, Akbary and Wang ~\cite{AW}, Wang ~\cite{Wang_2007}, and Zieve ~\cite{Zieve2008}. 
In order to utilize Lemma  ~\ref{L23}, our construction strategy requires $\nu \circ f \circ \rho$  to be of the form $X^rh(X)^s$, which will be investigated for the employed low-degree functions.
\begin{lem}\label{L23}
 Let $h(X) \in \F_q[X]$ and $r,s$ are positive integers such that $s \mid (q-1)$. Then the function $f(X)=X^rh(X^s)$ permutes $\F_q$ if and only if both conditions
 \begin{enumerate}
  \item $\gcd(r,s)=1$ and
  \item $X^rh(X)^s$ permutes $\mu_{\frac{q-1}{s}}$,
 \end{enumerate}
are satisfied, where $\mu_{\frac{q-1}{s}}$ is the set of $\frac{q-1}{s}$-th roots of unity in the algebraic closure $\overline \F_q$ of $\F_q$.
\end{lem}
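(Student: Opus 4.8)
The plan is to reduce the permutation property of $f$ on $\F_q$ to a statement about the $s$-th power map, exploiting that $\mu_{(q-1)/s}$ is precisely the image of $\F_q^*$ under $x\mapsto x^s$. Since $r\ge 1$ forces $f(0)=0$, the map $f$ permutes $\F_q$ if and only if it restricts to a bijection of $\F_q^*$, so I would work entirely on $\F_q^*$. Write $d=(q-1)/s$ and $g(X)=X^rh(X)^s$, and let $\psi:\F_q^*\to\mu_d$ be the surjection $x\mapsto x^s$. The key identity is $f(x)^s=(x^s)^r h(x^s)^s=g(x^s)$, which says that $\psi\circ f=g\circ\psi$, so $f$ covers $g$ through the power map. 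Here I would record that each fiber $\psi^{-1}(\omega)=\{x: x^s=\omega\}$ is a coset $x_0\mu_s$ of the group $\mu_s$ of $s$-th roots of unity, that $\mu_s\subseteq\F_q^*$ because $s\mid(q-1)$, and that every fiber has exactly $s$ elements.

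The heart of the argument is the behaviour of $f$ along one fiber. For $x=x_0\zeta$ with $\zeta\in\mu_s$ and $\omega=x_0^s$, one computes $f(x_0\zeta)=(x_0\zeta)^r h\bigl((x_0\zeta)^s\bigr)=x_0^r h(\omega)\,\zeta^r$, so on the fiber over $\omega$ the map $f$ is $\zeta\mapsto c_\omega\zeta^r$ with constant $c_\omega=x_0^r h(\omega)$. When $c_\omega\ne 0$ this is injective on $\mu_s$ exactly when $\zeta\mapsto\zeta^r$ is injective on $\mu_s$, i.e.\ exactly when $\gcd(r,s)=1$, in which case it maps the fiber bijectively onto the full coset $c_\omega\mu_s=f(x_0)\mu_s=\psi^{-1}\bigl(g(\omega)\bigr)$. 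This is the single computation on which both directions turn.

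For the converse direction (assuming (1) and (2)) I would note that $g$ permuting $\mu_d$ forces $g(\mu_d)=\mu_d\not\ni 0$, hence $h$ is nonzero on $\mu_d$ and $f$ sends $\F_q^*$ into $\F_q^*$; then $f(x)=f(y)$ gives $g(x^s)=g(y^s)$, so $x^s=y^s$ by injectivity of $g$, whence $h(x^s)=h(y^s)\ne 0$ and $x^r=y^r$, and $(x/y)^{\gcd(r,s)}=x/y=1$ yields $x=y$. For the forward direction I would first exclude $\gcd(r,s)=e>1$: any $\zeta$ of order $e$ in $\F_q^*$ satisfies $\zeta^r=\zeta^s=1$, giving $f(\zeta x)=f(x)$ with $\zeta x\ne x$, a contradiction. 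The remaining and more delicate point, which I expect to be the main obstacle, is deducing that $g$ permutes $\mu_d$: knowing $\gcd(r,s)=1$, the fiber computation shows $f$ carries the fiber over $\omega$ bijectively onto the fiber over $g(\omega)$, so $f$ permutes $\F_q^*$ (a disjoint union of these $s$-element fibers) if and only if the induced map on the index set $\mu_d$, which is exactly $g$, is a bijection. Assembling the fiberwise bijection with this bookkeeping on the $d$ cosets gives the AGW-type dichotomy and completes both implications.
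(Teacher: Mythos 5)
Your proof is correct. Note that the paper itself does not prove Lemma~\ref{L23}: it cites it as a known criterion (Wan--Lidl, Park--Lee, Akbary--Wang, Wang, Zieve) and remarks that it is a special case of the multiplicative AGW criterion, so there is no in-paper argument to compare against. Your fiberwise argument --- the commuting relation $f(x)^s=g(x^s)$, the computation $f(x_0\zeta)=c_\omega\zeta^r$ on each coset $x_0\mu_s$, injectivity on a fiber $\Leftrightarrow\gcd(r,s)=1$, and the counting step showing the induced map on $\mu_{(q-1)/s}$ must be $g$ --- is exactly the standard proof found in those references, and every step checks out, including the two points most often glossed over (that $f(x)\neq 0$ on $\F_q^*$ follows from $0\notin g(\mu_{(q-1)/s})$ in one direction and from $f(0)=0$ plus bijectivity in the other, and that an element of exact order $\gcd(r,s)$ exists in $\F_q^*$ because $\gcd(r,s)\mid s\mid q-1$).
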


It is worthy to mention that the above stated lemma is a special case of multiplicative version of AGW criterion~\cite{AGW}.

Recently, Wu, Yuan, Ding and Ma~\cite{wu2017} introduced the concept of quasi-multiplicative equivalence (QM-equivalence) between the permutation polynomials. This equivalence preserves both the bijection property and number of terms of a polynomial over $\F_q$, and 
has recently gained increasing attention in the literature. We also discuss the QM-equivalence of permutation polynomials proposed in this paper to the known ones in the literature.
\begin{defn}\label{QM-equivalence}
Two permutation polynomials $f(X)$ and $g(X)$ in $\F_q [X]$ are called quasi-multiplicative equivalent if there exists an integer $1 \leq  d <  q-1$ with
$\gcd(d, q-1) = 1$ and $f(X) = ug(vX^d)$, where $u, v \in \F_{q}^{*}$.
\end{defn}

\begin{rmk}\label{R1} Notice that two equivalent permutation rational functions $f$ and $g$ from $\mathbb{P}^1(\F_q)$ to itself shall produce  same classes of PPs over $\F_{q^2}$. Thus, it is sufficient to consider a representative from each permutation class of rational functions.
\end{rmk}

With the complete classification of permutation rational functions over $\F_q$ in Lemmas ~\ref{L20}-\ref{L22}, we will propose several classes of permutations with at most five terms in the subsequent sections.
As binomials obtained from degree-one and degree-two are linearized and well studied in the literature, we will mainly focus on degree-three and degree-four rational functions.

\section{Permutation polynomials from degree-three rational functions} \label{S3}
In this section, we will consider degree-three permutation rational functions permuting $\bP^1(\F_q)$, which will give rise to six classes of permutation quadrinomials over $\F_{q^2}$.
In view of Lemma~\ref{T20}, we split our analysis in the following  three cases.
\subsection*{Case 1:} \label{C41} Assume $f(X)\in \F_q(X)$ permutes $\bP^1(\F_q)$, where $q\equiv2\pmod3$. In this case, we shall construct permutation quadrinomials over $\F_{q^2}$ arising from degree-three PRs of $\bP^1(\F_q)$.
From Lemma~\ref{T20}, notice that, if $q\equiv 2 \pmod 3$ then any degree-three PR of $\bP^1(\F_q)$ will be equivalent to $X^3$. Let $\rho : \mu_{q+1} \rightarrow \bP^1(\F_q)$ and ${\nu} : \bP^1(\F_q) \rightarrow \mu_{q+1}$ are bijective degree-one rational functions. Then from Lemma~\ref{L21} and Lemma ~\ref{L22}, we know that
 \[
  \rho(X) := \frac{\delta X -\beta \delta^q}{X-\beta},
 ~\mbox{and}~
  {\nu}(X) := {\tilde \beta} \left(\frac{X -{\tilde \delta}^q}{X- {\tilde \delta}}\right),
 \]
 for some $\beta, {\tilde \beta} \in \mu_{q+1}$ and $\delta, {\tilde \delta} \in \F_{q^2} \backslash \F_q$. It is straightforward to see that the rational function ${\nu} \circ X^3 \circ \rho : \mu_{q+1} \rightarrow \mu_{q+1}$ is a bijection of $\mu_{q+1}$ and is given by
 \begin{equation*}
  \begin{split}
   & {\nu} \circ \left(\frac{\delta X -\beta \delta^q}{X-\beta} \right)^3 \\
   &= {\tilde \beta} \left(\frac{(\delta X -\beta \delta^q)^3 -{\tilde \delta}^q (X -\beta)^3}{(\delta X -\beta \delta^q)^3- {\tilde \delta}(X -\beta)^3}\right)\\
   &= {\tilde \beta} \left(\frac{(\delta^3-{\tilde \delta}^q) X^3  -3\beta (\delta^{q+2}-{\tilde \delta}^q)X^2+3\beta^2(\delta^{2q+1}-{\tilde \delta}^q)X -\beta^3 (\delta^{3q}-{\tilde \delta}^q)}{(\delta^3-{\tilde \delta}) X^3  -3\beta (\delta^{q+2}-{\tilde \delta})X^2 +3\beta^2(\delta^{2q+1} -{\tilde \delta})X -\beta^3 (\delta^{3q}-{\tilde \delta})}\right)\\
   &= \frac{N_3X^3+N_2X^2+N_1X+N_0}{D_3X^3+D_2X^2+D_1X+D_0},
  \end{split}
 \end{equation*}
 where
 \begin{equation}\label{C05}
  \begin{cases}
   N_0 &= -{\tilde \beta}\beta^3 (\delta^{3q}-{\tilde \delta}^q),\\
   N_1 &= 3{\tilde \beta}\beta^2(\delta^{2q+1}-{\tilde \delta}^q),\\
   N_2 &= -3{\tilde \beta}\beta (\delta^{q+2}-{\tilde \delta}^q),\\
   N_3 &= {\tilde \beta}(\delta^3-{\tilde \delta}^q),
  \end{cases}
  ~~~\mbox{and}~~~
  \begin{cases}
   D_0 &= -\beta^3 (\delta^{3q}-{\tilde \delta}),\\
   D_1 &= 3\beta^2(\delta^{2q+1}-{\tilde \delta}),\\
   D_2 &= -3\beta (\delta^{q+2}-{\tilde \delta}),\\
   D_3 &= (\delta^3-{\tilde \delta}).
  \end{cases}
 \end{equation}

 \begin{rmk}\label{R05}
  Since $ {\nu} \circ X^3 \circ \rho: \mu_{q+1} \rightarrow \mu_{q+1}$ is a permutation of $\mu_{q+1}$, the equation $$h(X):= D_3X^3+D_2X^2+D_1X+D_0 = 0$$
  has no solution $X \in \mu_{q+1}$.
 \end{rmk}

 We shall now use Remark~\ref{R05} to prove the following lemma which will be used to construct PPs over $\F_{q^2}$.

\begin{lem} \label{L05}
 Let
\begin{equation*}
 \begin{split}
  h(X)&=D_3X^3 +D_2X^2+D_1X+D_0,\\
  h_1(X) &= D_0X^q+D_3X^2 +D_2X+D_1,\\
  h_2(X) &=D_0X^{2q}+D_1X^q+D_3X+D_2,\\
  h_3(X) &=D_0X^{3q}+D_1X^{2q}+D_2X^q+D_3,
 \end{split}
\end{equation*}
where $D_0,D_1, D_2, D_3$ are as defined in the Equation~\eqref{C05}. Then the following statements are equivalent.
\begin{enumerate}[(a)]
 \item $h(X)=0$ has no solution in $\mu_{q+1}$,  \item $h_1(X)=0$ has no solution in $\mu_{q+1}$,  \item $h_2(X)=0$ has no solution in $\mu_{q+1}$,  \item $h_3(X)=0$ has no solution in $\mu_{q+1}$.
\end{enumerate}
\end{lem}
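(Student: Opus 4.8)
The plan is to exploit the defining property of the unit circle $\mu_{q+1}$: every $X \in \mu_{q+1}$ satisfies $X^{q+1} = 1$, so in particular $X \neq 0$ and $X^q = X^{-1}$ (hence also $X^{2q} = X^{-2}$ and $X^{3q} = X^{-3}$). The four polynomials $h, h_1, h_2, h_3$ share the same coefficients $D_0, D_1, D_2, D_3$ and differ only in which terms carry a Frobenius power $X^{kq}$, which becomes $X^{-k}$ once we restrict to $\mu_{q+1}$. I therefore expect each $h_i$ to coincide with $h$ up to a nonzero factor $X^i$ on $\mu_{q+1}$, and since $X \neq 0$ on $\mu_{q+1}$ such a rescaling cannot alter the zero set. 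This reduces the whole statement to three elementary identities.

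Concretely, I would first substitute $X^q = X^{-1}$ into the definition of each $h_i$. For instance $h_1(X)$ becomes $D_0 X^{-1} + D_3 X^2 + D_2 X + D_1$, and multiplying through by $X$ gives $X\,h_1(X) = D_3 X^3 + D_2 X^2 + D_1 X + D_0 = h(X)$ for all $X \in \mu_{q+1}$. The identical bookkeeping yields $X^2\,h_2(X) = h(X)$ and $X^3\,h_3(X) = h(X)$ on $\mu_{q+1}$; in each case one simply rewrites the Frobenius powers as negative powers of $X$ and clears denominators, which is a one-line rearrangement. These are the only computations the proof needs, and they are routine.

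With the three identities $X\,h_1(X) = X^2\,h_2(X) = X^3\,h_3(X) = h(X)$ established on $\mu_{q+1}$, the equivalence follows at once: since $X^i \neq 0$ for $X \in \mu_{q+1}$, we have $h_i(X) = 0$ if and only if $h(X) = 0$ for each $i \in \{1,2,3\}$. Thus each of (b), (c), (d) is equivalent to (a), which closes the chain. I do not anticipate a genuine obstacle here; the only points requiring care are to route all three equivalences through statement (a) rather than attempting direct implications between (b), (c), (d), and to record explicitly that $0 \notin \mu_{q+1}$, so that multiplication by the factor $X^i$ is invertible and the reduction is reversible.
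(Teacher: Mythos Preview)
Your proposal is correct and is essentially the same argument as the paper's: the paper also establishes the identity $h(\alpha)=\alpha\,h_1(\alpha)$ (equivalently $h_1(\alpha)=\alpha^q h(\alpha)$) for $\alpha\in\mu_{q+1}$ and uses the nonvanishing of $\alpha$ to transfer zeros, then remarks that the remaining equivalences follow by the same computation. Your presentation is slightly cleaner in that you write all three identities $X^i h_i(X)=h(X)$ at once and route (b)--(d) through (a), but the substance is identical.
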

\begin{proof} We shall include here the proof only for the first equivalence. Let $h(X)=0$ has no solution in $\mu_{q+1}$. On the contrary, assume that $h_1(X)=0$ has a solution $\alpha \in \mu_{q+1}$, i.e., $h_1(\alpha)=0$. Then $h_1(\alpha)=\alpha^qh(\alpha)=0$ implies that $h(\alpha)=0$, a contradiction. Thus, $h(X)=0$ has no solution in $\mu_{q+1}$ implies that  $h_1(X)=0$ has no solution in $\mu_{q+1}$. Conversely, let $h_1(X)=0$ has no solution in $\mu_{q+1}$. On the contrary, assume that $h(X)=0$ has a solution $\alpha \in \mu_{q+1}$, i.e., $h(\alpha)=0$. Then $h(\alpha)=\alpha h_1(\alpha)=0$ implies that $h_1(\alpha)=0$, a contradiction. Thus, $h_1(X)=0$ has no solution in $\mu_{q+1}$  implies that  $h(X)=0$ has no solution in $\mu_{q+1}$. Hence statements (a) and (b) are equivalent. Similar arguments can be used to prove the remaining equivalence.
\end{proof}

We now can use the rational function ${\nu} \circ X^3 \circ \rho$ to construct permutation quadrinomials over $\F_{q^2}$.

\begin{thm}\label{T05}
Let $q\equiv 2 \pmod 3$ and $h(X)=D_3X^3+D_2X^2 +D_1X+D_0$, where $D_0, D_1, D_2,$ $D_3$ are as given in Equation~\eqref{C05},
and $\beta, {\tilde \beta}$ be elements in $\mu_{q+1}$ satisfying $1+{\tilde \beta}\beta^3=0$. Moreover, $\delta$ and $\tilde \delta$ in the expressions of $D_i$'s, $0 \leq i \leq 3$ are such that $\delta, \tilde \delta \in \F_{q^2}\setminus\F_{q}$ and $\tilde \delta \not \in \{ \delta^3, \delta^{q+2}, \delta^{2q+1}, \delta^{3q}\}$. Then 
$$
f(X)=X^3h(X^{q-1}) = D_3X^{3q}+D_2X^{2q+1} +D_1X^{q+2}+D_0X^3
$$ is a PP over $\F_{q^2}$.

\end{thm}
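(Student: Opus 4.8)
The plan is to invoke the criterion of Lemma~\ref{L23} with $s=q-1$, so that $\frac{q^2-1}{s}=q+1$ and the relevant cyclic group is exactly $\mu_{q+1}$. Concretely, to show $f(X)=X^3h(X^{q-1})$ permutes $\F_{q^2}$ I would verify the two conditions: first, that $\gcd(3,q-1)=1$, and second, that $g(X):=X^3h(X)^{q-1}$ permutes $\mu_{q+1}$. For the gcd condition, since $q\equiv 2\pmod 3$ we have $q-1\equiv 1\pmod 3$, so $3\nmid(q-1)$ and $\gcd(3,q-1)=1$ holds automatically. The substance of the proof is therefore the permutation claim on $\mu_{q+1}$.

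The key observation is that $g(X)=X^3 h(X)^{q-1}$ should coincide (on $\mu_{q+1}$) with the bijection $\nu\circ X^3\circ\rho$ already computed in the text. For $X\in\mu_{q+1}$ one has $X^q=X^{-1}$, so $h(X)^q$ can be rewritten by applying Frobenius to the coefficients and replacing each $X^i$ by $X^{-i}$; multiplying out, $h(X)^q$ becomes (up to an $X$-power) the reciprocal-type polynomial whose coefficients are the conjugates $D_i^q$. The first step is thus to check, using the explicit formulas~\eqref{C05} together with the normalizing hypothesis $1+\tilde\beta\beta^3=0$, that the numerator $N_3X^3+N_2X^2+N_1X+N_0$ of $\nu\circ X^3\circ\rho$ equals $X^3\cdot\overline{h}(X)$ where $\overline{h}$ is the Frobenius conjugate of $h$; the condition $1+\tilde\beta\beta^3=0$ is precisely what forces $N_i=D_i^q$ for each $i$, so that the numerator is $X^3 h(X)^q$ and the whole fraction reduces to $X^3 h(X)^{q-1}=g(X)$ on $\mu_{q+1}$. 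I expect this coefficient-matching computation to be the main obstacle: one must carefully track the action of Frobenius on $\delta,\tilde\delta,\beta,\tilde\beta$, use $\beta^q=\beta^{-1}$ and $\tilde\beta^q=\tilde\beta^{-1}$, and confirm that the relation $1+\tilde\beta\beta^3=0$ makes the numerator coefficients the exact conjugates of the denominator coefficients.

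Once this identification is established, the permutation property is immediate: $\nu\circ X^3\circ\rho$ is a composition of the bijection $\rho:\mu_{q+1}\to\bP^1(\F_q)$, the permutation $X^3$ of $\bP^1(\F_q)$ (a degree-three PR since $q\equiv 2\pmod 3$, by Lemma~\ref{T20}), and the bijection $\nu:\bP^1(\F_q)\to\mu_{q+1}$, hence it permutes $\mu_{q+1}$. One subtlety to address is well-definedness: $g(X)=X^3h(X)^{q-1}$ as written is a rational expression, and for it to agree with the polynomial fraction $\frac{N_3X^3+\cdots}{D_3X^3+\cdots}$ one needs the denominator $h(X)=D_3X^3+D_2X^2+D_1X+D_0$ to be nonzero on $\mu_{q+1}$. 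This is guaranteed by Remark~\ref{R05} together with Lemma~\ref{L05}, and the conditions $\tilde\delta\notin\{\delta^3,\delta^{q+2},\delta^{2q+1},\delta^{3q}\}$ ensure the $D_i$ do not all vanish so that $h$ genuinely has degree three and the fraction is the reduced form of $\nu\circ X^3\circ\rho$. Finally, I would record the explicit four-term expansion $X^3h(X^{q-1})=D_3X^{3q}+D_2X^{2q+1}+D_1X^{q+2}+D_0X^3$, which is a direct substitution, to confirm that $f$ is indeed a quadrinomial over $\F_{q^2}$.
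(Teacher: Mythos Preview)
Your proposal is correct and follows exactly the paper's proof: apply Lemma~\ref{L23}, dispose of $\gcd(3,q-1)=1$ via $q\equiv2\pmod3$, invoke Remark~\ref{R05} so that $h$ has no zero on $\mu_{q+1}$, rewrite $X^3h(X)^{q-1}$ on $\mu_{q+1}$ as a degree-three rational function, and check that the hypothesis $1+\tilde\beta\beta^3=0$ forces this rational function to coincide with $\nu\circ X^3\circ\rho$. One small correction to your write-up: because Frobenius acts as $X\mapsto X^{-1}$ on $\mu_{q+1}$, the coefficient identities you must verify are $N_3=D_0^{\,q}$, $N_2=D_1^{\,q}$, $N_1=D_2^{\,q}$, $N_0=D_3^{\,q}$ (indices reversed), not $N_i=D_i^{\,q}$; the paper records exactly this system and observes that each equation factors through $1+\tilde\beta\beta^3=0$.
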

\begin{proof}
From Lemma~\ref{L23}, we know that $f$ is a permutation of $\F_{q^2}$ if and only if $\gcd(3,q-1)=1$ and $g(X):=X^3h(X)^{q-1}$ permutes $\mu_{q+1}$. Since $q\equiv 2 \pmod 3$, the first condition holds trivially. Therefore it is sufficient to show that if $1+{\tilde \beta}\beta^3=0$ then $g$ permutes $\mu_{q+1}$. From Remark~\ref{R05}, we know that $h(X)=0$ has no solution in $\mu_{q+1}$. Now for any $\alpha \in \mu_{q+1}$, consider
 \[
  g(\alpha)= \alpha^3h(\alpha)^{q-1}=\alpha^3\frac{h(\alpha)^q}{h(\alpha)}= \frac{D_0^q\alpha^3+D_1^q\alpha^2 +D_2^q\alpha+D_3^q}{D_3\alpha^3+D_2\alpha^2 +D_1\alpha+D_0}.
 \]
Let $\displaystyle {\tilde g}(X):= \frac{D_0^qX^3+D_1^qX^2 +D_2^qX+D_3^q}{D_3X^3+D_2X^2 +D_1X+D_0} \in \F_{q^2}(X)$. It is easy to observe that $g$ permutes $\mu_{q+1}$ if and only if the rational function ${\tilde g}(X)$ permutes $\mu_{q+1}$. Consider the following system of equations
\begin{equation} \label{SEE1}
  \begin{cases}
   N_3 &= D_0^q,\\
   N_2 &= D_1^q,\\
   N_1 &= D_2^q,\\
   N_0 &= D_3^q,
  \end{cases}
  \iff
  \begin{cases}
   (\delta^3-{\tilde \delta}^q)({\tilde \beta} +\beta^{3q})=0 ,\\
   (\delta^{q+2}-{\tilde \delta}^q)(3{\tilde \beta}\beta +3\beta^{2q})=0,\\
   (\delta^{2q+1}-{\tilde \delta}^q)(3{\tilde \beta}\beta^2+3\beta^q)=0,\\
   (\delta^{3q}-{\tilde \delta}^q)(1+{\tilde \beta}\beta^3)  = 0,
  \end{cases}
 \end{equation}
 where $N_0, N_1, N_2$ and $N_3$ are as given in Equation~\eqref{C05}. If all the four conditions in the system~\eqref{SEE1} are satisfied then ${\tilde g}={\nu} \circ X^3 \circ \rho$. It is easy to verify that if $1+{\tilde \beta}\beta^3=0$ then  all the four conditions of the system~\eqref{SEE1} are satisfied. Thus, if $1+{\tilde \beta}\beta^3=0$ then ${\tilde g}$ is a permutation of $\mu_{q+1}$ and consequently $f$ is a permutation of $\F_{q^2}$.
\end{proof}


 \begin{exmp}
 Let $q=5$, $g$ be a generator of the cyclic group $\F_{q^2}^{*}$, $\beta=-1$, $\tilde{\beta}=1$ and $\delta=g=\tilde{\delta}$. From Theorem ~\ref{T05}, we obtain a permutation quadrinomial $f(X)=3(g+1)X^{15}+3gX^{11}+(g+1)X^7+2X^3$ of $\F_{q^2}$. In addition,
 let $q=5^3$, $g$ be a primitive element of $\F_{q^2}$ satisfying $g^6 + g^4 + g^3 + g^2 + 2=0$, $(\beta, \tilde{\beta}) = (-1, 1)$ and $(\delta, \tilde{\delta}) = (g^{14078}, g^{6470})$.
 Then from Theorem ~\ref{T05}, we get a permutation quadrinomial $f(X)= g^{12017}X^{3q} + g^{9477}X^{2q+1} + g^{10055}X^{q+2} + g^{7976}X^3$ of $\F_{q^2}$.
 \end{exmp}
We now present three more classes of permutation quadrinomials over $\F_{q^2}$. 
\begin{thm}\label{T15}
 Let $q\equiv 2 \pmod 3$ and $D_0, D_1, D_2, D_3$ are as given in the Equation~\eqref{C05}. Moreover, $\delta$ and $\tilde \delta$ in the expressions of $D_i$'s, $0 \leq i \leq 3$ are such that $\delta, \tilde \delta \in \F_{q^2}\setminus\F_{q}$ and $\tilde \delta \not \in \{ \delta^3, \delta^{q+2}, \delta^{2q+1}, \delta^{3q}\}$. If $1+{\tilde \beta}\beta^3=0$ for some ${\tilde \beta}\in \mu_{q+1}$ then the following quadrinomials are permutation quadrinomials over $\F_{q^2}$
 \begin{enumerate}
 \item $Xh_1(X^{q-1})$, where $h_1(X)= D_0X^q+D_3X^2 +D_2X+D_1$,
 \item  $X^{q}h_2(X^{q-1})$, where $h_2(X)=D_0X^{2q}+D_1X^q+D_3X+D_2$,
 \item  $X^{q-2}h_3(X^{q-1})$, where $h_3(X)=D_0X^{3q}+D_1X^{2q}+D_2X^q+D_3$.
 \end{enumerate}
\end{thm}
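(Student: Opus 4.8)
The plan is to reduce each of the three parts to the already-proved Theorem~\ref{T05} by invoking Lemma~\ref{L05} together with the permutation criterion of Lemma~\ref{L23}. Consider part (1) first. By Lemma~\ref{L23}, the polynomial $Xh_1(X^{q-1})$ permutes $\F_{q^2}$ if and only if $\gcd(1,q-1)=1$ (which is automatic) and $g_1(X):=Xh_1(X)^{q-1}$ permutes $\mu_{q+1}$. So the entire content is the claim that $g_1$ permutes $\mu_{q+1}$ under the hypothesis $1+\tilde\beta\beta^3=0$.

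The key computational step is to evaluate $g_1$ on a point $\alpha\in\mu_{q+1}$ exactly as in the proof of Theorem~\ref{T05}. First, by Lemma~\ref{L05}, the hypothesis of Remark~\ref{R05} (namely $h(X)=0$ has no root in $\mu_{q+1}$, which holds because $\nu\circ X^3\circ\rho$ is a bijection) guarantees that $h_1(X)=0$ also has no root in $\mu_{q+1}$, so $g_1$ is well-defined on all of $\mu_{q+1}$. Then, writing $\alpha^q=\alpha^{-1}$ for $\alpha\in\mu_{q+1}$, I would compute
\[
g_1(\alpha)=\alpha\,\frac{h_1(\alpha)^q}{h_1(\alpha)}
=\frac{D_0^q\alpha^{-q+1}+D_3^q\alpha^{-1}+D_2^q+D_1^q\alpha}{D_0\alpha^q+D_3\alpha^2+D_2\alpha+D_1},
\]
and clear denominators using $\alpha^q=\alpha^{-1}$ to express $g_1(\alpha)$ as a ratio of the form $\tilde g_1(\alpha)=(N_0'\alpha^3+\cdots)/(D_0\alpha^3+\cdots)$ for suitable coefficients. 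The point is that this $\tilde g_1$ is again a degree-three rational function in $\F_{q^2}(X)$ whose numerator coefficients, under the system analogous to~\eqref{SEE1}, coincide with $\tilde\beta$-twisted $q$-powers of the denominator coefficients precisely when $1+\tilde\beta\beta^3=0$. Hence $\tilde g_1$ equals $\nu\circ X^3\circ\rho$ (up to the bijective relabeling forced by the shift from $h$ to $h_1$), which is a bijection of $\mu_{q+1}$ by construction; therefore $g_1$ permutes $\mu_{q+1}$ and part (1) follows. Parts (2) and (3) run identically, using $h_2$ and $h_3$ in place of $h_1$ and the exponents $q$ and $q-2$ respectively, with Lemma~\ref{L05} supplying the no-root condition and the same single hypothesis $1+\tilde\beta\beta^3=0$ forcing the coefficient-matching system.

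The main obstacle I anticipate is purely bookkeeping: verifying in each case that the shift of coefficients defining $h_i$ (cyclically permuting and $q$-powering the $D_j$) interacts correctly with the substitution $\alpha^q=\alpha^{-1}$ so that $g_i(\alpha)$ collapses to the same fractional form governed by the system~\eqref{SEE1}, and checking that the leading exponent $r\in\{1,q,q-2\}$ satisfies $\gcd(r,q-1)=1$. The latter is where a little care is needed: $\gcd(1,q-1)=1$ trivially, $\gcd(q,q-1)=1$ trivially, and $\gcd(q-2,q-1)=\gcd(q-2,1)=1$, so all three leading exponents are coprime to $q-1$ and Lemma~\ref{L23} applies in every case. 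Since the relation $h_i(\alpha)=\alpha^{c_i}h(\alpha)$ on $\mu_{q+1}$ (for appropriate $c_i$) already underlies Lemma~\ref{L05}, the reduction of each $g_i$ to $\tilde g$ is forced, and no genuinely new estimate is required beyond the one verification carried out in Theorem~\ref{T05}.
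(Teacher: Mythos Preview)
Your proposal is correct and follows essentially the same approach as the paper, which simply states that ``the proof follows along a similar line as in Theorem~\ref{T05}.'' Your additional remarks---verifying the $\gcd$ conditions for $r\in\{1,q,q-2\}$ and noting that the relation $h_i(\alpha)=\alpha^{c_i}h(\alpha)$ on $\mu_{q+1}$ forces $g_i$ to coincide with $\tilde g=\nu\circ X^3\circ\rho$---make explicit the bookkeeping that the paper leaves implicit.
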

\begin{proof}
The proof follows along a similar line as in Theorem~\ref{T05}.
\end{proof}
\begin{exmp}
 Consider $q=2^3$ and $\F_{q^2}^{*}=\langle g \rangle $, where $g^6 + g^4 + g^3 + g + 1=0$. Let $\beta=1= \tilde{\beta}$ and $\delta=g=\tilde{\delta}$. Then Theorem ~\ref{T15} (1) implies that $f(X)=(g^5+g^3+1)X^{57}+(g^3+g)X^{15}+(g^5+g^4+g+1)X^8+(g^5+g^2)X$ is a permutation quadrinomial over $\F_{q^2}$. 
  \end{exmp}

\subsection*{Case 2:} When $f(X)$ is equivalent to $\zeta^{-1} \circ X^3 \circ \zeta,$~\mbox{where}~$q\equiv1 \pmod 3$ and for some $\delta \in \F_{q^2}\setminus \F_q$, $\zeta(X)=(X-\delta^q)/(X-\delta)$ and $ \zeta^{-1}(X)=(\delta X-\delta^q)/(X-1)$.
In this case, the function $\nu \circ f \circ \rho$ cannot be expressed in the form $X^r\frac{h(X)^q}{h(X)}$ for any $h(X)\in \F_{q^2}[X]$ such that $h(X)$ has no root in $\mu_{q+1}$ and  $X^r\frac{h(X)^q}{h(X)}$ permutes $\mu_{q+1}$. As a consequence, we can not construct PPs using our strategy in this case.
\subsection*{Case 3:}
 In this case, we shall construct two classes of permutation quadrinomials over $\F_{q^2}$ arising from degree-three permutation rational functions of $\F_q$, in the particular case of $q\equiv 0 \pmod 3$. From Lemma~\ref{T20}, we know that if $q\equiv 0 \pmod 3$ then any degree-three permutation rational function of $\bP^1(\F_q)$ will be equivalent to $X^3-\alpha X$ where either $\alpha =0$ or $\alpha$ is a non-square in $\F_q$. Let $\rho : \mu_{q+1} \rightarrow \bP^1(\F_q)$ and ${\nu} : \bP^1(\F_q) \rightarrow \mu_{q+1}$ are bijective degree-one rational functions. Then from Lemma~\ref{L21} and Lemma~\ref{L22}, we know that
 \[
  \rho(X) = \frac{\delta X -\beta \delta^q}{X-\beta},
 ~\mbox{and}~
  {\nu}(X) = {\tilde \beta} \left(\frac{X -{\tilde \delta}^q}{X- {\tilde \delta}}\right),
 \]
 for some $\beta, {\tilde \beta} \in \mu_{q+1}$ and $\delta, {\tilde \delta} \in \F_{q^2} \backslash \F_q$. It is straightforward to see that the rational function ${\nu} \circ (X^3 -\alpha X) \circ \rho : \mu_{q+1} \rightarrow \mu_{q+1}$ is a bijection of $\mu_{q+1}$. The rational function ${\nu} \circ (X^3 -\alpha X) \circ \rho$  can be written as ${\nu} \circ (X^3 -\alpha X) \circ \rho$ which is same as
 \begin{equation*}
  \begin{split}
  & \quad {\nu} \circ \left(\frac{(\delta X -\beta \delta^q)^3}{(X-\beta)^3} - \alpha \frac{(\delta X -\beta \delta^q)}{(X-\beta)} \right) \\
   &= {\nu} \circ \left(\frac{(\delta X -\beta \delta^q)^3-\alpha(\delta X -\beta \delta^q)(X-\beta)^2}{(X-\beta)^3} \right) \\
   &= {\nu} \circ \left(\frac{\delta^3 X^3 -\beta^3 \delta^{3q}-\alpha(\delta X -\beta \delta^q)(X^2+\beta X +\beta^2)}{X^3-\beta^3} \right) \\
   &= {\nu} \circ \left(\frac{(\delta^3-\alpha \delta) X^3+ \alpha \beta (\delta^q -\delta)X^2+\alpha \beta^2 (\delta^q -\delta)X +\beta^3 (\alpha \delta^q- \delta^{3q})}{X^3-\beta^3} \right) \\
   &= {\tilde \beta} \left(\frac{X -{\tilde \delta}^q}{X- {\tilde \delta}}\right) \circ \left(\frac{(\delta^3-\alpha \delta) X^3+ \alpha \beta (\delta^q -\delta)X^2+\alpha \beta^2 (\delta^q -\delta)X +\beta^3 (\alpha \delta^q- \delta^{3q})}{X^3-\beta^3} \right) \\
   &= {\tilde \beta} \left(\frac{(\delta^3-\alpha \delta - {\tilde \delta}^q) X^3+ \alpha \beta (\delta^q -\delta)X^2+\alpha \beta^2 (\delta^q -\delta)X +\beta^3 (\alpha \delta^q- \delta^{3q}+ {\tilde \delta}^q)}{(\delta^3-\alpha \delta -{\tilde \delta}) X^3+ \alpha \beta (\delta^q -\delta)X^2+\alpha \beta^2 (\delta^q -\delta)X +\beta^3 (\alpha \delta^q- \delta^{3q}+{\tilde \delta})}\right) \\
   &= \frac{N_3X^3+N_2X^2+N_1X+N_0}{D_3X^3+D_2X^2+D_1X+D_0},
  \end{split}
 \end{equation*}
 where
 \begin{equation}\label{C50}
  \begin{cases}
    N_0 &= {\tilde \beta}\beta^3 (\alpha \delta^q- \delta^{3q}+ {\tilde \delta}^q),\\
    N_1 &= \alpha {\tilde \beta} \beta^2 (\delta^q -\delta),\\
    N_2 &= \alpha {\tilde \beta} \beta (\delta^q -\delta),\\
    N_3 &= {\tilde \beta}(\delta^3-\alpha \delta - {\tilde \delta}^q),
  \end{cases}
  ~~~\mbox{and}~~~
  \begin{cases}
   D_0 &= \beta^3 (\alpha \delta^q- \delta^{3q}+ {\tilde \delta}),\\
    D_1 &= \alpha \beta^2 (\delta^q -\delta),\\
    D_2 &= \alpha \beta (\delta^q -\delta),\\
    D_3 &= \delta^3-\alpha \delta - {\tilde \delta},
  \end{cases}
 \end{equation}

 \begin{rmk}\label{R50}
  Notice that since $ {\nu} \circ X^3-\alpha X  \circ \rho: \mu_{q+1} \rightarrow \mu_{q+1}$ is a permutation of $\mu_{q+1}$, we have $D_3X^3+D_2X^2+D_1X+D_0 \neq 0$ for all $X \in \mu_{q+1}$.
 \end{rmk}

 Now we give the following lemma which will be used throughout the section.

 \begin{lem} \label{L550}
 Let
\begin{equation*}
 \begin{split}
  h(X)&=D_3X^3 +D_2X^2+D_1X+D_0,\\
  h_1(X) &= D_0X^q+D_3X^2 +D_2X+D_1,\\
  h_2(X) &=D_0X^{2q}+D_1X^q+D_3X+D_2,\\
  h_3(X) &=D_0X^{3q}+D_1X^{2q}+D_2X^q+D_3,
 \end{split}
\end{equation*}
where $D_0,D_1, D_2, D_3$ are defined in the Equation~\eqref{C50}. Then
 $h(X)=0$ has no solution in $\mu_{q+1}$ iff  $h_i(X)=0$ has no solution in $\mu_{q+1}$ for any $i=1,2,3$.
\end{lem}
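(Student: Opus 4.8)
The plan is to mimic the proof technique already established in Lemma~\ref{L05}, which handled the analogous equivalences for the coefficients $D_i$ arising from Equation~\eqref{C05} in Case~1. The crucial observation is that the auxiliary polynomials $h_1, h_2, h_3$ are, up to multiplication by a power of $X$, simply the reciprocal-type rearrangements of $h$ evaluated at points of $\mu_{q+1}$. Since the new coefficients $D_0, D_1, D_2, D_3$ from Equation~\eqref{C50} play exactly the same structural role as those in Case~1, the same algebraic identities should hold verbatim. First I would record the key algebraic relations on $\mu_{q+1}$, namely that for $\alpha \in \mu_{q+1}$ one has $\alpha^q = \alpha^{-1}$, which is what drives the whole argument.

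The core of the proof is to establish, for each $i \in \{1,2,3\}$, an identity of the shape $h_i(X) = X^{c_i}\, h(X)$ valid for all $X \in \mu_{q+1}$, where $c_i$ is an appropriate exponent. Concretely, I would verify that $h_1(X) = X^{q}\,h(X)$, that $h_2(X) = X^{2q}\,h(X)$ (or an equivalent power), and that $h_3(X) = X^{3q}\,h(X)$ on $\mu_{q+1}$. Each such identity follows by substituting $X^q = X^{-1}$ into the definition of $h_i$ and clearing denominators; for instance, $h_1(X) = D_0 X^{-1} + D_3 X^2 + D_2 X + D_1$, and multiplying by $X$ recovers $D_3 X^3 + D_2 X^2 + D_1 X + D_0 = h(X)$. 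These are routine manipulations and I would not spell out each one, merely note that the same cancellation that worked for Lemma~\ref{L05} applies here because the coefficient placement is identical.

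With these identities in hand, the equivalence is immediate and symmetric. If $h(X) = 0$ has no solution in $\mu_{q+1}$ and we suppose $h_i(\alpha) = 0$ for some $\alpha \in \mu_{q+1}$, then since $\alpha \neq 0$ the relation $h_i(\alpha) = \alpha^{c_i} h(\alpha)$ forces $h(\alpha) = 0$, a contradiction. Conversely, if $h_i(X) = 0$ has no root in $\mu_{q+1}$ and $h(\alpha) = 0$, the same identity yields $h_i(\alpha) = 0$, again a contradiction. This gives both directions of the biconditional for each $i$, exactly as in the first equivalence of Lemma~\ref{L05}.

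The only point requiring genuine care, and hence the main obstacle, is confirming that the reciprocal identities $h_i(X) = X^{c_i} h(X)$ actually hold with the \emph{specific} coefficients of Equation~\eqref{C50} rather than those of Equation~\eqref{C05}. Although the polynomials $h, h_1, h_2, h_3$ are written with identical symbolic structure, one must check that no hidden dependence on $\alpha$ (the parameter in $X^3 - \alpha X$) breaks the cancellation; since the $D_i$ enter only as coefficients and the identities are purely formal in $X$ on $\mu_{q+1}$, this should go through, but I would explicitly confirm the exponent bookkeeping so that the powers of $X$ match up correctly modulo the relation $X^{q+1} = 1$. Once that is verified, I would simply write that the remaining equivalences follow by the same argument, deferring to the detailed model provided in the proof of Lemma~\ref{L05}.
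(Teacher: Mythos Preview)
Your proposal is correct and matches the paper's approach exactly: the paper's proof of Lemma~\ref{L550} consists solely of the sentence ``The proof follows along a similar line as Lemma~\ref{L05},'' and your argument is precisely that of Lemma~\ref{L05}, relying on the identities $\alpha\, h_1(\alpha)=h(\alpha)$, $\alpha^2 h_2(\alpha)=h(\alpha)$, $\alpha^3 h_3(\alpha)=h(\alpha)$ for $\alpha\in\mu_{q+1}$, which are purely formal in the coefficients $D_i$ and hence independent of whether they come from Equation~\eqref{C05} or Equation~\eqref{C50}.
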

\begin{proof}
 The proof follows along a similar line as Lemma~\ref{L05}. 
\end{proof}

We shall now use the rational function ${\nu} \circ (X^3-\alpha X) \circ \rho$ to construct permutation quadrinomials over $\F_{q^2}$. 
The proofs of the following theorems follow the same line as of Theorem~\ref{T05} and thus omitted.

\begin{thm}\label{T50} Let $q\equiv 0 \pmod 3$ and $h(X)=D_3X^3+D_2X^2 +D_1X+D_0$, where $D_0, D_1, D_2,$ $D_3$ are given in Equation~\eqref{C50}. Moreover, $\delta$ and $\tilde \delta$ in the expressions of $D_i$'s, $0 \leq i \leq 3$ are such that $\delta, \tilde \delta \in \F_{q^2}\setminus\F_{q}$ and $\tilde \delta \not \in \{ \delta^{3q}-\alpha \delta^q, \delta^3-\alpha \delta\}$, where $\alpha$ is either zero or a non-square in $\F_q$.
Let $\beta, {\tilde \beta}$ be elements in $\mu_{q+1}$ satisfying $1+{\tilde \beta}\beta^3=0$.
Then $f(X)=X^3h(X^{q-1})$ is a PP over $\F_{q^2}$.
\end{thm}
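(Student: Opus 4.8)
The plan is to mirror the proof of Theorem~\ref{T05}, now using the coefficients $D_i$ from Equation~\eqref{C50}. First I would invoke Lemma~\ref{L23} with $r=3$ and $s=q-1$: the polynomial $f(X)=X^3h(X^{q-1})$ permutes $\F_{q^2}$ precisely when $\gcd(3,q-1)=1$ and $g(X):=X^3h(X)^{q-1}$ permutes $\mu_{q+1}$. The gcd condition is automatic, since $q\equiv 0\pmod 3$ forces $q-1\equiv 2\pmod 3$, so $\gcd(3,q-1)=1$. Thus everything reduces to showing that $g$ is a bijection of $\mu_{q+1}$ under the hypothesis $1+\tilde\beta\beta^3=0$.

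Next I would rewrite $g$ as a rational function on $\mu_{q+1}$. For $x\in\mu_{q+1}$ we have $x^q=x^{-1}$, so $h(x)^q=D_3^qx^{-3}+D_2^qx^{-2}+D_1^qx^{-1}+D_0^q$, and clearing by $x^3$ gives
$$g(x)=x^3\frac{h(x)^q}{h(x)}=\frac{D_0^qx^3+D_1^qx^2+D_2^qx+D_3^q}{D_3x^3+D_2x^2+D_1x+D_0}=:\tilde g(x).$$
By Remark~\ref{R50} (equivalently Lemma~\ref{L550}) the denominator never vanishes on $\mu_{q+1}$, so $\tilde g\in\F_{q^2}(X)$ is well defined there and $g$ permutes $\mu_{q+1}$ if and only if $\tilde g$ does. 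The goal is therefore to identify $\tilde g$ with the bijection $\nu\circ(X^3-\alpha X)\circ\rho$ computed above, whose denominator is exactly $D_3X^3+D_2X^2+D_1X+D_0$; since the two rational functions already share this denominator, it suffices to match numerators, i.e. to verify the system $N_3=D_0^q$, $N_2=D_1^q$, $N_1=D_2^q$, $N_0=D_3^q$.

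The heart of the argument is this coefficient comparison, carried out with the relations $\alpha^q=\alpha$ (as $\alpha\in\F_q$), $\delta^{q^2}=\delta$ (as $\delta\in\F_{q^2}$), and $\beta^q=\beta^{-1}$ together with $\tilde\beta=-\beta^{-3}$ (from $\beta,\tilde\beta\in\mu_{q+1}$ and $1+\tilde\beta\beta^3=0$). Applying Frobenius to the $D_i$ of Equation~\eqref{C50} and comparing with the $N_i$, each difference $N_{3-i}-D_i^q$ factors as a product whose second factor is one of $\tilde\beta+\beta^{3q}$, $\tilde\beta\beta+\beta^{2q}$, $\tilde\beta\beta^2+\beta^{q}$, $1+\tilde\beta\beta^3$; for instance $N_3-D_0^q=-(\alpha\delta-\delta^3+\tilde\delta^q)(\tilde\beta+\beta^{3q})$ and $N_2-D_1^q=\alpha(\delta^q-\delta)(\tilde\beta\beta+\beta^{2q})$. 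Since $\beta^{kq}=\beta^{-k}$ and $\tilde\beta=-\beta^{-3}$, every one of these second factors vanishes, so all four numerator identities hold at once, whence $\tilde g=\nu\circ(X^3-\alpha X)\circ\rho$. The constraints $\tilde\delta\notin\{\delta^{3q}-\alpha\delta^q,\delta^3-\alpha\delta\}$ serve only to guarantee $D_0,D_3\neq 0$, so that $h$ is a genuine cubic (equivalently $f$ a genuine quadrinomial) and $\tilde g$ the claimed degree-three bijection. Unwinding the chain $\tilde g\Rightarrow g\Rightarrow f$ then shows that $f$ permutes $\F_{q^2}$.

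I expect the only point requiring real care to be this coefficient bookkeeping: because $q\equiv 0\pmod 3$ forces the characteristic-three collapse $(X-\beta)^3=X^3-\beta^3$, the shapes of the $N_i$ and $D_i$ differ from those in Case~1 (no factors of $3$ appear, and the two middle coefficients share the common factor $\alpha(\delta^q-\delta)$), so one cannot transcribe the Case~1 identities verbatim but must recompute the four factorizations. Everything else is the generic machinery of Lemma~\ref{L23} followed by the Frobenius rewriting of $g$ as $\tilde g$.
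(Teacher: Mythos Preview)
Your proposal is correct and follows exactly the approach the paper intends: the paper itself states that the proof of Theorem~\ref{T50} ``follows the same line as of Theorem~\ref{T05} and thus omitted,'' and you have faithfully reproduced that line---invoking Lemma~\ref{L23}, rewriting $g$ as the rational function $\tilde g$, and then checking the four identities $N_{3-i}=D_i^q$ under the hypothesis $1+\tilde\beta\beta^3=0$ with the coefficients of Equation~\eqref{C50}. Your explicit verification of the factorizations (e.g.\ $N_3-D_0^q=-(\alpha\delta-\delta^3+\tilde\delta^q)(\tilde\beta+\beta^{3q})$) and your remark about the characteristic-three collapse are precisely the bookkeeping details the paper chose to suppress.
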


 \begin{exmp}
 In the above Theorem~\ref{T50}, if we choose $\beta=2$, $\tilde{\beta}=1$, $\delta=g=\tilde{\delta}$ and $\alpha=2$. Then we get $f(X)=(g+1)X^7+2(g+1)X^{5}+(2g+1)X^3+(2g+1)X$, a permutation polynomial over the finite field $\F_{q^2}$, where $q=3$ and $g$ is generator of the cyclic group $\F_{q^2}^{*}$. In addition, by taking $q=3^4$, a primitive element $g$ of $\F_{q^2}$ satisfying $g^8 + 2g^5 + g^4 + 2g^2 + 2g + 2=0$, $\alpha=g^{q+1}$, $(\beta, \beta_1) = (-1, 1)$ and $(\delta, \tilde{\delta}) = (g^{4898}, g^{332})$,
 we can obtain a permutation quadrinomial $f(X) = g^{1523}X^{3q} + g^{1681}X^{2q+1} + g^{4961}X^{q+2} + g^{3736}X^3$ over $\F_{q^2}$.
\end{exmp}


\begin{thm}\label{T51}
  Let $q\equiv 0 \pmod 3$ and $D_0, D_1, D_2,$ $D_3$ are given in Equation~\eqref{C50}. Moreover, $\delta$ and $\tilde \delta$ in the expressions of $D_i$'s, $0 \leq i \leq 3$ are such that $\delta, \tilde \delta \in \F_{q^2}\setminus\F_{q}$ and $\tilde \delta \not \in \{ \delta^{3q}-\alpha \delta^q, \delta^3-\alpha \delta\}$, where $\alpha$ is either zero or a non-square in $\F_q$. If $1+{\tilde \beta}\beta^3=0$ for some ${\tilde \beta}\in \mu_{q+1}$ then the following quadrinomials are permutation quadrinomials over $\F_{q^2}$
 \begin{enumerate}
 \item $Xh_1(X^{q-1})$, where $h_1(X)= D_0X^q+D_3X^2 +D_2X+D_1$,
 \item  $X^{q}h_2(X^{q-1})$, where $h_2(X)=D_0X^{2q}+D_1X^q+D_3X+D_2$,
 \item  $X^{3q}h_3(X^{q-1})$, where $h_3(X)=D_0X^{3q}+D_1X^{2q}+D_2X^q+D_3$.
 \end{enumerate}
\end{thm}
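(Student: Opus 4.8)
The plan is to leverage the equivalences established in Lemma~\ref{L550} together with the machinery already developed for Theorem~\ref{T05}, since the statement here is the degree-three, characteristic-three analogue of Theorem~\ref{T15}. The central observation is that the auxiliary polynomials $h_1, h_2, h_3$ are not independent objects: each is obtained from $h$ by a cyclic-type reindexing of the coefficients $D_0, D_1, D_2, D_3$, and Lemma~\ref{L550} guarantees that $h_i(X)=0$ has no root in $\mu_{q+1}$ precisely when $h(X)=0$ has none. By Remark~\ref{R50}, the denominator $h(X)$ never vanishes on $\mu_{q+1}$, so all three $h_i$ are likewise root-free there. This is what lets each $X^{r_i}h_i(X^{q-1})$ be analyzed by Lemma~\ref{L23}.

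For each of the three parts I would carry out the same three steps. First, verify the coprimality condition $\gcd(r_i, q-1)=1$ from Lemma~\ref{L23}: for part~(1) this is $\gcd(1,q-1)=1$, which is immediate; for part~(2) one needs $\gcd(q, q-1)=1$, again immediate; for part~(3) one needs $\gcd(3q, q-1)=1$, and here I would note that since $q\equiv 0\pmod 3$ we have $q-1\equiv 2\pmod 3$, so $3\nmid(q-1)$, while $\gcd(q,q-1)=1$ forces $\gcd(3q,q-1)=1$. Second, reduce the permutation of $\mu_{q+1}$ by $X^{r_i}h_i(X)^{q-1}$ to a fractional transformation: for $\alpha\in\mu_{q+1}$ one computes $X^{r_i}h_i(X)^{q-1}=X^{r_i}h_i(X)^q/h_i(X)$ and, using $\alpha^q=\alpha^{-1}$, rewrites this as a ratio of two degree-three polynomials, exactly as in the display for $g(\alpha)$ in the proof of Theorem~\ref{T05}. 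Third, show this fractional map equals the bijection $\nu\circ(X^3-\alpha X)\circ\rho$ under the hypothesis $1+\tilde\beta\beta^3=0$, by checking that the resulting numerator coefficients agree with $N_3, N_2, N_1, N_0$ from Equation~\eqref{C50}; this is the analogue of the system~\eqref{SEE1} and collapses to the single condition $1+\tilde\beta\beta^3=0$.

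The step I expect to be the main obstacle, and the only one requiring genuine care rather than routine bookkeeping, is the coefficient-matching in the second and third steps for the shifted polynomials $h_2$ and $h_3$. Because $h_2$ and $h_3$ carry higher powers $X^{2q}$ and $X^{3q}$, the reduction $\alpha^{q}=\alpha^{-1}$ must be applied repeatedly and the exponents collected modulo the relation defining $\mu_{q+1}$; one must confirm that after this reduction the induced fractional map is again a ratio of the same four coefficients in the correct cyclic order, so that it coincides with $\nu\circ(X^3-\alpha X)\circ\rho$ rather than some other transformation. Once that identification is in place, the conclusion that $X^{r_i}h_i(X^{q-1})$ permutes $\F_{q^2}$ follows from Lemma~\ref{L23} just as in Theorem~\ref{T05}. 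Since the conditions on $\delta, \tilde\delta$ (namely $\tilde\delta\notin\{\delta^{3q}-\alpha\delta^q,\ \delta^3-\alpha\delta\}$) guarantee the $D_i$ are well-defined and $h$ is a genuine degree-three polynomial with nonvanishing leading behaviour on $\mu_{q+1}$, no further nondegeneracy hypotheses are needed, and the three parts are established in parallel.
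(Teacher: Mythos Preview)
Your proposal is correct and follows exactly the approach the paper intends: the paper omits the proof entirely, stating only that it ``follows the same line as of Theorem~\ref{T05}'', and your outline reproduces that line (Lemma~\ref{L23} for the coprimality and $\mu_{q+1}$-reduction, Lemma~\ref{L550} and Remark~\ref{R50} for root-freeness, then identifying the induced fractional map with $\nu\circ(X^3-\alpha X)\circ\rho$ under $1+\tilde\beta\beta^3=0$). The ``main obstacle'' you flag is in fact routine once you note that $h_i(z)=z^{-i}h(z)$ for $z\in\mu_{q+1}$, so $z^{r_i}h_i(z)^{q-1}=z^{r_i-i(q-1)}h(z)^{q-1}$ and each choice of $r_i$ makes the exponent congruent to $3$ modulo $q+1$, recovering $g(z)=z^3h(z)^{q-1}$ directly.
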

 \begin{exmp}
 Let $q=3^2$ and $g$ be a generator of the cyclic group $\F_{q^2}^{*}$ satisfying $X^4 + 2X^3 + 2=0$. Consider $\beta=1$, $\tilde{\beta}=-1$, $\delta=g=\tilde{\delta}$ and $\alpha=g^{q+1}$. Then we obtain $f(X)=(g+1)X^{27}+(g^3+g+1)X^{19}+(g^3+g+1)X^{11}+(g^3+g^2+2)X^3$, a permutation quadrinomial over the finite field $\F_{q^2}$ from Theorem~\ref{T51} (3).
\end{exmp}

\begin{rmk}
For $q \equiv 2 \pmod 3$ or $q \equiv 0 \pmod 3$, it follows from Remark~\ref{R1} that any class of permutation quadrinomials which can be obtained using any degree three permutation rational function is already studied either in Case 1 or Case 3 discussed above. Recently, {\"O}zbudak and Tem{\"u}r~\cite{FB_quad_2023} classified all permutation polynomials over $\F_{q^2}$ of the form $f(X) = X^3 + aX^{q+2} +bX^{2q+1} + cX^{3q}$, where $a, b, c \in \F_{q}^*$ and $q$ is an odd prime power. Notice that the quadrinomials obtained in~\cite{FB_quad_2023} permutes $\F_{q^2}$ only if $q \equiv 2 \pmod 3$ or $q \equiv 0 \pmod 3$. Hence the classes of permutation quadrinomials characterized in~\cite{FB_quad_2023} are included in our proposed classes.
It will be shown in Section \ref{S5} that there exist new permutation quadrinomials that are not QM-equivalent to all known ones.
\end{rmk}


\section{Permutation polynomials from degree-four rational functions} \label{S4}
In this section, we study six classes of permuation pentanomials and two classes of permutation  binomials over the finite field $\F_{q^2}$ in even characteristic, using degree-four rational functions given in the first case of Lemma~\ref{T21}.

Let $q$ be even and $f(X) \in \F_q(X)$ is equivalent to $X^4+bX^2+aX$ for some $a,b \in \F_q$ such that $X^3+bX+a$ has no roots in $\F_q^*$. This degree-four function is of the form of linearized polynomials over $\F_q$ in even characteristic.

First, we would like to review some background material on linearized polynomials over  $\F_{q}$, that would be used to characterise degree-four rational functions $f(X)$ permuting $\bP^1(\F_{q})$.
Polynomials over $\F_q$ of the form $ L(X)=\sum_{i=0}^{m-1} a_iX^{2^i} \in \F_q[X]$ are often known as additive polynomials or linearized polynomials. Such a special kind of polynomials can induce linear transformations of vector space $\F_2^m$ over $\F_2$. Linearized polynomials are interesting objects especially when they are bijective. More precisely, a linearized polynomial $L(X)$ permutes the finite field $\F_q$ if and only if its only root in $\F_q$ is zero. 
The degree-four linearized polynomials $X^4+bX^2+aX \in \F_q[X]$ permutes $\F_q$ if and only if $P_{a,b}(X):=X^3+bX+a=0$ has no solution in $\F_q$. When $(a,b)=(0,0)$ it is clear that $X^4$ permutes $\F_q$; when $a\neq0$ and $b=0$, the polynomial $P_{a,b}=X^4+aX$ is a permutation of $\F_q$ if and only if $a$ is a non-cube element in $\F_q$; and lastly, when $b \neq0$, there exists a unique $c \in \F_q^*$ such that $b=c^2$ and by a simple substitution $X=cX$, we can transform $P_{a,b}(X)$ into the
form $P_{\alpha}(X)=X^3+X+\alpha$, where $\alpha=\frac{a}{c^3}$. Thus the linearized polynomial $X^4+bX^2+aX$ is a permutation polynomial if and only if $P_{\alpha}(X)$ is irreducible over $\F_q$.
The following lemma gives necessary and sufficient conditions for polynomial $P_{\alpha}(X)$ to be irreducible over $\F_{q^2}$.
\begin{lem}\cite{Wil} \label{L49}
Let $m$ be a positive integer and $P_{a,b}(X)=X^3+aX+b\in \F_{2^m}[X]$ be a polynomial, where $b \in \F_{2^m}^{*}$. Then the polynomial $f(X)$ is irreducible over $\F_{2^m}$ if and only if $\Tr\left(\dfrac{a^3}{b^2}\right)=\Tr(1)$, and $t_1, t_2$ are not cubes in $\F_{2^m}$($m$ even), $\F_{2^{2m}}$ ($m$ odd), where $t_1$ and $t_2$ are roots of the equation $t^2+bt+a^3$.
\end{lem}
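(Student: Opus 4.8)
The plan is to exploit the classical dictionary between the factorization type of a separable cubic over a finite field and the cycle type of the Frobenius map on its roots, detecting each cycle type by an explicit resolvent. Write $q=2^m$ and assume $a\neq 0$, so that $f(X)=X^3+aX+b$ is separable (indeed $\gcd(f,f')=1$ whenever $b\neq 0$), and let $r_1,r_2,r_3\in\overline{\F_q}$ be its roots. Since $f$ has degree three, it is irreducible over $\F_q$ exactly when Frobenius $\phi:x\mapsto x^q$ permutes $r_1,r_2,r_3$ as a $3$-cycle; the remaining possibilities are a transposition (a linear times an irreducible quadratic) and the identity (a complete splitting). So the argument runs in two stages: first decide when $\phi\in A_3$, i.e. when $f$ is irreducible \emph{or} totally split; then, inside that case, separate the irreducible cubic from the totally split one.

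For the first stage I would use the $\omega$-free resolvent $u=\tfrac{r_1}{r_2}+\tfrac{r_2}{r_3}+\tfrac{r_3}{r_1}$ and $v=\tfrac{r_2}{r_1}+\tfrac{r_3}{r_2}+\tfrac{r_1}{r_3}$, which is fixed by $A_3$ and interchanged by every transposition. A short symmetric-function computation with the elementary symmetric functions $e_1=0,\,e_2=a,\,e_3=b$ (the characteristic-two identities $\sum r_i^3=b$ and $\sum_{i<j}(r_ir_j)^3=a^3+b^2$ are what I would record here) gives $u+v=1$ and $uv=\tfrac{a^3}{b^2}+1$, so $u,v$ are the roots of the separable quadratic $t^2+t+(\tfrac{a^3}{b^2}+1)$. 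As $u\neq v$ and $\mathrm{Stab}_{S_3}(u)=A_3$, we have $u\in\F_q$ (equivalently $\phi\in A_3$) precisely when this Artin--Schreier quadratic splits over $\F_q$, i.e. when $\Tr_{\F_q/\F_2}\!\big(\tfrac{a^3}{b^2}+1\big)=0$. This is exactly $\Tr(a^3/b^2)=\Tr(1)$, so the trace condition is equivalent to ``$f$ irreducible or totally split.''

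For the second stage I would pass to the Lagrange resolvent $\lambda=r_1+\omega r_2+\omega^2 r_3$, $\mu=r_1+\omega^2 r_2+\omega r_3$, where $\omega$ is a primitive cube root of unity. One checks $\lambda\mu=a$ and $\lambda^3+\mu^3=b$, so $t_1:=\lambda^3$ and $t_2:=\mu^3$ are the roots of $t^2+bt+a^3$, and $t_1t_2=a^3$ is a cube. If $m$ is even then $\omega\in\F_q$ and the trace condition forces $\Tr(a^3/b^2)=0$, whence $t_1,t_2\in\F_q$; reconstructing $r_1=\lambda+\mu$, $r_2=\omega\lambda+\omega^2\mu$, $r_3=\omega^2\lambda+\omega\mu$ shows the splitting field is $\F_q(\lambda)$, so $f$ is irreducible iff $\lambda\notin\F_q$ iff $t_1$ (equivalently $t_1,t_2$) is a non-cube in $\F_q=\F_{2^m}$. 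If $m$ is odd then every element of $\F_q$ is a cube, so the cube test must move up: here $\gcd(3,q^2-1)=3$ puts $\omega\in\F_{q^2}$, the trace condition now forces $t_1,t_2\in\F_{q^2}\setminus\F_q$, and since $\gcd(2,3)=1$ the cubic is irreducible over $\F_q$ iff it is irreducible over $\F_{q^2}$. Running the even-case argument with base field $\F_{q^2}$ then yields irreducibility iff $t_1,t_2$ are non-cubes in $\F_{q^2}=\F_{2^{2m}}$.

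The main obstacle, and the reason the statement bifurcates on the parity of $m$, is that the natural irreducible/split discriminator is the Lagrange resolvent, which only behaves well over a field containing $\omega$. The delicate point is to keep the $A_3$-detection of stage one (carried out by the $\omega$-free resolvent $u,v$ over $\F_q$) cleanly separate from the cube-test of stage two, which must be read in $\F_q$ or in $\F_{q^2}$ according as $\omega\in\F_q$ or not; the bookkeeping that $t_1t_2=a^3$ is always a cube, so that ``$t_1$ non-cube'' and ``$t_2$ non-cube'' coincide, is what lets the two conjugate roots be handled symmetrically. The degenerate case $a=0$, in which the resolvents collapse, lies outside the stated hypotheses and would be treated separately.
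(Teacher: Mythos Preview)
The paper does not prove this lemma; it is quoted from Williams~\cite{Wil} and used as a black box, so there is no ``paper's own proof'' to compare against. Your argument is the classical resolvent approach and is essentially Williams' proof: detect whether Frobenius lies in $A_3$ via the $A_3$-invariant $u=\sum r_i/r_{i+1}$ (whose quadratic resolvent gives the trace condition), then separate the $3$-cycle from the identity via the Lagrange resolvent cubes $t_i=\lambda^3,\mu^3$, passing to $\F_{q^2}$ when $\omega\notin\F_q$. The symmetric-function computations you record ($u+v=1$, $uv=1+a^3/b^2$, $\lambda\mu=a$, $\lambda^3+\mu^3=b$) are all correct in characteristic~$2$, and the reduction ``irreducible over $\F_q$ $\Leftrightarrow$ irreducible over $\F_{q^2}$'' via $\gcd(2,3)=1$ is the right way to handle odd $m$.

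One small correction: the case $a=0$ is \emph{not} outside the stated hypotheses (the lemma assumes only $b\ne0$). You are right that the resolvent machinery degenerates there, since $t_1t_2=a^3=0$ forces one $t_i=0$, which is trivially a cube; in fact the lemma as literally stated in the paper fails at $a=0$ (for $m$ even and $b$ a non-cube, $X^3+b$ is irreducible yet the ``$t_1,t_2$ non-cube'' clause is violated). This is a defect of the quoted statement rather than of your proof, and it does not affect the paper's application, which only invokes the lemma for $P_\alpha(X)=X^3+X+\alpha$ with $a=1$.
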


Bracken, Tan and Tan in~\cite{Bracken_14} further simplified the above conditions and gave the following result for even positive integer $m$.

\begin{lem}\cite{Bracken_14} \label{L50}
Let $m=2k$. Then the polynomial $P_\alpha(X)=X^3+X+ \alpha$ is irreducible over $\F_{2^m}$ if and only if $\alpha=a+a^{-1}$ for some non-cube $a \in \F_{2^m}$.
\end{lem}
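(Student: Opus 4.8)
The plan is to relate the reducibility test of Lemma~\ref{L49} to the shape $\alpha = a + a^{-1}$ and then to the cube condition. Since the target is the specific trinomial $P_\alpha(X) = X^3 + X + \alpha$, I would apply Lemma~\ref{L49} with the coefficient choices $a = 1$ and $b = \alpha$, so that the auxiliary quadratic becomes $t^2 + \alpha t + 1 = 0$, whose roots $t_1, t_2$ satisfy $t_1 t_2 = 1$ and $t_1 + t_2 = \alpha$. The first condition of Lemma~\ref{L49}, namely $\Tr(1/\alpha^2) = \Tr(1)$, and the cube condition on $t_1, t_2$ together govern irreducibility; since $m = 2k$ is even, the relevant cube condition is cubes in $\F_{2^m}$ itself. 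The whole argument reduces to translating these two conditions into the single statement ``$\alpha = a + a^{-1}$ for some non-cube $a$.''

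\emph{The forward direction.} Suppose $\alpha = a + a^{-1}$ for a non-cube $a \in \F_{2^m}$. Then $a$ and $a^{-1}$ are precisely the two roots of $t^2 + \alpha t + 1 = 0$, so I may take $\{t_1, t_2\} = \{a, a^{-1}\}$. First I would verify the trace condition $\Tr(1/\alpha^2) = \Tr(1)$ directly from $\alpha = a + a^{-1}$; substituting and using $\Tr(x) = \Tr(x^2)$ together with the additivity of the trace should collapse this to an identity. Next, since $a$ is a non-cube and $m = 2k$ is even, I would observe that $3 \mid (2^m - 1)$, so the cube map on $\F_{2^m}^*$ is three-to-one and $a^{-1}$ is a non-cube exactly when $a$ is (because $\gcd(3, 2^m - 1) = 3$ and inversion preserves cube-coset membership). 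Hence both $t_1 = a$ and $t_2 = a^{-1}$ are non-cubes in $\F_{2^m}$, and Lemma~\ref{L49} yields that $P_\alpha(X)$ is irreducible over $\F_{2^m}$.

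\emph{The converse.} Suppose $P_\alpha(X)$ is irreducible over $\F_{2^m}$. By Lemma~\ref{L49} applied with $a=1, b=\alpha$, the quadratic $t^2 + \alpha t + 1$ must split in $\F_{2^m}$ (otherwise $t_1, t_2$ would lie in $\F_{2^{2m}} \setminus \F_{2^m}$ and the cube condition would be tested in the wrong field for the even case), and its roots $t_1, t_2$ are non-cubes in $\F_{2^m}$. Setting $a := t_1$, I would then have $a \cdot t_2 = 1$, so $t_2 = a^{-1}$ and $\alpha = t_1 + t_2 = a + a^{-1}$ with $a$ a non-cube. The one subtlety to check is that the quadratic genuinely splits over $\F_{2^m}$: this is where I would use that the trace condition $\Tr(\alpha^{-2}) = \Tr(1)$ from Lemma~\ref{L49} is equivalent to $\Tr(1/\alpha^2) = 0$ when $m$ is even (since $\Tr(1) = 0$ for even $m$ over $\F_2$), which is exactly the Artin--Schreier-type solvability condition for $t^2 + \alpha t + 1 = 0$ having roots in $\F_{2^m}$ after normalizing to $s^2 + s = \alpha^{-2}$.

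\emph{The main obstacle} I anticipate is the bookkeeping around \emph{which field} the cube condition is tested in, and reconciling the two formulations of Lemma~\ref{L49} with the clean statement of Lemma~\ref{L50}. For $m$ even the cube map behaves uniformly on $\F_{2^m}^*$, and I must be careful that the splitting of $t^2 + \alpha t + 1$ over $\F_{2^m}$ is forced (not merely possible) under irreducibility of $P_\alpha$; the trace condition is precisely what guarantees this splitting. The complementary non-cube verification for $a^{-1}$ is routine once the cube-coset structure is fixed. Thus the real content is packaging the two conditions of Lemma~\ref{L49} into the equivalent single condition of the proposition, and the trace computation $\Tr(\alpha^{-2}) = \Tr(1)$ under $\alpha = a + a^{-1}$ is the one calculation that warrants explicit care.
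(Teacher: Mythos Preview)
The paper does not actually prove this lemma: it is quoted verbatim from \cite{Bracken_14} and used as a black box, so there is no ``paper's own proof'' to compare against. Your proposal is a correct self-contained derivation from Lemma~\ref{L49}, and the strategy---identifying $\{t_1,t_2\}$ with $\{a,a^{-1}\}$ via the auxiliary quadratic $t^2+\alpha t+1$ and translating the trace condition into the Artin--Schreier splitting criterion---is exactly the right one.

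Two small remarks. In the forward direction you need not ``verify the trace condition directly by substitution'': the mere fact that $a\in\F_{2^m}$ is a root of $t^2+\alpha t+1$ already forces $\Tr(\alpha^{-2})=0=\Tr(1)$ (even $m$), so the trace condition is automatic. In the converse, your parenthetical about the cube condition being ``tested in the wrong field'' is a red herring; the splitting of $t^2+\alpha t+1$ over $\F_{2^m}$ is forced purely by the trace condition $\Tr(\alpha^{-2})=\Tr(1)=0$, as you correctly note at the end. With those tweaks the argument is clean and complete.
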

From Lemma~\ref{L49} and Lemma~\ref{L50}, we know that for even $q \geq 8$, the degree-four rational function $f(X)=X^4 + bX^2 + aX$ permutes $\bP^1(\F_{q})$ will be equivalent to either of the following.
\begin{enumerate}[(i)]
 \item $X^4+X^2+\alpha X$, where $X^3+X+\alpha$ is irreducible over $\F_{2^n}$;
 \item $X^4+aX$, $a$ is some non-cube element in $ \F_{q}^*$;
 \item $X^4$.
\end{enumerate}
In what follows, we shall deal each of the above subcases individually.

\subsubsection*{\textbf{Subcase 1.1:}}We shall now use the linearized polynomial  $f(X)=X^4+X^2+\alpha X$ such that $X^3+X+\alpha$ is irreducible over $\F_q$, to construct rational functions that induce a bijection of $\mu_{q+1}$, the unit circle of $\F_{q^2}$ with order $q+1$. Let $\rho \in \F_{q^2}(X)$ be a degree-one rational function which induces a bijection from $\mu_{q+1}$ to $\bP^1(\F_{q})$. Then from Lemma~\ref{L21}, we know that
\[
  \rho(X) = \frac{\delta X +\beta \delta^{q}}{X+\beta}
\]
for some $\beta \in \mu_{q+1}$ and $\delta \in \F_{q^2}\backslash \F_{q}$. Similarly, let $\nu \in \F_{q^2}(X)$ be a degree-one rational function which induces bijection from $\bP^1(\F_{q})$ to $\mu_{q+1}$. Then from Lemma~\ref{L22}, we know that
\[
 \nu(X)= {\tilde \beta} \left(\frac{X +{\tilde \delta}^{q}}{X+ {\tilde \delta}}\right),
\]
for some ${\tilde \beta} \in \mu_{q+1}$ and ${\tilde \delta} \in \F_{q^2} \backslash \F_{q}$. Now consider the rational function $\nu \circ f \circ \rho : \mu_{q+1} \rightarrow \mu_{q+1}$ which permutes the set $\mu_{q+1}$. The composition $ \nu \circ f \circ \rho ~(X)$ is given by
 \begin{equation*}
  \begin{split}
  & \nu \circ \left(\frac{(\delta X +\beta \delta^q)^4}{(X+\beta)^4}+\frac{(\delta X +\beta \delta^q)^2}{(X+\beta)^2}+\frac{\alpha(\delta X +\beta \delta^q)}{(X+\beta)} \right) \\
   &=  {\tilde \beta} \left(\frac{X +{\tilde \delta}^q}{X+ {\tilde \delta}}\right) \circ \left(\frac{(\delta X +\beta \delta^q)^4 + (\delta X +\beta \delta^q)^2 (X+\beta)^2 + \alpha(\delta X +\beta \delta^q) (X+\beta)^3 }{X^4+\beta^4}\right) \\
   &= {\tilde \beta} \left(\frac{X +{\tilde \delta}^q}{X+ {\tilde \delta}}\right) \circ \left(\frac{N_4X^4+N_3X^3+N_2X^2+N_1X+N_0 }{X^4+\beta^4}\right) \\
   &= {\tilde \beta} \frac{(N_4+{\tilde \delta}^q) X^4+N_3X^3+N_2X^2+N_1X+(N_0 + \beta^4{\tilde \delta}^q) }{(N_4+{\tilde \delta}) X^4+N_3X^3+N_2X^2+N_1X+(N_0 + \beta^4{\tilde \delta}) },
  \end{split}
 \end{equation*}
 where
 \begin{equation}\label{C60}
  \begin{cases}
   N_0 &= \beta^4(\delta^{4q}+\delta^{2q}+\alpha\delta^q),\\
   N_1 &= \alpha\beta^3(\delta+\delta^q),\\
   N_2 &= \beta^2(\delta+\delta^q)(\delta+\delta^q+\alpha),\\
   N_3 &= \alpha\beta(\delta+\delta^q),\\
   N_4 &= \delta^4+\delta^2+\alpha\delta.
  \end{cases}
 \end{equation}


We shall now prove the following lemma which will be used in the construction of rational functions that permute the set $\mu_{q+1}$.

\begin{lem} \label{L61}
 Let
 \begin{equation*}
  \begin{split}
   h_1(X)&= (N_4+{\tilde \delta})X^4 +N_3X^3+ N_2X^2+ N_1X+(N_0 + \beta^4{\tilde \delta}),\\
   h_2(X)&= (N_0 +\beta^4  {\tilde \delta})X^q+(N_4+{\tilde \delta})X^3 +N_3X^2+ N_2X+ N_1,\\
   h_3(X)&= (N_0 + \beta^4 {\tilde \delta})X^{2q}+N_1X^q+(N_4+{\tilde \delta})X^2 +N_3X+ N_2, \\
   h_4(X) &= (N_0 +\beta^4 {\tilde \delta})X^{3q}+N_1X^{2q}+N_2X^q+(N_4+{\tilde \delta})X +N_3, \\
   h_5(X) &= (N_0 + \beta^4{\tilde \delta})X^{4q}+N_1X^{3q}+N_2X^{2q}+N_3X^q+(N_4+{\tilde \delta}),
  \end{split}
 \end{equation*}
where $N_0, N_1, N_2, N_3$ and $N_4$ are as given in system~\eqref{C60}. 
Then the polynomial $h_i(X)=0$ for $i=1,2,3,4,5$ has no solution $X \in \mu_{q+1}$.
\end{lem}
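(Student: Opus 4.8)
The plan is to reduce the entire statement to a single fact about $h_1$, and then tie $h_2,h_3,h_4,h_5$ to $h_1$ by an elementary identity that holds on $\mu_{q+1}$. First I would note that $h_1(X)$ is exactly the denominator of the composed map computed just above the lemma, namely
\[
\nu \circ f \circ \rho(X) = {\tilde\beta}\,\frac{(N_4 + {\tilde\delta}^q)X^4 + N_3 X^3 + N_2 X^2 + N_1 X + (N_0 + \beta^4 {\tilde\delta}^q)}{h_1(X)}.
\]
Since $\rho$, $f$ and $\nu$ are bijections (between $\mu_{q+1}$ and $\bP^1(\F_q)$, and of $\bP^1(\F_q)$), their composition is a bijection of $\mu_{q+1}$, and every value $\nu\circ f\circ\rho(x)$ with $x\in\mu_{q+1}$ lies in $\mu_{q+1}\subset\F_{q^2}^{*}$, hence is finite. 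Therefore $h_1$ cannot vanish at any $x\in\mu_{q+1}$, for otherwise the value would be $\infty$; this is the even-characteristic analogue of Remark~\ref{R05}. (If one wants to exclude a spurious $0/0$, observe that $\nu\circ f\circ\rho$ has degree $1\cdot 4\cdot 1=4$, so the displayed numerator and $h_1$ are coprime and cannot vanish simultaneously on $\mu_{q+1}$.) This disposes of the case $i=1$.

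For $i\in\{2,3,4,5\}$ the key point is that $x^{q}=x^{-1}$ for every $x\in\mu_{q+1}$. Abbreviating $A=N_4+{\tilde\delta}$, $B=N_3$, $C=N_2$, $D=N_1$, $E=N_0+\beta^4{\tilde\delta}$, so that $h_1(X)=AX^4+BX^3+CX^2+DX+E$, I would substitute $x^{q}=x^{-1}$, $x^{2q}=x^{-2}$, $x^{3q}=x^{-3}$, $x^{4q}=x^{-4}$ into $h_2,\dots,h_5$ and clear denominators. A direct check yields, for every $x\in\mu_{q+1}$,
\[
x^{\,i-1}h_i(x)=h_1(x),\qquad i=1,2,3,4,5.
\]
For example $x\,h_2(x)=E+Ax^4+Bx^3+Cx^2+Dx=h_1(x)$, and $x^{2}h_3(x)$, $x^{3}h_4(x)$, $x^{4}h_5(x)$ unwind in the same way as reciprocal rearrangements of $h_1$; these are the degree-four versions of the computations behind Lemma~\ref{L05} and Lemma~\ref{L550}.

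Finally, since $x\neq 0$ for $x\in\mu_{q+1}$, the identity $x^{\,i-1}h_i(x)=h_1(x)$ shows that $h_i(x)=0$ if and only if $h_1(x)=0$; as $h_1$ has no root in $\mu_{q+1}$ by the first step, neither does $h_i$ for $i=2,3,4,5$, which is the claim. The only step that carries any subtlety is the first one, namely justifying that the denominator $h_1$ of the bijection $\nu\circ f\circ\rho$ is non-vanishing on $\mu_{q+1}$; once this is granted, the remainder is routine reciprocal bookkeeping with $x^{q}=x^{-1}$.
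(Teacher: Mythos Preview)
Your proposal is correct and follows essentially the same route as the paper: first argue that $h_1$ has no root in $\mu_{q+1}$ because it is the denominator of the bijection $\nu\circ f\circ\rho$, and then observe that $h_i(z)=h_1(z)/z^{\,i-1}$ for $z\in\mu_{q+1}$ (equivalently, your identity $z^{\,i-1}h_i(z)=h_1(z)$). The only difference is that you spell out the coprimeness argument excluding a $0/0$ situation, which the paper leaves implicit.
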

\begin{proof} Since the composition $\nu \circ f \circ \rho$ is a bijection of $\mu_{q+1}$,
by \eqref{C60}, the polynomial 
$$
h_1(X)= (N_4+{\tilde \delta})X^4 +N_3X^3+ N_2X^2+ N_1X+(N_0 + \beta^4{\tilde \delta}) = 0
$$ has no solution in $\mu_{q+1}$. Furthermore, for any $z\in \mu_{q+1}$,
we have 
$$
h_2(z) = h_1(z)/z, h_3(z) = h_1(z)/z^2, h_4(z) = h_1(z)/z^3, h_5(z) = h_1(z)/z^4.
$$
The desired conclusion thus follows.
\end{proof}

Now using Lemma~\ref{L61} and the rational function $\nu \circ f \circ \rho$, we shall construct three classes of permutation pentanomials over $\F_{q^2}$.

\begin{thm}\label{T60}Assume $X^3+X+\alpha$ is irreducible over $\F_q$. 
Let $\delta, \tilde \delta \in \F_{q^2}\setminus\F_{q}$ such that $\tilde \delta \not \in \{ \delta^4+\delta^2+\alpha \delta, \delta^{4q}+\delta^{2q}+\alpha \delta^q\}$, $\delta+\delta^q+\alpha \neq 0$.
Assume $\beta, \tilde{\beta}$ in $\mu_{q+1}$ satisfies ${\tilde \beta}\beta^4=1$. Then,
the pentanomial
$f_1(X)=X^{4}h_1(X^{q-1})$ for 
$h_1(X)=(N_4+{\tilde \delta}) X^4+N_3X^3+N_2X^2+N_1X+(N_0 + \beta^4{\tilde \delta})$
is a PP over $\F_{q^2}$.
\end{thm}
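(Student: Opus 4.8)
The plan is to mirror the proof of Theorem~\ref{T05}, applying Lemma~\ref{L23} with $s=q-1$, so that $\mu_{(q^2-1)/(q-1)}=\mu_{q+1}$. By that lemma, $f_1(X)=X^4h_1(X^{q-1})$ permutes $\F_{q^2}$ if and only if $\gcd(4,q-1)=1$ and $g(X):=X^4h_1(X)^{q-1}$ permutes $\mu_{q+1}$. Since $q$ is even, $q-1$ is odd, so the first condition holds automatically and the whole argument reduces to showing that $g$ permutes $\mu_{q+1}$.

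First I would rewrite $g$ on $\mu_{q+1}$ as a rational function. By Lemma~\ref{L61}, $h_1$ has no zero in $\mu_{q+1}$, so for $z\in\mu_{q+1}$ one may write $g(z)=z^4h_1(z)^q/h_1(z)$. Using $z^q=z^{-1}$ for $z\in\mu_{q+1}$ and conjugating the coefficients, the numerator $z^4h_1(z)^q$ becomes a degree-$4$ polynomial in $z$, giving $g(z)=\tilde g(z)$ for the rational function
$$\tilde g(X)=\frac{(N_0+\beta^4\tilde\delta)^q X^4+N_1^qX^3+N_2^qX^2+N_3^qX+(N_4+\tilde\delta)^q}{(N_4+\tilde\delta)X^4+N_3X^3+N_2X^2+N_1X+(N_0+\beta^4\tilde\delta)}\in\F_{q^2}(X).$$
Hence $g$ permutes $\mu_{q+1}$ if and only if $\tilde g$ does.

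The crux is to identify $\tilde g$ with the bijection $\nu\circ f\circ\rho$ of $\mu_{q+1}$. These two rational functions share the denominator $h_1(X)$, so it suffices to match their numerators coefficient-by-coefficient; recalling from the displayed computation that the numerator of $\nu\circ f\circ\rho$ is $\tilde\beta$ times $(N_4+\tilde\delta^q)X^4+N_3X^3+N_2X^2+N_1X+(N_0+\beta^4\tilde\delta^q)$, this produces the system
$$(N_0+\beta^4\tilde\delta)^q=\tilde\beta(N_4+\tilde\delta^q),\quad N_1^q=\tilde\beta N_3,\quad N_2^q=\tilde\beta N_2,\quad N_3^q=\tilde\beta N_1,\quad (N_4+\tilde\delta)^q=\tilde\beta(N_0+\beta^4\tilde\delta^q).$$
Here I would use $\alpha\in\F_q$, $\beta^q=\beta^{-1}$, $\delta^{q^2}=\delta$, and the fact that $\delta+\delta^q\in\F_q$ (so it is fixed by the $q$-power map). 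For the three middle equations the common factor $\alpha(\delta+\delta^q)$ or $(\delta+\delta^q)(\delta+\delta^q+\alpha)$ lies in $\F_q$, and each collapses to $\beta^{-k}=\tilde\beta\beta^{4-k}$, i.e.\ to $\tilde\beta\beta^4=1$. For the outer two I would use the key identities $N_0^q=\beta^{-4}N_4$ and $N_4^q=\beta^{-4}N_0$ (which follow from $\delta^{q^2}=\delta$ together with $\beta^{4q}=\beta^{-4}$), after which both equations again reduce to $\tilde\beta\beta^4=1$. Thus the hypothesis $\tilde\beta\beta^4=1$ forces $\tilde g=\nu\circ f\circ\rho$.

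Since $\rho$, $f$ and $\nu$ are bijections from $\mu_{q+1}$ to $\bP^1(\F_q)$, from $\bP^1(\F_q)$ to itself, and from $\bP^1(\F_q)$ to $\mu_{q+1}$ respectively, their composition $\nu\circ f\circ\rho$ permutes $\mu_{q+1}$; hence so do $\tilde g$ and $g$, and Lemma~\ref{L23} yields that $f_1$ permutes $\F_{q^2}$. The non-degeneracy hypotheses $\tilde\delta\notin\{\delta^4+\delta^2+\alpha\delta,\ \delta^{4q}+\delta^{2q}+\alpha\delta^q\}$ and $\delta+\delta^q+\alpha\neq 0$ are not needed for the bijectivity argument itself; they only guarantee that the leading, constant and middle coefficients of $h_1$ are nonzero, so that $f_1$ is genuinely a pentanomial. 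I expect the main obstacle to be the bookkeeping in the two outer coefficient identities, where the conjugation $N_0^q=\beta^{-4}N_4$ must be carried out carefully using $\delta^{q^2}=\delta$; the remaining verifications are routine once $\delta+\delta^q\in\F_q$ is exploited.
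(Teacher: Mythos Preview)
Your proof is correct and follows exactly the paper's approach: reduce via Lemma~\ref{L23} to showing $g(X)=X^4h_1(X)^{q-1}$ permutes $\mu_{q+1}$, rewrite $g$ on $\mu_{q+1}$ as the rational function $\tilde g$, and verify that $\tilde\beta\beta^4=1$ forces $\tilde g=\nu\circ f\circ\rho$. You actually supply more detail than the paper, which dispatches the coefficient matching as ``a routine calculation''; your observation that the non-degeneracy hypotheses serve only to keep all five coefficients of $h_1$ nonzero (so that $f_1$ is truly a pentanomial) is also accurate.
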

\begin{proof}
From Lemma~\ref{L23}, we know that $f_1$ is a permutation polynomial over $\F_{q^2}$ if and only if $\gcd(r_1,q-1)=1$ and $g_1(X):=X^{r_1}h_1(X)^{q-1}$ permutes $\mu_{q+1}$. It is clear that $\gcd(4, q-1)=1$. Therefore, it is sufficient to show that if ${\tilde \beta}\beta^4=1$ then $g_1$ permutes $\mu_{q+1}$. Note that $h_1(X)=0$ has no solution in $\mu_{q+1}$ when $\tilde \delta \not \in \{ \delta^4+\delta^2+\alpha \delta, \delta^{4q}+\delta^{2q}+\alpha \delta^q\}$,  and $\delta+\delta^q+\alpha \neq 0$. Now for any $z \in \mu_{q+1}$, consider
 \[
  {\tilde g_1}(z):=z^4\frac{h_1(z)^{q}}{h_1(z)}= \frac{(N_0 + \beta^4{\tilde \delta})^{q}z^4+N_1^{q}z^3 +N_2^{q}z^2+N_3^{q}z+(N_4+{\tilde \delta})^{q}}{(N_4+{\tilde \delta}) z^4+N_3z^3+N_2z^2+N_1z+(N_0 + \beta^4{\tilde \delta})}.
 \]
It is easy to observe that $g_1$ is a bijection of $\mu_{q+1}$ if and only if $X \mapsto {\tilde g_1}(X)$ is a bijection of $\mu_{q+1}$. 
Recall that 
$$
\nu \circ f \circ \rho ~(X) = {\tilde \beta} \frac{(N_4+{\tilde \delta}^q) X^4+N_3X^3+N_2X^2+N_1X+(N_0 + \beta^4{\tilde \delta}^q) }{(N_4+{\tilde \delta}) X^4+N_3X^3+N_2X^2+N_1X+(N_0 + \beta^4{\tilde \delta}) }.
$$
It can be verified via a routine calculation that if ${\tilde \beta}\beta^4=1$ then ${\tilde g_1}= \nu \circ f \circ \rho$. This completes the proof.
\end{proof}

\begin{exmp}\label{EX41}
For $q=4$, $f(X)=(b^2+b+1)X^{13}+X^{10}+(b^2+b+1)X^7+(b^2+1)X^4+(b+1)X$ is a permutation polynomial over the finite field $\F_{q^2}$, where $b$ is generator of the cyclic group $\F_{q^2}^{*}$. We have obtained PP by choosing $\beta=1$, $\delta=b^3=\tilde{\delta}$ and $\alpha=1$ for the Theorem \ref{T60}. In addition, take $q=2^8$, $\beta=1$, $\alpha = b^{q+1} + b^{(q+1)(q-2)}$ and $(\delta, \tilde{\delta})=(b^{321}, b^{47351})$, where $b$ is a primitive element of $\F_{q^2}$ satisfying $b^{16}+b^5+b^3+b^2+1=0$.
Then we obtain from Theorem \ref{T60} a permutation $f(X) = b^{8722}X^{4q} +b^{48830}X^{3q+1} + b^{53713}X^{2q+2} + b^{48830}X^{q+3} + b^{47311}X^4$ over $\F_{q^2}$.
\end{exmp}

\begin{thm}\label{T61}
 Let $h_2(X)=(N_0 +\beta^4  {\tilde \delta})X^{q}+(N_4+{\tilde \delta})X^3 +N_3X^2+ N_2X+ N_1,$ where $N_0$, $N_1,$ $N_2,$ $N_3$ and $N_4$ are as given in system~\eqref{C60} and $r_2=2$. Moreover, $\delta$ and $\tilde \delta$ in the expressions of $N_i$'s, $0 \leq i \leq 4$ are such that $\delta, \tilde \delta \in \F_{q^2}\setminus\F_{q}$, $\tilde \delta \not \in \{ \delta^4+\delta^2+\alpha \delta, \delta^{4q}+\delta^{2q}+\alpha \delta^q\}$, $\delta+\delta^q+\alpha \neq 0$ and $X^3+X+\alpha$ is irreducible over $\F_q$.
If ${\tilde \beta}\beta^4=1$ for some ${\tilde \beta} \in \mu_{q+1}$,
Then $f_2(X)=X^{r_2}h_2(X^{q-1})$ is a PP over $\F_{q^2}$.
\end{thm}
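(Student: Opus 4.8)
The plan is to mirror the proof of Theorem~\ref{T60} and, in fact, to reduce the present claim directly to it. First I would invoke Lemma~\ref{L23} with $r_2 = 2$ and $s = q-1$: the map $f_2(X) = X^2 h_2(X^{q-1})$ permutes $\F_{q^2}$ precisely when $\gcd(2, q-1) = 1$ and $g_2(X) := X^2 h_2(X)^{q-1}$ permutes $\mu_{q+1}$. Since $q$ is even, $q-1$ is odd and the first condition holds automatically, so the whole matter reduces to showing that $g_2$ is a bijection of $\mu_{q+1}$.

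The key observation is that $h_2$ is not an independent object but is tied to the polynomial $h_1$ of Theorem~\ref{T60} through the relation already recorded in Lemma~\ref{L61}: for every $z \in \mu_{q+1}$ one has $h_2(z) = h_1(z)/z$. In particular $h_2$ has no zero on $\mu_{q+1}$, so $g_2$ is well-defined there. This single identity is the bridge on which the entire argument rests.

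Next I would exploit $z^q = z^{-1}$ for $z \in \mu_{q+1}$ to transport $g_2$ to $g_1$ on the unit circle. Using $h_2(z) = h_1(z)/z$ together with $h_2(z)^q = h_1(z)^q/z^q = z\,h_1(z)^q$, a short computation yields
\[
g_2(z) = z^2\,\frac{h_2(z)^q}{h_2(z)} = z^2\cdot\frac{z\,h_1(z)^q}{h_1(z)/z} = z^4\,\frac{h_1(z)^q}{h_1(z)} = g_1(z),
\]
so that $g_2$ and $g_1$ coincide as functions on $\mu_{q+1}$. Since Theorem~\ref{T60}, under the very same hypotheses (in particular $\tilde\beta\beta^4 = 1$), already guarantees that $g_1$ permutes $\mu_{q+1}$, the same holds for $g_2$, and hence $f_2$ permutes $\F_{q^2}$.

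There is no genuine obstacle here once the relation $h_2(z) = h_1(z)/z$ from Lemma~\ref{L61} is in hand; the only point requiring care is the bookkeeping of the powers of $z$, namely confirming that the two extra factors of $z$ picked up (one from $h_2(z)^q$ and one from $1/h_2(z)$) combine with the prefactor $z^2$ to produce exactly $z^4$, matching $g_1$ rather than some other shift. As in Theorem~\ref{T60}, one could instead verify directly that $z^2 h_2(z)^q/h_2(z)$ equals $\nu \circ f \circ \rho$ whenever $\tilde\beta\beta^4 = 1$, but routing through $g_1$ avoids repeating that coefficient-matching computation and makes the dependence on the hypothesis $\tilde\beta\beta^4=1$ transparent.
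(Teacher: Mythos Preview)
Your proof is correct and takes a slightly different route from the paper's. The paper proceeds by directly verifying that the rational function $\tilde g_2(X) := X^2\, h_2^{(q)}(1/X)/h_2(X)$ (where $h_2^{(q)}$ is obtained from $h_2$ by raising all coefficients to the $q$-th power) coincides with $\nu\circ f\circ\rho$ on $\mu_{q+1}$ whenever $\tilde\beta\beta^4=1$, via a ``routine calculation'' that it does not spell out. You instead exploit the identity $h_2(z)=h_1(z)/z$ from Lemma~\ref{L61} to show $g_2=g_1$ on $\mu_{q+1}$ and then invoke Theorem~\ref{T60}. Your approach is more economical: it makes the structural parallelism between the two theorems explicit and avoids repeating the coefficient-matching step, at the modest cost of making Theorem~\ref{T61} logically dependent on Theorem~\ref{T60} rather than self-contained. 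Both arguments ultimately rest on the same bijection $\nu\circ f\circ\rho$, so there is no difference in depth.
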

\begin{proof}
Since $\gcd(2, q-1)$ is always equal to one, in view of Lemma~\ref{L23}, we only need to show that if ${\tilde \beta}\beta^4=1$ then $g_2(X)=X^2h_2(X)^{q-1}$ permutes $\mu_{q+1}$. From Lemma~\ref{L61}, we know that $h_2(X)=0$ has no solution in $\mu_{q+1}$ and hence $g_2$ permutes $\mu_{q+1}$ if and only if the rational function
\[
 {\tilde g_2}(X):= X^2 \frac{h_2^{(q)} \left( \frac{1}{X}\right)}{h_2(X)}
\]
where $h_2^{(q)}(X)$ is the polynomial obtained from $h_2(X)$ by raising all coefficients to the $q$-th power, permutes $\mu_{q+1}$. A routine calculation shows that if ${\tilde \beta}\beta^4=1$ then ${\tilde g_2}(z) = \nu \circ f \circ \rho (z)$ for all $z \in \mu_{q+1}$. Thus, if ${\tilde \beta}\beta^4=1$ then ${\tilde g_2}$ is a permutation of $\mu_{q+1}$. This completes the proof.
\end{proof}

\begin{rmk}
 In the construction of permutation polynomials of the form $f(X)=X^rh(X^{q-1})$ over finite field $\F_{q^2}$, the exponents in the polynomial $h(X)$ can be viewed as modulo $(q+1)$.
%
 We obtained the same class of permutation pentanomials from $h_3(X)$ and $h_4(X)$, So we are not including results from these two functions.
\end{rmk}

\begin{thm}\label{T64}
 Let $h_5(X) = (N_0 + \beta^4{\tilde \delta})X^{q-3}+N_1X^{q-2}+N_2X^{q-1}+N_3X^q+(N_4+{\tilde \delta}),$ where $N_0$, $N_1,$ $N_2,$ $N_3$ and $N_4$ are as given in system~\eqref{C60} and $r_5=q-3$. Moreover, $\delta$ and $\tilde \delta$ in the expressions of $N_i$'s, $0 \leq i \leq 4$ are such that $\delta, \tilde \delta \in \F_{q^2}\setminus\F_{q}$, $\tilde \delta \not \in \{ \delta^4+\delta^2+\alpha \delta, \delta^{4q}+\delta^{2q}+\alpha \delta^q\}$, $\delta+\delta^q+\alpha \neq 0$ and $X^3+X+\alpha$ is irreducible over $\F_q$. Then $f_5(X)=X^{r_5}h_5(X^{q-1})$ is a PP over $\F_{q^2}$ when the elements $\beta, \tilde{\beta} \in \mu_{q+1}$ satisfy ${\tilde \beta}\beta^4+1=0$.
\end{thm}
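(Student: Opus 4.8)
The plan is to follow the same template as in Theorems~\ref{T60} and~\ref{T61}, invoking Lemma~\ref{L23} with base field $\F_{q^2}$, modulus $s=q-1$, and $r=r_5=q-3$. By that lemma, $f_5(X)=X^{q-3}h_5(X^{q-1})$ permutes $\F_{q^2}$ if and only if $\gcd(q-3,q-1)=1$ and $g_5(X):=X^{q-3}h_5(X)^{q-1}$ permutes $\mu_{q+1}$. Since $q$ is even, $q-3$ is odd and $\gcd(q-3,q-1)=\gcd(q-3,2)=1$, so the first condition holds automatically and everything reduces to proving that $g_5$ is a bijection of $\mu_{q+1}$.

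For the second condition I would first record, via Lemma~\ref{L61}, that $h_5$ has no zero on $\mu_{q+1}$ under the stated hypotheses $\tilde\delta\notin\{\delta^4+\delta^2+\alpha\delta,\ \delta^{4q}+\delta^{2q}+\alpha\delta^q\}$ and $\delta+\delta^q+\alpha\neq0$, so that $g_5$ is well defined on $\mu_{q+1}$ and may be written as $g_5(z)=z^{q-3}h_5(z)^q/h_5(z)$. The crucial observation, already noted in the remark preceding this theorem, is that the exponents of $h_5$ are to be read modulo $q+1$; equivalently, for $z\in\mu_{q+1}$ one has $z^q=z^{-1}$, so that $h_5$ evaluated at $z$ agrees with the degree-$4q$ representative appearing in Lemma~\ref{L61}. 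In particular Lemma~\ref{L61} yields the identity $h_5(z)=h_1(z)/z^4$ for every $z\in\mu_{q+1}$, where $h_1(X)=(N_4+\tilde\delta)X^4+N_3X^3+N_2X^2+N_1X+(N_0+\beta^4\tilde\delta)$.

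The remaining step is a bookkeeping of powers of $z$ that transports $g_5$ back to the rational function already analysed in Theorem~\ref{T60}. Using $z^{q-3}=z^{-4}$ and $h_5(z)^q=(z^{-4}h_1(z))^q=z^{-4q}h_1(z)^q=z^4h_1(z)^q$ (again because $z^q=z^{-1}$, whence $z^{-4q}=z^4$), I would compute
\[
g_5(z)=z^{-4}\,\frac{z^4 h_1(z)^q}{z^{-4}h_1(z)}=z^4\,\frac{h_1(z)^q}{h_1(z)}=\tilde g_1(z),
\]
which is precisely the rational function $\tilde g_1$ from the proof of Theorem~\ref{T60}. Since $q$ is even, the hypothesis $\tilde\beta\beta^4+1=0$ is identical to $\tilde\beta\beta^4=1$, and under this condition the computation in Theorem~\ref{T60} shows that $\tilde g_1=\nu\circ f\circ\rho$. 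As $\nu\circ f\circ\rho$ is a bijection of $\mu_{q+1}$, so is $g_5$, and the proof concludes by Lemma~\ref{L23}.

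I expect no genuine difficulty here: the argument is structurally identical to Theorem~\ref{T60}, and the only real content is the exponent reduction modulo $q+1$ together with the clean cancellation above. The place that warrants care is the power bookkeeping in the displayed computation, in particular the identity $z^{-4q}=z^4$ on $\mu_{q+1}$ and the verification that the coefficient condition $\tilde\beta\beta^4=1$ matching $\tilde\beta\beta^4+1=0$ in characteristic two is exactly what makes $\tilde g_1$ coincide with $\nu\circ f\circ\rho$ rather than a mere scalar multiple of it.
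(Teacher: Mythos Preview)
Your proposal is correct and follows essentially the same approach as the paper, which simply records that ``the proof follows by using a similar argument as in Theorem~\ref{T61}.'' Your reduction via $h_5(z)=z^{-4}h_1(z)$ on $\mu_{q+1}$ (from Lemma~\ref{L61}) and the resulting identification $g_5=\tilde g_1$ is a clean shortcut that avoids re-verifying the coefficient system $N_i=D_j^q$ from scratch, but it is the same mechanism at work.
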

\begin{proof}
The proof follows by using a similar argument as in Theorem~\ref{T61}.
\end{proof}

\subsubsection*{\textbf{Subcase 1.2:}}\label{SS60}
In this subcase, we take the case where $f(X)=X^4+a X$, where $a$ is nonzero and non-cube element in $\F_q$. Now consider the rational function $\nu \circ f \circ \rho : \mu_{q+1} \rightarrow \mu_{q+1}$ which permutes the set $\mu_{q+1}$ and is given by $\nu \circ f \circ \rho(X)$, which is equal to,
\begin{align*}
& \Tilde{\beta}\frac{X+{\Tilde{\delta}}^q}{X+\Tilde{\delta}} \circ (X^4+a X) \circ \frac{\delta X+\beta \delta^q}{X+\beta}\\
= & \Tilde{\beta}\frac{X+{\Tilde{\delta}}^q}{X+\Tilde{\delta}} \circ \left( \left(\frac{\delta X+\beta \delta^q}{X+\beta}\right)^4+a \left(\frac{\delta X+\beta \delta^q}{X+\beta}\right)\right)\\
= & \Tilde{\beta}\frac{\left(\frac{\delta X+\beta \delta^q}{X+\beta}\right)^4+a \left(\frac{\delta X+\beta \delta^q}{X+\beta}\right)+{\Tilde{\delta}}^q}{ \left(\frac{\delta X+\beta \delta^q}{X+\beta}\right)^4+a \left(\frac{\delta X+\beta \delta^q}{X+\beta}\right)+\Tilde{\delta}}\\
= & \Tilde{\beta} \frac{(\delta^4+a\delta+{\Tilde{\delta}}^q)X^4+a(\delta+\delta^q)\left(\beta X^3+\beta^2X^2+\beta^3 X\right)+\beta^4(a\delta^q+{\Tilde{\delta}}^q+\delta^{4q})}{(\delta^4+a\delta+{\Tilde{\delta}})X^4+a(\delta+\delta^q)\left(\beta X^3+\beta^2X^2+\beta^3X\right)+\beta^4(a\delta^q+{\Tilde{\delta}}+\delta^{4q})}\\
= & \Tilde{\beta} \frac{(N_4+{\Tilde{\delta}}^q)X^4+N_3 X^3+N_2 X^2+N_1 X+(N_0+\beta^4 {\Tilde{\delta}}^q)} {(N_4+{\Tilde{\delta}})X^4+N_3 X^3+N_2 X^2+N_1 X+(N_0+\beta^4 {\Tilde{\delta}})}\\
\end{align*}
where
 \begin{equation}\label{C61}
  \begin{cases}
   N_0 &= \beta^4(a\delta^q+\delta^{4q}),\\
   N_1 &= a\beta^3(\delta+\delta^q),\\
   N_2 &= a\beta^2(\delta+\delta^q),\\
   N_3 &= a\beta(\delta+\delta^q),\\
   N_4 &= \delta^4+a\delta.
  \end{cases}
 \end{equation}
  \begin{rmk}\label{R61}
Let $N_0, N_1, N_2, N_3$ and $N_4$ are as given in system~\eqref{C61}. Then the equation
$$(N_4+{\tilde \delta})X^4 +N_3X^3+ N_2X^2+ N_1X+(N_0 + \beta^4{\tilde \delta}) = 0$$ has no solution $X \in \mu_{q+1}$, as $\nu \circ f \circ \rho$ is a bijection of $\mu_{q+1}$.
 \end{rmk}
 
 Next, we give the following lemma which will be used in constructing rational functions that permutes the set $\mu_{q+1}$.

\begin{lem} \label{L62}
 Let
 \begin{equation*}
  \begin{split}
   h_1(X)&= (N_4+{\tilde \delta})X^4 +N_3X^3+ N_2X^2+ N_1X+(N_0 + \beta^4{\tilde \delta}),\\
   h_2(X)&= (N_0 +\beta^4  {\tilde \delta})X^q+(N_4+{\tilde \delta})X^3 +N_3X^2+ N_2X+ N_1,\\
   h_3(X)&= (N_0 + \beta^4 {\tilde \delta})X^{2q}+N_1X^q+(N_4+{\tilde \delta})X^2 +N_3X+ N_2, \\
   h_4(X) &= (N_0 +\beta^4 {\tilde \delta})X^{3q}+N_1X^{2q}+N_2X^q+(N_4+{\tilde \delta})X +N_3, \\
   h_5(X) &= (N_0 + \beta^4{\tilde \delta})X^{4q}+N_1X^{3q}+N_2X^{2q}+N_3X^q+(N_4+{\tilde \delta}),
  \end{split}
 \end{equation*}
where $N_0, N_1, N_2, N_3$ and $N_4$ are as given in system~\eqref{C61}. Then the following statements are equivalent:
\begin{enumerate}[(a)]
 \item $h_1(X)=0$ has no solution $X \in \mu_{q+1}$,
 \item $h_2(X)=0$ has no solution $X \in \mu_{q+1}$,
 \item $h_3(X)=0$ has no solution $X \in \mu_{q+1}$,
 \item $h_4(X)=0$ has no solution $X \in \mu_{q+1}$,
 \item $h_5(X)=0$ has no solution $X \in \mu_{q+1}$.
\end{enumerate}
\end{lem}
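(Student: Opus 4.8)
The plan is to reduce all five conditions to the single condition on $h_1$, exactly in the spirit of the proof of Lemma~\ref{L61}, by exploiting the defining relation of the unit circle. The crucial observation is that every $z \in \mu_{q+1}$ satisfies $z^{q+1}=1$, hence $z^q = z^{-1}$; this lets me collapse each Frobenius-type exponent $z^{q}, z^{2q}, z^{3q}, z^{4q}$ appearing in $h_2,\dots,h_5$ into a negative power of $z$.

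First I would substitute $z^q = z^{-1}$ into each $h_i(z)$ and clear the resulting negative powers by multiplying through by the appropriate power of $z$. A direct comparison of coefficients then shows that for every $z\in\mu_{q+1}$,
$$
h_2(z)=\frac{h_1(z)}{z},\quad h_3(z)=\frac{h_1(z)}{z^2},\quad h_4(z)=\frac{h_1(z)}{z^3},\quad h_5(z)=\frac{h_1(z)}{z^4}.
$$
Indeed, each $h_{i+1}$ is obtained from $h_1$ by a cyclic shift of the coefficient list $\big((N_4+\tilde{\delta}),\,N_3,\,N_2,\,N_1,\,(N_0+\beta^4\tilde{\delta})\big)$ together with the replacement of the corresponding top-degree terms by $X^{kq}$ terms; under $z^q=z^{-1}$ this shift amounts precisely to multiplying the value by $z^{-1}$.

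Since every $z\in\mu_{q+1}$ is nonzero, the identities above give $h_i(z)=0 \iff h_1(z)=0$ for each $i\in\{2,3,4,5\}$. Consequently $h_i(X)=0$ has a solution in $\mu_{q+1}$ if and only if $h_1(X)=0$ does, which chains together and establishes the equivalence of statements (a)--(e). In particular the vanishing sets of $h_1,\dots,h_5$ on $\mu_{q+1}$ all coincide.

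The only step requiring genuine care is the exponent bookkeeping: one must verify that reducing $z^{2q},z^{3q},z^{4q}$ via $z^q=z^{-1}$ and multiplying by the right power of $z$ returns exactly the ordered coefficient list of $h_1$, with no stray index shift. This is routine, and is the sole place where a miscount could break the argument; I do not expect any substantive obstacle, since the computation follows the same pattern already carried out for the $h_i$ in Lemma~\ref{L61}.
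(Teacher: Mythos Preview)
Your proposal is correct and follows essentially the same approach as the paper: the paper's proof simply refers back to Lemma~\ref{L61}, whose argument consists of exactly the identities $h_{i+1}(z)=h_1(z)/z^{i}$ for $z\in\mu_{q+1}$ that you derive, together with the observation that $z\neq 0$.
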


\begin{proof} 
The proof follows by using a similar argument as in Lemma~\ref{L61}.
\end{proof}

We shall now construct three classes of permutation pentanomials over finite fields $\F_{q^2}$ using Lemma~\ref{L62} and the rational function $\nu \circ f \circ \rho$, .
\begin{thm}\label{T65}
Let $h_1(X)=(N_4+{\tilde \delta}) X^4+N_3X^3+N_2X^2+N_1X+(N_0 + \beta^4{\tilde \delta})$, where $N_0$, $N_1,$ $N_2,$ $N_3$ and $N_4$ are as given in system~\eqref{C61} and $r_1=4$. Moreover, $\delta$ and $\tilde \delta$ in the expressions of $N_i$'s, $0 \leq i \leq 4$ are such that $\delta, \tilde \delta \in \F_{q^2}\setminus\F_{q}$ and $\tilde \delta \not \in \{ \delta^4+a \delta, \delta^{4q}+a \delta^q\}$ where $a$ is nonzero and non-cube element in $\F_q$.
Then $f_1(X)=X^{r_1}h_1(X^{q-1})$ is a PP over $\F_{q^2}$ when the elements $\beta, \tilde{\beta} \in \mu_{q+1}$ satisfy ${\tilde \beta}\beta^4+1=0$.
\end{thm}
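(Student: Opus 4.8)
The plan is to mirror the proof of Theorem~\ref{T60}, applying the multiplicative criterion of Lemma~\ref{L23} with $r=r_1=4$ and $s=q-1$. Since $q$ is even, $q-1$ is odd, so $\gcd(4,q-1)=1$ and the first condition of Lemma~\ref{L23} holds automatically. It therefore suffices to prove that, under the hypothesis $\tilde\beta\beta^4+1=0$, the rational map $g_1(X):=X^4 h_1(X)^{q-1}$ permutes $\mu_{q+1}$. First I would record that $h_1(X)=0$ has no root in $\mu_{q+1}$ under the stated conditions $\tilde\delta\notin\{\delta^4+a\delta,\,\delta^{4q}+a\delta^q\}$; this is precisely Remark~\ref{R61} (equivalently part~(a) of Lemma~\ref{L62}). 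Hence $g_1$ is well-defined on $\mu_{q+1}$ and maps this group into itself.

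Next, for $z\in\mu_{q+1}$ one has $z^q=z^{-1}$, so raising the coefficients of $h_1$ to the $q$-th power and substituting $z^{iq}=z^{-i}$ turns $z^4 h_1(z)^q$ into a degree-four polynomial in $z$. This yields the closed form
\[
\tilde g_1(z):=z^4\frac{h_1(z)^q}{h_1(z)}
=\frac{(N_0+\beta^4\tilde\delta)^q z^4+N_1^q z^3+N_2^q z^2+N_3^q z+(N_4+\tilde\delta)^q}{(N_4+\tilde\delta)z^4+N_3 z^3+N_2 z^2+N_1 z+(N_0+\beta^4\tilde\delta)},
\]
and $g_1$ permutes $\mu_{q+1}$ if and only if $\tilde g_1$ does, since the two maps agree on $\mu_{q+1}$. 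The heart of the argument is then to show that the hypothesis $\tilde\beta\beta^4=1$ forces $\tilde g_1=\nu\circ f\circ\rho$ on $\mu_{q+1}$; as $\nu\circ f\circ\rho$ is already known to be a bijection of $\mu_{q+1}$, this will finish the proof.

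The main (and essentially only) computational step is this last identification. The denominators of $\tilde g_1$ and of $\nu\circ f\circ\rho$ coincide verbatim, so it remains to match the numerators, that is, to verify the five coefficient identities
\[
(N_0+\beta^4\tilde\delta)^q=\tilde\beta(N_4+\tilde\delta^q),\quad N_1^q=\tilde\beta N_3,\quad N_2^q=\tilde\beta N_2,\quad N_3^q=\tilde\beta N_1,\quad (N_4+\tilde\delta)^q=\tilde\beta(N_0+\beta^4\tilde\delta^q).
\]
Using $a\in\F_q$ (so $a^q=a$), $\delta^{q^2}=\delta$, and $\beta\in\mu_{q+1}$ (so $\beta^q=\beta^{-1}$ and $\beta^{kq}=\beta^{-k}$), each $N_i^q$ computed from~\eqref{C61} collapses to a power of $\beta$ times the appropriate $N_j$. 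For instance $N_2^q=a\beta^{2q}(\delta+\delta^q)=\beta^{-4}\cdot a\beta^2(\delta+\delta^q)=\beta^{-4}N_2$, and likewise $N_1^q=\beta^{-4}N_3$ and $N_3^q=\beta^{-4}N_1$; the two outer identities reduce the same way after expanding $N_0^q$ and $N_4^q$. All five identities hold precisely when $\tilde\beta=\beta^{-4}$, i.e. $\tilde\beta\beta^4=1$, which in characteristic two is exactly the stated condition $\tilde\beta\beta^4+1=0$. I expect no genuine obstacle beyond bookkeeping of these Frobenius-twisted powers of $\beta$; the structure is identical to Theorem~\ref{T60}, with the coefficients~\eqref{C61} attached to $X^4+aX$ replacing those~\eqref{C60} attached to $X^4+X^2+\alpha X$.
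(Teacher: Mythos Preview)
Your proposal is correct and follows exactly the approach of the paper, which simply states that the techniques of Theorem~\ref{T60} yield the proof. In fact you supply more detail than the paper does, explicitly writing out and checking the five coefficient identities that force $\tilde g_1=\nu\circ f\circ\rho$ under the hypothesis $\tilde\beta\beta^4=1$.
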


\begin{proof} Using the techniques similar to that of Theorem ~\ref{T60} yields the proof.
\end{proof}

\begin{exmp}\label{EX42}
Let $b$ be a primitive element of $\F_{q^2}$ with $q=4$. Take $\beta=b^3=\tilde{\beta}$, $\delta=b=\tilde{\delta}$ and $a=b^2+b$. From the above Theorem \ref{T65}
we get $f(X)=(b^2+1)X^{13}+(b^3+b^2+b)X^{10}+(b^3+1)X^7+(b^3+b^2)X^4+(b^3+b^2+1)X$ that is a permutation polynomial over $\F_{q^2}$. Furthermore, take $q=2^6$, $\beta=b^{3087}$, $a = b^{q+1}$, $(\delta, \tilde{\delta})=(b^{3894}, b^{3990})$, where $b$ is a primitive element of $\F_{q^2}$ satisfying $b^{12}+b^7+b^6+b^5+b^3+b+1=0$.
Then we obtain from Theorem \ref{T65} a permutation $f(X) = b^{28}X^{4q} +b^{3477}X^{3q+1} + b^{2469}X^{2q+2} + b^{1461}X^{q+3} + b^{3107}X^4$ over $\F_{q^2}$.
\end{exmp}

\begin{thm}\label{T66}
 Let $h_2(X)=(N_0 +\beta^4  {\tilde \delta})X^{q}+(N_4+{\tilde \delta})X^3 +N_3X^2+ N_2X+ N_1,$ where $N_0$, $N_1,$ $N_2,$ $N_3$ and $N_4$ are as given in system~\eqref{C61} and $r_2=2$. Moreover, $\delta$ and $\tilde \delta$ in the expressions of $N_i$'s, $0 \leq i \leq 4$ are such that $\delta, \tilde \delta \in \F_{q^2}\setminus\F_{q}$ and $\tilde \delta \not \in \{ \delta^4+a \delta, \delta^{4q}+a \delta^q\}$ where $a$ is nonzero and non-cube element in $\F_q$. If ${\tilde \beta}\beta^4=1$ for some ${\tilde \beta} \in \mu_{q+1}$, then $f_2(X)=X^{r_2}h_2(X^{q-1})$ is a PP over $\F_{q^2}$.
\end{thm}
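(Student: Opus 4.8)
The plan is to apply the multiplicative AGW criterion of Lemma~\ref{L23} with $r=r_2=2$ and $s=q-1$, which reduces the claim to a permutation statement on the unit circle $\mu_{q+1}$. Because $q$ is even, $q-1$ is odd and the coprimality condition $\gcd(2,q-1)=1$ is automatic, so the whole task is to show that $g_2(X):=X^2h_2(X)^{q-1}$ permutes $\mu_{q+1}$ under the hypothesis ${\tilde \beta}\beta^4=1$. First I would note that $g_2$ is well defined on $\mu_{q+1}$: Lemma~\ref{L62}, together with $\tilde\delta\notin\{\delta^4+a\delta,\ \delta^{4q}+a\delta^q\}$, guarantees that $h_2$ has no zero on $\mu_{q+1}$.

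The key observation is that on $\mu_{q+1}$ one has $z^{q}=z^{-1}$, and that the polynomials of Lemma~\ref{L62} are cyclic reindexings of one another; concretely $h_2(z)=h_1(z)/z$ for every $z\in\mu_{q+1}$. Substituting this into $g_2$ and again using $z^q=z^{-1}$ gives
\[
g_2(z)=z^2\bigl(h_1(z)/z\bigr)^{q-1}=z^{3-q}h_1(z)^{q-1}=z^{4}h_1(z)^{q-1},
\]
which is exactly the map $g_1(X)=X^4h_1(X)^{q-1}$ appearing in the proof of Theorem~\ref{T65}. Hence $g_2$ and $g_1$ induce the same function on $\mu_{q+1}$, and it suffices to know that $g_1$ permutes $\mu_{q+1}$—which is precisely what the proof of Theorem~\ref{T65} establishes under ${\tilde \beta}\beta^4=1$, by identifying $z^4 h_1(z)^{q-1}$ with the bijection $\nu\circ f\circ\rho$ of $\mu_{q+1}$.

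If one prefers to argue directly in the style of Theorem~\ref{T61}, the same conclusion is reached by rewriting $g_2(z)=z^2h_2(z)^{q}/h_2(z)$ and clearing the $q$-th powers of $z$ via $z^q=z^{-1}$, obtaining the rational function
\[
{\tilde g_2}(X):=X^2\,\frac{h_2^{(q)}(1/X)}{h_2(X)},
\]
where $h_2^{(q)}$ has the coefficients of $h_2$ raised to the $q$-th power. One then verifies that $\tilde g_2=\nu\circ f\circ\rho$ on $\mu_{q+1}$ exactly when ${\tilde \beta}\beta^4=1$, and since $\nu\circ f\circ\rho$ is a bijection of $\mu_{q+1}$ by construction, $\tilde g_2$—and therefore $g_2$—permutes $\mu_{q+1}$, so $f_2$ permutes $\F_{q^2}$.

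The only genuine obstacle is the final identification, in whichever route is taken. In the first route it is bundled into Theorem~\ref{T65}; in the second it is the coefficient bookkeeping showing that conjugating the coefficients $N_0,\dots,N_4$ of \eqref{C61} by $x\mapsto x^q$ and reindexing exponents modulo $q+1$ reproduces the numerator and denominator of $\nu\circ f\circ\rho$, with the scalar ${\tilde \beta}$ absorbed precisely under ${\tilde \beta}\beta^4=1$. This step is purely computational and introduces no new idea beyond the template already used for the $h_1$-case.
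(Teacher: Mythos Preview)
Your proposal is correct and, in its second route, is exactly the paper's own argument: the paper proves Theorem~\ref{T66} by citing the template of Theorem~\ref{T61}, i.e., apply Lemma~\ref{L23} with $r=2$, use Lemma~\ref{L62} to ensure $h_2$ has no zero on $\mu_{q+1}$, rewrite $g_2(z)=z^2 h_2^{(q)}(1/z)/h_2(z)$, and check that this equals $\nu\circ f\circ\rho$ under ${\tilde\beta}\beta^4=1$. Your first route, reducing $g_2$ to $g_1$ via $h_2(z)=h_1(z)/z$ on $\mu_{q+1}$ and then invoking Theorem~\ref{T65}, is a valid and slightly cleaner shortcut that the paper does not spell out, but it is not a genuinely different idea---it simply packages the same identification with $\nu\circ f\circ\rho$ one step earlier.
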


\begin{proof}
An analogous argument adopted in Theorem ~\ref{T61} leads directly to the proof.
\end{proof}

\begin{rmk}
 In constructing permutation polynomials of the form $f(X)=X^rh(X^{q-1})$ over finite field $\F_{q^2}$, the exponents in the polynomial $h(X)$ can be viewed as modulo $(q+1)$.
%
 We are not getting new classes of permutation polynomials from $h_3(x)$ and $h_4(x)$, so we are not including  those results here.
\end{rmk}

Similarly we have the following theorem, 
\begin{thm}\label{T69}
 Let $h_5(X) = (N_0 + \beta^4{\tilde \delta})X^{q-3}+N_1X^{q-2}+N_2X^{q-1}+N_3X^q+(N_4+{\tilde \delta}),$ where $N_0$, $N_1,$ $N_2,$ $N_3$ and $N_4$ are as given in system~\eqref{C61} and $r_5=q-3$. Moreover, $\delta$ and $\tilde \delta$ in the expressions of $N_i$'s, $0 \leq i \leq 4$ are such that $\delta, \tilde \delta \in \F_{q^2}\setminus\F_{q}$ and $\tilde \delta \not \in \{ \delta^4+a \delta, \delta^{4q}+a \delta^q\}$ where $a$ is nonzero and non-cube element in $\F_q$. Then $f_5(X)=X^{r_5}h_5(X^{q-1})$ is a PP over $\F_{q^2}$ when the elements $\beta, \tilde{\beta} \in \mu_{q+1}$ satisfy ${\tilde \beta}\beta^4+1=0$.
\end{thm}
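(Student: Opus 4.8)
The plan is to follow the established template of Theorems~\ref{T61}, \ref{T64}, and \ref{T65}, since $f_5(X)=X^{r_5}h_5(X^{q-1})$ with $r_5=q-3$ is built from the fifth auxiliary polynomial $h_5$ associated with the Subcase~1.2 data in system~\eqref{C61}. First I would invoke Lemma~\ref{L23} with $s=q-1$: the map $f_5$ permutes $\F_{q^2}$ precisely when $\gcd(r_5,q-1)=1$ and $g_5(X):=X^{r_5}h_5(X)^{q-1}$ permutes $\mu_{q+1}$. Since $q$ is even, $r_5=q-3$ is odd, and in fact $q-3\equiv -2\pmod{q-1}$, so $\gcd(q-3,q-1)=\gcd(2,q-1)=1$; thus the first condition is automatic, exactly as in the analogous theorems.

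The second step is to show that $g_5$ permutes $\mu_{q+1}$ under the hypothesis $\tilde\beta\beta^4+1=0$. By Lemma~\ref{L62}, the condition $\tilde\delta\not\in\{\delta^4+a\delta,\delta^{4q}+a\delta^q\}$ guarantees (via the equivalence with $h_1$ of Remark~\ref{R61}) that $h_5(X)=0$ has no solution in $\mu_{q+1}$, so $g_5(z)=z^{r_5}h_5(z)^{q-1}=z^{r_5}h_5(z)^q/h_5(z)$ is well-defined on $\mu_{q+1}$. For $z\in\mu_{q+1}$ one has $z^q=z^{-1}$, so raising $h_5$ to the $q$-th power and using $z^{q}=1/z$ converts $g_5(z)$ into a ratio of a degree-four numerator and the degree-four denominator $(N_4+\tilde\delta)z^4+N_3z^3+N_2z^2+N_1z+(N_0+\beta^4\tilde\delta)$, matching the denominator of $\nu\circ f\circ\rho$. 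Concretely, I would introduce
\[
 \tilde g_5(X):=X^{r_5}\frac{h_5^{(q)}(1/X)}{h_5(X)},
\]
where $h_5^{(q)}$ denotes $h_5$ with all coefficients replaced by their $q$-th powers, and observe that $g_5$ permutes $\mu_{q+1}$ if and only if $\tilde g_5$ does.

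The crux is then the verification that the normalization $\tilde\beta\beta^4=1$ forces $\tilde g_5=\nu\circ f\circ\rho$ as rational functions on $\mu_{q+1}$; once this identity holds, the conclusion is immediate because $\nu\circ f\circ\rho$ is a bijection of $\mu_{q+1}$ by construction (it is $\nu$, $\rho$, and the permutation $X^4+aX$ of $\bP^1(\F_q)$ composed). This matching is the same routine-but-delicate coefficient bookkeeping that underlies Theorems~\ref{T60}--\ref{T66}: one checks that the coefficient $N_4+\tilde\delta^q$ in the numerator of $\nu\circ f\circ\rho$ equals $(N_0+\beta^4\tilde\delta)^q$ up to the factor $\tilde\beta$, and analogously for the remaining coefficients, precisely when $\tilde\beta\beta^4=1$. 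I expect this coefficient-by-coefficient identification to be the main obstacle, as it requires carefully tracking the $q$-power action on the $N_i$ in~\eqref{C61} together with the relations $\beta^{q+1}=1$ and $\tilde\beta^{q+1}=1$; however, since $h_5$ is obtained from $h_1$ merely by the substitution $z\mapsto 1/z$ on $\mu_{q+1}$ (equivalently, reversing the coefficient order and applying the $q$-power map), the computation is structurally identical to that already carried out for $h_1$ in Theorem~\ref{T65}, and the stated proof rightly reduces to that argument.
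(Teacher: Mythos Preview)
Your proposal is correct and follows exactly the template the paper uses for the preceding theorems; indeed, the paper gives no separate proof for Theorem~\ref{T69} at all, simply prefacing it with ``Similarly we have the following theorem,'' so your explicit write-up via Lemma~\ref{L23}, Lemma~\ref{L62}, and the identification $\tilde g_5=\nu\circ f\circ\rho$ under $\tilde\beta\beta^4=1$ is precisely what is intended.
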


\begin{rmk}
For even $q$, Remark~\ref{R1} implies that any class of permutation pentanomials derived from a degree-four permutation rational function has already been studied either in Subcase 1.1 or Subcase 1.2 discussed above. Recently, Rai and Gupta~\cite[Theorem 3.3]{Rai} characterized a class of permutation pentanomials over $\F_{q^2}$ with its coefficients in $\F_q$, where $q=2^m$ and $m \geq 5$. Their result~\cite{Rai}  was obtained by determining the permutation behavior of degree four rational function. Consequently, the permutation pentanomials they identified form a subclass of those introduced in this paper.
\end{rmk}

\subsubsection*{\textbf{Subcase 1.3:}} In this subcase, we consider  $f(X)=X^4$ and give two classes of permutation binomials by putting $a=0$ in the above Subcase 1.2. Using same techniques as above, we summarise the permutation binomials together in the following theorem.
\begin{thm}\label{T70} For $\beta \in \mu_{q+1}$ satisfying ${\tilde \beta}\beta^4=1$, $\delta, \tilde \delta \in \F_{q^2}\backslash \F_{q}$ and $\tilde \delta \not \in \{\delta^4, \delta^{4q}\}$, we have the following
 \begin{enumerate}
\item $f_2(X)=X^2 h_2(X^{q-1})$, where $h_2(X)= (\beta^4 \delta^{4q} +\beta^4  {\tilde \delta})X^q+(\delta^4+{\tilde \delta})X^3$, is a permutation binomial over $\F_{q^2}$.

\item $f_5(X)=X^{q-3}h_5(X^{q-1})$, where $h_5(X)=(\beta^4 \delta^{4q} + \beta^4{\tilde \delta})X^{4q}+(\delta^4+{\tilde \delta})$, is a permutation binomial over $\F_{q^2}$.
\end{enumerate}
\end{thm}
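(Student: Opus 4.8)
The plan is to handle both binomials at once as the $a=0$ specialization of Subcase 1.2 and to verify the two hypotheses of Lemma~\ref{L23}. Setting $a=0$ in~\eqref{C61} kills the middle coefficients, leaving $N_1=N_2=N_3=0$, $N_0=\beta^4\delta^{4q}$ and $N_4=\delta^4$; writing $A:=\beta^4\delta^{4q}+\beta^4{\tilde \delta}$ and $B:=\delta^4+{\tilde \delta}$, this reduces the relevant polynomials to $h_2(X)=AX^q+BX^3$ and $h_5(X)=AX^{4q}+B$. The constraint ${\tilde \delta}\notin\{\delta^4,\delta^{4q}\}$ forces $A\neq0$ and $B\neq0$, so the associated rational function is genuinely of degree four. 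First I would clear the gcd condition of Lemma~\ref{L23}: since $q$ is even, $q^2-1$ is odd and $\gcd(2,q^2-1)=1$, while for $r_5=q-3$ one has $\gcd(q-3,q-1)=\gcd(q-3,2)=1$ and $\gcd(q-3,q+1)=\gcd(q-3,4)=1$ (both since $q-3$ is odd), giving $\gcd(q-3,q^2-1)=1$. It then remains only to prove that $g_2(X):=X^2h_2(X)^{q-1}$ and $g_5(X):=X^{q-3}h_5(X)^{q-1}$ permute $\mu_{q+1}$.

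The main computational step is to recognise these two maps as one and the same R\'edei-type function on $\mu_{q+1}$. On $\mu_{q+1}$ we may use $z^q=z^{-1}$ and $z^{q^2}=z$; raising coefficients to the $q$-th power, reducing the monomial exponents modulo $q+1$, and then clearing negative powers of $z$ by multiplying numerator and denominator by a suitable power of $z$, a routine calculation will give
\[
g_2(z)=g_5(z)=\frac{A^q z^4+B^q}{B z^4+A}\qquad(z\in\mu_{q+1}).
\]
Along the way I would also record that $h_2(z)=z^{-1}(Bz^4+A)$ and $h_5(z)=z^{-4}(Bz^4+A)$, so that both are nonzero scalar multiples of $Bz^4+A$ on $\mu_{q+1}$; hence $g_2,g_5$ are well defined there precisely when $Bz^4+A$ has no zero on $\mu_{q+1}$.

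Finally I would match the displayed function with $\nu\circ X^4\circ\rho$. Specialising the Subcase~1.2 computation to $a=0$ yields $\nu\circ X^4\circ\rho\,(X)={\tilde \beta}\big((\delta^4+{\tilde \delta}^q)X^4+\beta^4(\delta^{4q}+{\tilde \delta}^q)\big)/(BX^4+A)$, and the relations $\delta^{q^2}=\delta$, $\beta^q=\beta^{-1}$ give $A^q=\beta^{-4}(\delta^4+{\tilde \delta}^q)$ and $B^q=\delta^{4q}+{\tilde \delta}^q$, so the numerator equals $\beta^4(A^qX^4+B^q)$. Thus $\nu\circ X^4\circ\rho={\tilde \beta}\beta^4\,(A^qX^4+B^q)/(BX^4+A)$, which under the hypothesis ${\tilde \beta}\beta^4=1$ is exactly the function above. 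Since $\nu\circ X^4\circ\rho$ is a bijection of $\mu_{q+1}$ (a composite of the bijections $\rho$, the permutation $X^4$ of $\bP^1(\F_q)$ --- valid as $\gcd(4,q-1)=1$ --- and $\nu$), its denominator $Bz^4+A$ is zero-free on $\mu_{q+1}$, which simultaneously legitimises $g_2,g_5$ and shows they are bijective. Lemma~\ref{L23} then delivers the permutation property of $f_2$ and $f_5$ over $\F_{q^2}$. The only real obstacle I foresee is the exponent-mod-$(q+1)$ bookkeeping that collapses $X^q,X^3$ and $X^{4q}$ into the common degree-four shape; the remaining verifications are direct substitutions.
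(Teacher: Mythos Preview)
Your proposal is correct and follows essentially the same approach as the paper, which simply states that the result is obtained ``by putting $a=0$ in the above Subcase~1.2'' and ``using same techniques as above.'' You have carefully filled in those omitted details: the specialization $a=0$ in~\eqref{C61}, the gcd checks for Lemma~\ref{L23} (note only $\gcd(r,q-1)=1$ is needed, not $\gcd(r,q^2-1)=1$, though the latter of course suffices), the reduction of $g_2$ and $g_5$ on $\mu_{q+1}$ to the common form $(A^qz^4+B^q)/(Bz^4+A)$, and the identification of this with $\nu\circ X^4\circ\rho$ under $\tilde\beta\beta^4=1$.
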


\begin{rmk}
 In the above theorem, we are only considering the PPs corresponding to $h_2(x)$ and $h_5(x)$ only. This is because the polynomials obtained from $h_3(x)$ and $h_4(x)$ are same as the class obtained from $h_2(x)$ and the polynomials obtained from $h_1(x)$ are linearized polynomials over $\F_{2^n}$.
\end{rmk}

\begin{exmp}\label{EX43}
Let $q=4$ and $b$ be a primitive element of $\F_{q^2}$. By choosing $\beta=b^3=\tilde{\beta}$, $\delta=b$ and $\tilde{\delta}=b^3$ in the Theorem \ref{T70},
we obtain a permutation polynomial
$f(X)=(b^3+b^2)X^{14}+(b^4+b^3)X^{11}$ over $\F_{q^2}$. Moreover, take $q=2^8$, $\beta=b^{31110}$, $(\delta, \tilde{\delta})=(b^{53660}, b^{33334})$, where $b$ is a primitive element of $\F_{q^2}$ satisfying $b^{16}+b^5+b^3+b^2+1=0$.
Then we obtain from Theorem \ref{T65} a permutation $f(X) = b^{34047}X^{q^2-q+2} +b^{53717}X^{3q-1}$ over $\F_{q^2}$.
\end{exmp}


Note that Lemma~\ref{T21} consists of two additional cases of degree-four permutation rational functions as characterized in~\cite[Theorem~1.4]{Ding2020}. However, these cases are more computationally demanding, primarily due to the requirement that the composition $\nu \circ f \circ \rho$ must take the form $X^r h(X)^{q-1}$ for some $h(X) \in \mathbb{F}_{q^2}[X]$. The development of more effective methods to handle these cases remains an open problem and a promising direction for further research.

\section{Quasi-multiplicative equivalence}\label{S5}

In this section, we discuss the QM equivalence of our proposed permutation polynomials with the known ones. Throughout the section, $\mathbb{Z}_{q^2-1}$ denotes the ring of integers modulo $q^2-1$.

We first determine the QM inequivalence of the obtained permutation quadrinomials in Section~\ref{S3} with the known classes of permutation quadrinomials listed in Table~\ref{Table0} in Appendix~\ref{quad}. Notice that it is sufficient to show the QM inequivalence of the obtained permutation quadrinomials with the known classes of permutation quadrinomials over finite fields of odd characteristic only, as it would confirm that our obtained classes of permutation quadrinomials are new.

\begin{thm}\label{quadeq1}
Let $f(X)$ be as in Theorem~\ref{T05} then $f(X)$ is QM inequivalent to the all permutation quadrinomials $F_i$ listed in Table~\ref{Table0} for $1 \leq i \leq 19$ over $\F_{q^2}$.
\end{thm}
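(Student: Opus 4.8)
The plan is to reduce QM-equivalence to a comparison of exponent multisets modulo $q^2-1$, followed by a secondary comparison of coefficients. If $f$ were QM-equivalent to some $F_i$, then by Definition~\ref{QM-equivalence} (read over $\F_{q^2}$) there would be a unit $d$ with $\gcd(d,q^2-1)=1$ and $u,v\in\F_{q^2}^*$ such that $f(X)=uF_i(vX^d)$. Writing $F_i(X)=\sum_k c_kX^{b_k}$, the right-hand side equals $\sum_k (uc_kv^{b_k})X^{db_k\bmod(q^2-1)}$, so a necessary condition is that the exponent multiset $\{3,\,q+2,\,2q+1,\,3q\}$ of $f$ equals $d\cdot\{b_1,b_2,b_3,b_4\}\pmod{q^2-1}$; once the exponents are matched by some permutation $\sigma$, the coefficients must satisfy $uc_kv^{b_k}=D_{\sigma(k)}$. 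I would therefore first settle the exponent condition and only afterwards examine coefficients.

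First I would record two arithmetic invariants of the exponent set of $f$. Since $q\equiv1\pmod{q-1}$, each of $3,\,q+2,\,2q+1,\,3q$ is $\equiv 3\pmod{q-1}$, so the four exponents of $f$ are all congruent modulo $q-1$; and since $q\equiv-1\pmod{q+1}$ they reduce to $\{3,1,-1,-3\}\pmod{q+1}$, a set symmetric under $x\mapsto -x$. Because $(q-1)\mid(q^2-1)$ and $(q+1)\mid(q^2-1)$, any admissible $d$ is a unit modulo both $q-1$ and $q+1$; multiplication by such a unit preserves ``all residues equal'' modulo $q-1$ and preserves symmetry under negation modulo $q+1$. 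Hence $f$ can only be QM-equivalent to an $F_i$ whose exponents are \emph{all mutually congruent modulo $q-1$} and whose residues modulo $q+1$ form a negation-symmetric set. Applying these two filters to the nineteen entries of Table~\ref{Table0} eliminates every $F_i$ for which some pair of exponents differs modulo $q-1$, or whose reduction modulo $q+1$ is not symmetric; I expect this to dispose of the large majority of the cases directly.

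For the remaining entries — those whose exponents survive both filters and hence could conceivably be a unit multiple of $\{3,q+2,2q+1,3q\}$ — I would pin down the admissible $d$ explicitly. The exponents of $f$ form the arithmetic progression $3+k(q-1)$, $k=0,1,2,3$, so any QM-image is again an arithmetic progression with common difference $d(q-1)\bmod(q^2-1)$; matching this progression as an unordered set to the exponents of $F_i$ leaves only finitely many candidate $d$, which I would list and test. For each surviving pair $(F_i,d)$ I would then turn to the coefficient equations $uc_kv^{b_k}=D_{\sigma(k)}$: eliminating $u$ and $v$ by taking suitable ratios of the $D_{\sigma(k)}$ produces QM-invariant relations among the coefficients, into which I substitute the explicit expressions from~\eqref{C05} together with the defining constraint $1+\tilde\beta\beta^3=0$ of Theorem~\ref{T05}, and compare against the fixed (subfield) coefficient patterns of $F_i$ to reach a contradiction.

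The hard part will be precisely this last step: in the resonant cases the exponent multisets genuinely coincide for some $d$, so inequivalence must be decided entirely at the level of coefficients, where one must show that no choice of $u,v\in\F_{q^2}^*$ and of the free parameters $\beta,\tilde\beta,\delta,\tilde\delta$ can simultaneously satisfy all four coefficient equations for the form $F_i$. Organizing this so that a single invariant ratio handles each resonant family uniformly — rather than a separate ad hoc computation for each table entry — is the main difficulty, and verifying that the search over all units $d$ is exhaustive is where the argument is most error-prone.
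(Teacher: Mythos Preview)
Your exponent-filter strategy is misdirected here. Look at Table~\ref{Table0}: sixteen of the nineteen entries ($F_1$--$F_{16}$ and $F_{19}$) have exactly the exponent set $\{3,\,q+2,\,2q+1,\,3q\}$, identical to that of $f$. They trivially pass both of your invariant tests (all four exponents are $\equiv 3\pmod{q-1}$ and reduce to the symmetric set $\{\pm1,\pm3\}\pmod{q+1}$), so your filters eliminate essentially nothing. The expectation that this ``disposes of the large majority of cases directly'' is simply false; almost every case is resonant.

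The paper's actual argument uses a different first reduction that you missed entirely: the entries $F_1$--$F_8$ and $F_{12}$--$F_{14}$ are permutations only in characteristic $p=3$, whereas $f$ in Theorem~\ref{T05} requires $q\equiv 2\pmod 3$, so those eleven cases are discarded by a characteristic mismatch, not by any exponent invariant. For the remaining entries (those in characteristic $5$ or general odd $p$), the paper does go to coefficient ratios, but with a crucial twist your plan lacks: it does not try to show inequivalence for \emph{every} choice of $\delta,\tilde\delta$. Instead it substitutes a specific convenient value (e.g.\ $\tilde\delta=2\delta^3$ or $\tilde\delta=-\delta^3$) into the ratio conditions such as $(D_0/D_3)^{q+1}=1$ and derives a polynomial constraint on $\delta$ of bounded degree, which cannot hold for all $\delta\in\F_{q^2}\setminus\F_q$ once $q$ is large. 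This shows only that \emph{some} members of the family are new, which is what the theorem is really claiming. Your plan, by contrast, appears to aim at proving that no parameter choice yields an $F_i$, a stronger statement that may not hold and is in any case not what is needed.
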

\begin{proof}
Notice that $F_i$ for $i \in \{1,2,3,4,5,6,7,8,12,13,14 \}$ are permutations for $p=3$. Since $q \equiv 2 \pmod 3$ for $f(X)$ to be a permutation, there is no need to discuss the QM equivalence between $f(X)$ and $F_i$ for $i \in \{1,2,3,4,5,6,7,8,12,13,14 \}$. Next, we suppose that $f(X)$ is QM equivalent to $F_9(X)=X^3 + aX^{q+2} + bX^{2q+1} + cX^{3q}$, where $q = 5^m$, $c = 4$, $b = a+2$, $a \neq -1$, and $m$ is odd. Then there exist $u, v \in \mathbb{F}_{q^2}^{*}$ and a positive integer $1 \leq d \leq q^2 - 2$ with $\gcd(d, q^2 - 1) = 1$ such that
$uf(vX^d) = F_9(X)$
or equivalently,
\begin{align*}
 uv^{3q}D_3X^{3qd} + uv^{2q+1}D_2X^{(2q+1)d} + uv^{q+2}D_1X^{(q+2)d}&  + uv^3 D_0X^{3d} = \\
& X^3 + aX^{q+2} + bX^{2q+1} + cX^{3q}.\\
\end{align*}
This implies that the following sets of coefficients of $uf(vX^d)$ and $F_9(X)$ are equal
\[
    A := \{uv^{3q}D_3, uv^{2q+1}D_2, uv^{q+2}D_1, uv^3 D_0\} = \{1, a, b, c\} .
\]
It follows that one of the coefficients from set $A$ must be $1$ and one must be $c=4$. First assume that $uv^3 D_0 = 1$ and $uv^{3q}D_3 = c = 4$, which gives
\[
    \frac{D_0}{v^{3(q-1)}D_3} = -1.
\]
Raising both sides to the $(q+1)$-th power and substituting the explicit values of $D_0$ and $D_3$, we get
\[
    \frac{{D_0}^{q+1}}{{D_3}^{q+1}} = \frac{(\delta^3 - \tilde{\delta}^q)(\delta^{3q} - \tilde{\delta})}{(\delta^{3q} - \tilde{\delta}^q)(\delta^3 - \tilde{\delta})}  = 1.
\]
This simplifies to $(\delta^3 - \delta^{3q})(\tilde{\delta}^q - \tilde{\delta}) = 0$. Since $\tilde{\delta} \notin \mathbb{F}_q$, we deduce that $\delta^3 = \delta^{3q}$, which is impossible for all $\delta \in \mathbb{F}_{q^2} \setminus \mathbb{F}_q$ because one can always select $\delta \in \mathbb{F}_{q^2} \setminus \mathbb{F}_q$ such that $\delta^3 \notin \mathbb{F}_q$ for large $q$. Now, let us assume that $uv^3 D_0=1$ and $uv^{q+2}D_1=c=4$ which implies that $\frac{{D_0}^{q+1}}{{D_1}^{q+1}}=1$. Then substituting the values of $D_0$, $D_1$, we get
 \[
 \frac{{D_0}^{q+1}}{{D_1}^{q+1}}=\dfrac{-(\delta^{3}-{\tilde \delta}^q)(\delta^{3q}-{\tilde \delta})}{(\delta^{2q+1}-{\tilde \delta})(\delta^{q+2}-{\tilde \delta}^q)}=1.
 \]
This further leads to the following equation
 \[
 (\delta^{3}-{\tilde \delta}^q)(\delta^{3q}-{\tilde \delta})+(\delta^{2q+1}-{\tilde \delta})(\delta^{q+2}-{\tilde \delta}^q)=0.
 \]
Substituting  ${\tilde \delta}=2\delta^3$, we obtain
 \(
 \delta^6-\delta^{q+5}+\delta^{6q}-\delta^{5q+1}=0,
 \)
 which implies $(\delta-\delta^q)(\delta^5-\delta^{5q})=0$. This can only hold if $\delta^5 \in \F_q$ or equivalently $\delta \in \F_q$, which is not possible. One can use the similar techniques as above for the remaining cases. Similarly, we can show $f(X)$ is QM inequivalent to the permutation quadrinomial $F_{10}(X)$. 
 
 Next, we show that $f(X)$ is QM inequivalent to $F_{11}(X)=X^3+aX^{q+2}+bX^{2q+1}+cX^{3q},$ where $q=5^m, c=a+2,b=2a, a+2$ is a square element of $\F_{q}$ and $m$ is odd. On the contrary, assume that $f(X)$ is QM equivalent to $X^3+aX^{q+2}+bX^{2q+1}+cX^{3q}.$ Then, we have $\{uv^{2q+1}D_2,uv^{3q}D_3,uv^3 D_0,uv^{q+2}D_1\}=\{c,b,a,1\}$. First, suppose that $(uv^{2q+1}D_2,uv^{3q}D_3,uv^3 D_0,uv^{q+2}D_1)=(c,b,a,1)$. Therefore, by using the condition $b=2a$, we get $v^{3q}D_3=2v^{3}D_0$, or equivalently, $2\left(\frac{D_0}{D_3}\right)=v^{3q-3}$. Raising the previous expression by $q+1$, we obtain 
 \[
 \frac{{D_0}^{q+1}}{{D_3}^{q+1}}=\dfrac{(\delta^{3}-{\tilde \delta}^q)(\delta^{3q}-{\tilde \delta})}{(\delta^{3q}-{\tilde \delta}^q)(\delta^3-{\tilde \delta})}=-1.
 \]
 We now choose $\tilde \delta=2\delta^3$ to get $(\delta^3+\delta^{3q})^2=0$, that is, $\delta^{3q}=-\delta^3$ or,  $\delta^{3q-3}=-1$. Squaring the last expression, we get $\delta^{6q-6}=1$, which does not hold for all $\delta \in \F_{q^2} \setminus \F_{q}$ for sufficiently large $q$.  We can apply the similar technique for the remaining possibilities.
 
We now show that $f(X)$ is QM inequivalent to $F_{15}(X)=X^3+aX^{q+2}+bX^{2q+1}+cX^{3q}$, where $p=5, c=(-1)^t+1,b=(-1)^ta+2$ and $a^{p^m}+a+2(-1)^t \neq 0$. If $t$ is odd, then $c=0$. Thus, there is no need to discuss the QM equivalence of $f(X)$ with $F_{15}(X)$ in this case. Let us now assume $t$ is even, then $c=2$. In this case, one can use the argument used for showing the QM inequivalence of permutation quadrinomial $f(X)$  with $F_{11}(X)$ and see that $f(X)$ is QM inequivalent to $F_{15}(X)$. For the permutation quadrinomial $F_{16}(X)=a_1X+a_2X^{s_1(p^m-1)+1}+X^{s_2(p^m-1)+1}+a_3X^{s_3(p^m-1)+1}$, we use SageMath to verify that it is QM inequivalent to $f(X)$.

We will now discuss the QM equivalence of $f(X)$ with permutation quadrinomial $F_{17}(X)=a_1X+a_2X^{s_1(p^m-1)+1}+X^{s_2(p^m-1)+1}+a_3X^{s_3(p^m-1)+1}$, where   $(s_1,s_2,s_3)=(\frac{-1}{p^k-2},1,\frac{p^k-1}{p^k-2})$, $a_1 \notin \mu_{q+1}, a_2^{p^m}=\frac{a_3}{a_1}\in \mu_{q+1}$, and $\left(-\frac{a_3}{a_1}\right)^{\frac{p^{m}+1}{\gcd(p^k-1,p^m+1)}}\neq 1$.
 Suppose that $f(X)$ is QM equivalent to $F_{17}(X)$. Then
 \[
    A' := \{uv^{3q}D_3, uv^{2q+1}D_2, uv^{q+2}D_1, uv^3 D_0\} = \{1, a_1, a_2, a_3\} .
\]
It is sufficient to show that for any two choices of $a_3$ and $a_1 \in A'$, we can always choose $\delta, \tilde \delta \in \F_{q^2} \setminus \F_q$ such that $\frac{a_3}{a_1} \not \in \mu_{q+1}$. Here, we only show for $a_3=uv^{3q}D_3$ and $a_1=uv^{2q+1}D_2$, and the remaining cases can be done in a similar manner. On the contrary, assume that 
\[
\left(\dfrac{a_3}{a_1}\right)^{q+1}=\left(\dfrac{\delta^3-\tilde\delta}{(-3)(\delta^{q+2}-\tilde\delta)}\right)^{q+1}=1.
\]
 This give us $(\delta^{3q}-\tilde \delta^q)(\delta^3-\tilde\delta)=9(\delta^{2q+1}-\tilde \delta^q)(\delta^{q+2}-\tilde\delta)$. One can take $\tilde \delta=-\delta^3$, to get $4\delta^{3q+3}=9\delta^{2q+2}(\delta+\delta^q)^2$, which implies $9\delta^{2q}+9\delta^{2}+5\delta^{q+1}=0$.  In particular, when $p=5$, we obtain that $\delta^{2q}+\delta^2=0$. However, it is not always true as one can choose $\delta \in \F_{q^2}\setminus \F_{q}$ such that $\delta^{4q-4}\neq 1$.  This methodology also cover the study of QM equivalence of $f(X)$ with the permutation quadrinomial $F_{18}(X)$.
 
Now, we show that $f(X)$ is QM inequivalent to the permutation quadrinomial $F_{19}(X)$. On the contrary, suppose $f(X)$ and $F_{19}(X)=cX^{3q}+bX^{2q+1}+aX^{q+2}+X^3 \in \F_q[X]$ are QM equivalent. Then there exist $u,v \in \F_{q^2}^{*}$ and a positive integer $1 \leq d < q^2-1$ with $\gcd(d,q^2-1)=1$ such that $uf(vX^d)=cX^{3q}+bX^{2q+1}+aX^{q+2}+X^3.$ Therefore, the coefficients of the polynomial $uf(vX^d)$ are from the set $\{1,a,b,c\}$. Let $uv^{3q}D_3=1$, then we get that the remaining coefficients of $uf(vX^d)$ as $\frac{D_2}{D_3} v^{-q+1},  \frac{D_1}{D_3} v^{-2q+2}$ and $\frac{D_0}{D_3} v^{-3q+3}$ are in $\F_q$. Hence, $$\left(\frac{D_2}{D_3} v^{-q+1}\right) \left(\frac{D_1}{D_3} v^{-2q+2}\right)  \left(\frac{D_0}{D_3} v^{-3q+3}\right)^{-1} \in \F_q.$$
 Thus, we have $\frac{D_1 D_2}{D_0 D_3} \in \F_q$. Substituting the values of $D_0, D_1, D_2$ and $D_3$, we obtain 
 $$\dfrac{(\delta^{2q+1}-\tilde\delta)(\delta^{q+2}-\tilde\delta)}{(\delta^{3q}-\tilde\delta)(\delta^{3}-\tilde\delta)} \in \F_q.$$
Using SageMath, we get several choices for $\delta, \tilde \delta \in \F_{q^2} \setminus \F_q$ such that the above condition is not satisfied.
This completes the proof.
\end{proof}

\begin{rmk}
 In a similar way to the Theorem~\ref{quadeq1}, one can see that the permutation quadrinomials obtained in Theorem~\ref{T15} are  QM inequivalent to the known permutation quadrinomials in Table~\ref{Table0}
\end{rmk}

\begin{thm}\label{quadeq2}
 Let $f(X)$ be as in Theorem~\ref{T50} then $f(X)$ is QM inequivalent to the all permutation quadrinomials $F_i$ listed in Table~\ref{Table0} for $1 \leq i \leq 19$ over $\F_{q^2}$.
\end{thm}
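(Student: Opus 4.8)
The first and most decisive observation is a characteristic constraint. The polynomial $f$ of Theorem~\ref{T50} permutes $\F_{q^2}$ only when $q\equiv 0\pmod 3$, i.e. in characteristic three, and QM equivalence preserves the ground field $\F_{q^2}$. Hence $f$ cannot be QM equivalent to any $F_i$ of Table~\ref{Table0} that permutes only in a characteristic other than three. This immediately removes the families requiring $p=5$ (namely $F_9,F_{10},F_{11},F_{15}$), which were precisely the families carrying the weight in the proof of Theorem~\ref{quadeq1}. The roles are thus reversed: the families $F_i$ with $i\in\{1,\dots,8,12,13,14\}$, dismissed there because they live in characteristic three, are now the ones that demand genuine work, while $F_{16},F_{17},F_{18},F_{19}$ can be treated much as before.

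\textbf{Coefficient and exponent matching.} Suppose, for contradiction, that $uf(vX^d)=F_i(X)$ for some $u,v\in\F_{q^2}^{*}$ and $1\le d<q^2-1$ with $\gcd(d,q^2-1)=1$. Writing out $uf(vX^d)$, its nonzero coefficients are $uv^{3}D_0,\ uv^{q+2}D_1,\ uv^{2q+1}D_2,\ uv^{3q}D_3$, sitting at the exponents $3d,(q+2)d,(2q+1)d,3qd$ modulo $q^2-1$. For the Niho-type $F_i$ (exponent set $\{3,q+2,2q+1,3q\}$) I first match exponent multisets; since $\gcd(3,q^2-1)=1$ in characteristic three, a short check shows that $d$ must stabilise $\{3,q+2,2q+1,3q\}$, and in fact $d\in\{1,q\}$, the identity and the Frobenius-type swap $3\leftrightarrow 3q$, $q+2\leftrightarrow 2q+1$. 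Equating the coefficient multiset $\{uv^{3}D_0,uv^{q+2}D_1,uv^{2q+1}D_2,uv^{3q}D_3\}$ with that of $F_i$ then produces the usual system of equations in $u,v$ and the $D_j$.

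\textbf{The invariant.} To eliminate $u$ and $v$, I exploit the relation $3+3q=(q+2)+(2q+1)$: the balanced ratio $\dfrac{D_1D_2}{D_0D_3}$ is unchanged by the scalings by $u,v$ and by the swap induced by $d=q$, so it is a genuine QM invariant of $f$. By Equation~\eqref{C50} it equals $\dfrac{\alpha^2(\delta^q-\delta)^2}{(\alpha\delta^q-\delta^{3q}+\tilde\delta)(\delta^3-\alpha\delta-\tilde\delta)}$. For every characteristic-three $F_i$ with coefficients in $\F_q$ the corresponding quantity lies in $\F_q$, whereas, treating $\tilde\delta\in\F_{q^2}\setminus\F_q$ as a free parameter (and fixing a suitable $\delta$), I can force the displayed expression to fail $z^q=z$ and hence to lie outside $\F_q$; this contradiction rules out equivalence. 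A second, complementary invariant comes from the characteristic-three identity $D_1=\beta D_2$ with $\beta\in\mu_{q+1}$, so that $(D_1/D_2)^{q+1}=1$; this separates $f$ from any family whose two middle coefficients have ratio outside $\mu_{q+1}$.

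\textbf{Remaining families and the main obstacle.} For $F_{16},F_{17},F_{18}$ the exponents have the shape $1+s_j(q-1)$ (constant term $r=1$) rather than $3+s(q-1)$, so a match would demand $3d\equiv 1$-type congruences; these I would exclude by the $\mu_{q+1}$-membership argument used for $F_{17}$ in Theorem~\ref{quadeq1}, or by the same SageMath verification when an explicit counterexample is quicker. For $F_{19}\in\F_q[X]$ the identical $\frac{D_1D_2}{D_0D_3}\in\F_q$ argument applies verbatim with the characteristic-three values of the $D_j$. The genuine obstacle is the block $F_1$--$F_8,F_{12}$--$F_{14}$: their explicit coefficient patterns (not reproduced in this excerpt) must be fed case by case into the matching step, and I expect verifying that no assignment of the coefficient set of $F_i$ to $\{uv^3D_0,uv^{q+2}D_1,uv^{2q+1}D_2,uv^{3q}D_3\}$ is consistent to be the most laborious part, handled uniformly, I hope, by the single invariant $\frac{D_1D_2}{D_0D_3}$ together with the freedom to choose $\tilde\delta$.
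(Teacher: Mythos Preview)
Your route through the balanced invariant $\dfrac{D_1D_2}{D_0D_3}$ is genuinely different from the paper's and, for the families with coefficients in $\F_q$ (namely $F_1$--$F_8$ and $F_{19}$), it is cleaner: once $d\in\{1,q\}$ is granted, QM equivalence forces $\dfrac{D_1D_2}{D_0D_3}\in\F_q$, and with the $D_j$ of \eqref{C50} this reduces to the single trace condition $\tilde\delta+\tilde\delta^{q}=(\delta^3-\alpha\delta)+(\delta^3-\alpha\delta)^q$, easily violated by choice of $\tilde\delta$. The paper takes a more pedestrian path: it works only with the coefficient \emph{multiset} equality $\{uv^3D_0,uv^{q+2}D_1,uv^{2q+1}D_2,uv^{3q}D_3\}=\{1,a,b,c\}$, fixes one set bijection at a time, and uses the relation $b=\pm a$ (shared by all of $F_1$--$F_8,F_{12}$--$F_{14}$) to get $(D_j/D_k)^{q+1}=1$ for whichever pair $(D_j,D_k)$ receives $(a,b)$; it then specialises $\alpha=-1$, $\tilde\delta=\delta^3$, $m$ odd to reach $\delta^{2q-2}=1$ in the worked case $(j,k)=(1,3)$, and dismisses the remaining bijections with ``similar argument''. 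Note the paper never pins down $d$; your assertion $d\in\{1,q\}$ is plausible but unproved, and your invariant is only stable under those two values, so you must either prove the claim or, like the paper, run over all set bijections.

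The real gap is $F_{12},F_{13},F_{14}$. These come from \cite{Gupta22} and allow $a,c\in\F_{q^2}\setminus\F_q$ (witness the side conditions $a+a^{p^m}\neq 0$, etc.), so your ``$\frac{D_1D_2}{D_0D_3}\in\F_q$'' obstruction evaporates: the target value $\frac{ab}{c}=\pm a^2/c$ can lie anywhere in $\F_{q^2}$. Your secondary observation $(D_1/D_2)^{q+1}=1$ cannot rescue you either, because $b/a=\pm1\in\mu_{q+1}$ on the $F_i$ side as well---indeed, under the natural assignments from $d\in\{1,q\}$ the pair $(a,b)$ lands exactly on $(D_1,D_2)$, so the $b=\pm a$ relation is \emph{automatically} satisfied by $f$ and yields no contradiction. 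For $F_{13}$ ($c=\pm1$) and $F_{14}$ ($c=-(\pm1+a)$) you can recover by using the extra relation on $c$: e.g.\ for $F_{13}$ the natural assignment forces $(D_3/D_0)^{q+1}=1$, which unwinds to $(\delta-\delta^q)\bigl((\delta-\delta^q)^2-\alpha\bigr)(\tilde\delta^q-\tilde\delta)=0$ and fails generically. For $F_{12}$, however, $c$ is essentially a free parameter, and neither your invariants nor the bare $b=\pm a$ move settles the matter; you will need to exploit Gupta's condition on $\epsilon$ or argue via the non-natural bijections as the paper does.
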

\begin{proof}
 We first consider the permutation quadrinomial $F_1(X)$. Assume that \( f(X) \) is QM equivalent to  
\[
F_1(X)=X^3 + aX^{q+2} + bX^{2q+1} + cX^{3q},
\]  
when \( p = 3 \), \( b = -a \), \( c = a \neq -1 \), and \( a^{\frac{q-1}{2}} = 1 \). Hence, there exist \( u, v \in \mathbb{F}_{q^2}^{*} \) and a positive integer \( 1 \leq d \leq q^2-2 \) with \( \gcd(d, q^2-1) = 1 \) such that  
\[
uf(vX^d) = X^3 + aX^{q+2} + bX^{2q+1} + cX^{3q}.
\]
This further implies that  
\[
A'' := \{uv^{3q}D_3, uv^{2q+1}D_2, uv^{q+2}D_1, uv^3 D_0\} = \{1, a, b, c\}.
\]
First, suppose that \( uD_0 v^3 = 1 \), \( a = uD_3 v^{3q} \), and \( b = uD_1 v^{q+2} \). Computing \( b = -a \), we get \( \frac{D_1}{D_3} = -v^{2q-2} \). Raising \( \frac{D_1}{D_3} = -v^{2q-2} \) to the power \( q+1 \), we obtain  
\[
\left(\frac{D_1}{D_3}\right)^{q+1} = \left(\frac{\alpha \beta^2 (\delta^q - \delta)}{\delta^3 - \alpha\delta - \tilde{\delta}}\right)^{q+1} = 1.
\]
One can verify that for \( \alpha = -1 \), \( \tilde{\delta} = \delta^3 \), and \( m \) odd, the above expression renders \( \delta^{2q-2} = 1 \), which is not true for all \( \delta \in \mathbb{F}_{q^2} \setminus \mathbb{F}_{q} \) when \( q \) is large. Using a similar argument, we can show that \( f(X) \) is QM inequivalent to  
\(
X^3 + aX^{q+2} + bX^{2q+1} + cX^{3q}
\)
for other possible choices of \( a, b, c \) from the set \( A'' \).  

Notice that the permutation quadrinomials $F_i$ for $i \in \{2,3,4,5,6,7,8,12,13,14 \}$ are QM inequivalent to \( f(X) \) by applying the similar technique as above, since \( b = \pm a \). Next, consider the permutation quadrinomial  
\[
F_{16}(X)=a_1X + a_2X^{s_1(p^m-1)+1} + X^{s_2(p^m-1)+1} + a_3X^{s_3(p^m-1)+1},
\]
where \( p \) is odd, \( \gcd(3, p-1) = 1 \), and  
\(
\theta_1(2\theta_4 + \theta_3 - 3\theta_1) = \theta_4(\theta_3 - \theta_4),
\)
with \( \theta_1 \in \mathbb{F}_{p^m}^{*} \), \( \theta_2 \in \mathbb{F}_{p^m} \), and \( \theta_2^2 - 4\theta_1\theta_4 \) being a square in \( \mathbb{F}_{p^m}^{*} \), where  
\(
\theta_1 = a_1a_3^{p^m} - a_2, \quad \theta_2 = a_1a_2^{p^m} - a_3, \quad \theta_3 = a_1^{p^m+1} + a_2^{p^m+1} - a_3^{p^m+1} - 1, \quad \theta_4 = a_1^{p^m+1} - 1.
\)
It is confirmed by experiments that there exist some \( \delta \) and \( \tilde{\delta} \) in \( \mathbb{F}_{q^2} \setminus \mathbb{F}_q \) for which this polynomial is QM inequivalent to \( f(X) \).  

The argument used to show the QM inequivalence of \( f(X) \) with the permutation quadrinomials $F_i$ for $17 \leq i \leq 19$ follows along similar lines as in Theorem~\ref{quadeq1}.
\end{proof}

\begin{rmk}
 Notice that the permutation quadrinomials obtained in Theorem~\ref{T51} are  QM inequivalent to the known permutation quadrinomials in Table~\ref{Table0} using the similar arguments as used in Theorem~\ref{quadeq2}.
\end{rmk}

We now discuss the QM equivalence of proposed classes of permutation pentanomials in Section~\ref{S4} with the known classes of permutation pentanomials listed in Table~\ref{Table1} in Appendix~\ref{pent}.
\begin{thm}\label{penteq1}
 Let $f(X)$ be as in Theorem~\ref{T60}. Then $f(X)$ is QM inequivalent to the all permutation pentanomials $G_i$ listed in Table~\ref{Table1} for $1 \leq i \leq 36$ over $\F_{q^2}$. 
\end{thm}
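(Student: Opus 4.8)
The plan is to follow the coefficient-and-exponent matching strategy already used in Theorems~\ref{quadeq1} and~\ref{quadeq2}. Writing out $f(X)=X^4 h_1(X^{q-1})$ from Theorem~\ref{T60} gives the pentanomial
\[
f(X) = (N_4+\tilde{\delta})X^{4q} + N_3 X^{3q+1} + N_2 X^{2q+2} + N_1 X^{q+3} + (N_0+\beta^4\tilde{\delta})X^4
\]
over $\F_{q^2}$, whose exponent set is $\{4+k(q-1): 0 \le k \le 4\}$ and which lives in even characteristic, since Theorem~\ref{T60} requires $q$ even.

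The first reduction is by characteristic. Every $G_i$ in Table~\ref{Table1} that permutes $\F_{q^2}$ only for odd $p$ is automatically QM-inequivalent to $f(X)$, because QM-equivalence is an identity between polynomials over one fixed field and $f(X)$ lives over $\F_{2^{2m}}$; this removes all odd-characteristic entries at once. For each remaining even-characteristic $G_i$, I would suppose for contradiction that $uf(vX^d)=G_i(X)$ for some $u,v \in \F_{q^2}^{*}$ and $d$ with $\gcd(d,q^2-1)=1$. Matching exponents forces the multiset $\{d(4+k(q-1)) \bmod (q^2-1)\}_{k=0}^{4}$ to equal the exponent multiset of $G_i$, while matching coefficients forces the corresponding equality of coefficient multisets.

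Once a specific assignment of coefficients to exponents is fixed, I would eliminate $u$ and $v$ by forming ratios of pairs of coefficients and raising them to the $(q+1)$-th power, so that the $v$-contribution collapses via $v^{(q-1)(q+1)}=v^{q^2-1}=1$. Substituting the explicit values of $N_0,\dots,N_4$ from~\eqref{C60} and specializing $\tilde{\delta}$ to a convenient power of $\delta$ (chosen so that every coefficient stays nonzero) then reduces each relation to a polynomial identity in $\delta$ alone. As in the quadrinomial arguments, this identity collapses to a condition of the form $\delta^{k(q-1)}=1$ for some small $k$, i.e. $\delta \in \F_q$, contradicting $\delta \in \F_{q^2}\setminus\F_q$; one then observes that for large $q$ there is always a choice of $\delta \in \F_{q^2}\setminus\F_q$ avoiding the finitely many exceptional values, so no such QM-equivalence can hold.

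The hard part will be the combinatorial growth relative to the quadrinomial case: with five terms there are many more coefficient-to-exponent assignments to dispose of, and the exponent-matching condition interacts delicately with the constraint $\gcd(d,q^2-1)=1$. The key simplification I expect to exploit is that all five exponents are congruent to $4$ modulo $q-1$ and are evenly spaced by $q-1$; this rigid pattern should severely restrict the admissible $d$ and hence the viable reorderings, cutting most $G_i$ down to a small number of coefficient relations handled by the $(q+1)$-power-ratio substitution above. Any genuinely resistant cases, together with finitely many small-$q$ configurations that escape the asymptotic argument, would be settled by direct SageMath verification, precisely as in Theorems~\ref{quadeq1} and~\ref{quadeq2}.
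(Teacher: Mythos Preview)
Your strategy matches the paper's: compare coefficient multisets, form ratios and raise to the $(q+1)$-th power to eliminate $u$ and $v$, substitute the explicit $N_i$ from~\eqref{C60} with a convenient specialization of $\tilde\delta$, bound the number of admissible $\delta$, and fall back on SageMath for the stubborn cases. Two simplifications you overlook: Table~\ref{Table1} is entirely in even characteristic, so your first reduction is vacuous; and the bulk of its entries ($G_1$--$G_{17}$, $G_{19}$--$G_{31}$, $G_{33}$) have all coefficients equal to $1$ (or otherwise repeated), so the coefficient-multiset equality immediately forces two coefficients of $uf(vX^d)$ to coincide and hands you the ratio relation with no exponent bookkeeping whatsoever --- only $G_{18}$ is actually handled by exponent matching (giving $d\in\{1,q\}$), and $G_{34}$ via its side conditions $a=c$ or $c+ad=0$, where the resulting constraint on $\delta$ is a polynomial of degree about $10q$ rather than the clean $\delta^{k(q-1)}=1$ you anticipate, though the same finiteness count applies.
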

\begin{proof}
We will first see that $f(X)$ is QM inequivalent to the permutation pentanomials $G_i$ for $1 \leq i \leq 17$, $19 \leq i \leq 31$ and $i=33$ of Table \ref{Table1}. For this, we will only show the QM inequivalence of $f(X)$ with the permutation pentanomial $G_1(X)$ . For the permutation pentanomials $G_i$ for $2 \leq i \leq 17$, $19 \leq i \leq 31$ and $i=33$, a similar approach can be followed.
Let us assume that $f(X)$ is QM equivalent to $G_1(X)=X^5 + X^{2^m+4} + X^{3 \cdot 2^m+2} + X^{4 \cdot 2^m +1} + X^{5 \cdot 2^m}$ when $m \not \equiv 0 \pmod 4$. Hence, there exist $u,v\in \F_{q^2}^{*}$ and a positive integer $1 \leq d \leq q^2-2$ with $\gcd(d,q^2-1)=1$ such that 
 $$uf(vX^d)=X^5 + X^{2^m+4} + X^{3 \cdot 2^m+2} + X^{4 \cdot 2^m +1} + X^{5 \cdot 2^m}.$$
On comparing the coefficients in the above equation, we obtain the following equality $$uv^{4q}(N_4+\tilde \delta)=uv^{3q+1}N_3=uv^{2q+2}N_2=uv^{q+3} N_1=uv^4(N_0+\beta^4\tilde\delta)=1,$$
where $N_0, N_1, N_2, N_3$ and $N_4$ are coming from Theorem~\ref{T60}. It is easy to observe that if $uv^{4q}(N_4+\tilde \delta)=uv^{3q+1}N_3$ then $\frac{N_4+\tilde \delta}{N_3}=v^{1-q}$. Raising $\frac{N_4+\tilde \delta}{N_3}=v^{1-q}$ to the power $q+1$, we obtain $\left(\frac{N_4+\tilde \delta}{N_3}\right)^{q+1}=1$. Substituting the values of $N_3=\alpha\beta(\delta+\delta^q)$ and $N_4=\delta^4+\delta^2+\alpha\delta$ in the expression $\left(\frac{N_4+\tilde \delta}{N_3}\right)^{q+1}=1$, we get $(\delta^{4q}+\delta^{2q})(\delta^{4}+\delta^{2})=\delta^2+\delta^{2q}$ by setting $\tilde \delta=\delta$ and $\alpha=1$. This further implies that $(\delta^{2q+2}+1)(\delta^{2q+2}+\delta^{2q}+\delta^2)=0$. Thus, we have either $\delta^{q+1}=1$ or $\delta^{q+1}+\delta^q+\delta=0$. Therefore, $\delta$ can have at most a total of $2(q+1)$ choices in $\F_{q^2}$. However, we have $q^2-q > 2(q+1)$ choices of $\delta \in \F_{q^2} \setminus \F_q$ for $q \geq 4$. This shows that $f$ is QM inequivalent to the permutation pentanomial $G_1(X)$.

We are now left with determining the QM equivalence between $f(X)$ and the permutation pentanomials $G_i(X)$ for $i \in \{18, 32, 34, 35, 36\}$ listed in Table \ref{Table1}. Denote the set of exponents of $uf(vX^d)$ by $C:=\{4d,(q+3)d,(2q+2)d,(3q+1)d,4qd\}$. It is evident that $G_{18}(X)$ is a permutation polynomial if and only if $G_{18}(X^4)$ is a permutation polynomial. Note that the set of exponents of $G_{18}(X^4)$ is $D:=\{4,q+3,2q+2,3q+1,4q\}$. On comparing the sets $C$ and $D$, after some computations, we obtained that $d$ is either $1$ or $q$. Furthermore for $d=1$ and $d=q$, by using SageMath, we  get that there exist several $\delta, \tilde \delta \in\F_{q^2} \setminus \F_q$ for which $uf(vX^d)$ does not satisfy all the conditions required for $G_{18}(X)$ to be a permutation pentanomial. Thus, $f(X)$ is QM inequivalent to the permutation pentanomial $G_{18}(X)$. We now verify  the QM inequivalence of $f(X)$ with the permutation pentanomial $G_{32}(X)$ of Table \ref{Table1}. It is easy to see that some of the coefficients of the permutation pentanomial $G_{32}(X)$ are equal. Hence, one can use the similar approach as used for the permutation pentanomials $G_{1}(X)$ to show that $f(X)$ is QM inequivalent to the permutation pentanomial $G_{32}(X)$. 

Next, we consider the permutation pentanomial $G_{34}(X)$ listed in Table~\ref{Table1}. Recall that \( G_{34}(X) = X^{4q} + aX^{3q+1} + bX^{2q+2} + cX^{q+3} + dX^{4} \in \mathbb{F}_q[X] \) permutes \( \mathbb{F}_{q^2} \) if and only if either:
\begin{enumerate}
    \item \( a = c \), \( d \neq 1 \), and the polynomial \( X^3 + \frac{a+b}{b+d+1}X + \frac{a}{b+d+1} \) has no root in \( \mathbb{F}_q^* \); or  
    \item \( a \neq c \), \( b+d+1 = c+ad = 0 \), and \( \operatorname{Tr}_1^m \left(\frac{b}{a+c}\right) = 0 \).
\end{enumerate}
Now, suppose \( f(X) \) is QM equivalent to \(G_{34}(X) \). Then, there exist \( u, v \in \mathbb{F}_{q^2}^* \) and an integer \( 1 \leq d \leq q^2 - 2 \) with \( \gcd(d, q^2 - 1) = 1 \) such that  
\[
uf(vX^d) = X^{4q} + aX^{3q+1} + bX^{2q+2} + cX^{q+3} + dX^4.
\]  
This implies that the following sets of coefficients of \( uf(vX^d) \) and \( G_{34}(X) \) must be equal
\[
B := \{ uv^{4q} (N_4 + \tilde{\delta}), uv^{3q+1} N_3, uv^{2q+2} N_2, uv^{q+3} N_1, uv^4 (N_0 + \beta^4 \tilde{\delta}) \} = \{1, a, b, c, d\}.
\]  
If \( a = c \), then two coefficients in the set \( B \) are equal. In this case, one can proceed similarly to the analysis of the permutation pentanomial $G_{1}(X)$.  

Now, assume \( a \neq c \). Using the condition \( c + ad = 0 \), we show that \( f(X) \) and \(G_{34}(X) \) are QM inequivalent. Setting  
\[
uv^{4q} (N_4 + \tilde{\delta}) = 1, \quad a = uv^{2q+2} N_2, \quad c = uv^{3q+1} N_3, \quad d = uv^4 (N_0 + \beta^4 \tilde{\delta}),
\]  
the equation \( c + ad = 0 \) simplifies to  
\[
\left( \frac{N_3}{N_4 + \tilde{\delta}} \right) v^{-q+1} = \left( \frac{N_2}{N_4 + \tilde{\delta}} \right) v^{-2q+2} \left( \frac{N_0 + \beta^4 \tilde{\delta}}{N_4 + \tilde{\delta}} \right) v^{-4q+4}.
\]  
Raising both sides to the power \( q+1 \) and simplifying, we obtain  
\[
\left( \frac{N_2 (N_0 + \beta^4 \tilde{\delta})}{N_3 (N_4 + \tilde{\delta})} \right)^{q+1} = 1.
\]  
Substituting the explicit values of \( N_0, N_2, N_3, \) and \( N_4 \), we derive  
\[
\left( \frac{(\delta + \delta^q + \alpha)(\delta^{4q} + \delta^{2q} + \alpha \delta^q + \beta^4 \tilde{\delta})}{\alpha(\delta^4 + \delta^2 + \alpha \delta + \tilde{\delta})} \right)^{q+1} = 1.
\]  
By setting \( \tilde{\delta} = \delta^4 \), we find that \( \delta \) satisfies a polynomial of degree \( 10q \) over \( \mathbb{F}_{q^2} \). Hence, for \( q > 11 \), we can always choose \( \delta \in \mathbb{F}_{q^2} \setminus \mathbb{F}_q \) such that \( c + ad \neq 0 \). A similar argument applies to all other possible values of \( a, b, c, \) and \( d \).  

For the permutation pentanomials $G_{35}(X)$ and $G_{36}(X)$, a similar methodology as used for the permutation pentanomial $G_{1}(X)$ can be applied to show that these pentanomials are not QM equivalent to \( f(X) \). This completes the proof of the desired result.
\end{proof}

\begin{rmk}
  The permutation pentanomials obtained in Theorem~\ref{T61} and Theorem~\ref{T64} are QM inequivalent to the known permutation pentanomials in Table \ref{Table1} using the similar arguments as used in Theorem~\ref{penteq1}.
\end{rmk}

\begin{thm}\label{penteq2}
 The permutation pentanomials obtained in Theorem~\ref{T65}, Theorem~\ref{T66} and Theorem~\ref{T69} are QM inequivalent to the all permutation pentanomials listed in Table \ref{Table1}.
\end{thm}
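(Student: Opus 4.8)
The plan is to follow the template established in the proof of Theorem~\ref{penteq1} almost verbatim, the only change being that the coefficients are now drawn from system~\eqref{C61} instead of~\eqref{C60}. Each pentanomial produced by Theorems~\ref{T65},~\ref{T66} and~\ref{T69} has five terms whose exponents lie in $\{4q,\,3q+1,\,2q+2,\,q+3,\,4\}$ (shifted by the respective $r$) and whose coefficients form the set $\{N_4+\tilde\delta,\,N_3,\,N_2,\,N_1,\,N_0+\beta^4\tilde\delta\}$. Writing $f(X)$ for any one of these, I would suppose towards a contradiction that $f$ is QM equivalent to some $G_i$ in Table~\ref{Table1}, so that there exist $u,v\in\F_{q^2}^{*}$ and an integer $d$ with $1\le d<q^2-1$ and $\gcd(d,q^2-1)=1$ satisfying $uf(vX^d)=G_i(X)$; equating the two polynomials forces the multiset of coefficients of $uf(vX^d)$ to coincide with that of $G_i$.

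For the pentanomials $G_i$ whose coefficients are trivial or otherwise contain a repeated value, I would select two coefficients forced to be equal, so that their quotient equals a fixed power of $v^{q-1}$. Raising that quotient to the $(q+1)$-th power removes the $v$-dependence, since $(v^{q-1})^{q+1}=v^{q^2-1}=1$, and leaves a norm-one condition on an explicit rational expression in $\delta$ and $\tilde\delta$. Here the shape of~\eqref{C61} is especially convenient: because $N_1=a\beta^3(\delta+\delta^q)$, $N_2=a\beta^2(\delta+\delta^q)$ and $N_3=a\beta(\delta+\delta^q)$ differ only by powers of $\beta\in\mu_{q+1}$, and because $a\in\F_q$ and $\delta+\delta^q\in\F_q$, these norms collapse to equations such as $\delta^{q+1}=1$ or $\delta^{q+1}+\delta^q+\delta=0$, exactly as in the treatment of $G_1$ in Theorem~\ref{penteq1}. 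Each such equation has at most $O(q)$ solutions, whereas $\delta$ ranges over the $q^2-q$ elements of $\F_{q^2}\setminus\F_q$; hence for $q$ large one can always choose $\delta$ (and then a compatible $\tilde\delta$) violating the condition, giving the required inequivalence. This disposes of $G_i$ for $1\le i\le 17$, $19\le i\le 31$, and $i\in\{32,33,35,36\}$.

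For the pentanomials carrying genuinely parametric coefficients I would instead exploit the defining algebraic relations among them. The representative case is $G_{34}$, whose permutation criterion involves relations such as $c+ad=0$ and $b+d+1=0$; substituting the images $uv^{\ast}N_j$ into such a relation and clearing the $v$-powers by raising to the $(q+1)$-th power produces, after a convenient specialization such as $\tilde\delta=\delta^4$, a single-variable polynomial equation in $\delta$ of degree $O(q)$ over $\F_{q^2}$, which cannot vanish for every $\delta\in\F_{q^2}\setminus\F_q$ once $q$ exceeds a small explicit bound. For $G_{18}$ the exponents rather than the coefficients carry the information: passing to $G_{18}(X^4)$ and matching the exponent multiset of $uf(vX^d)$ pins down $d\in\{1,q\}$, after which the two remaining instances are settled by exhibiting, via SageMath, choices of $\delta,\tilde\delta\in\F_{q^2}\setminus\F_q$ that break the permutation conditions.

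The main obstacle is not any single computation but the bookkeeping of cases: for each $G_i$ one must run through all admissible assignments of the five coefficients of $f$ to the five coefficients of $G_i$, and for every assignment verify that the resulting norm-one or algebraic-relation constraint fails for some $\delta,\tilde\delta\in\F_{q^2}\setminus\F_q$. The reassuring point is that every individual constraint reduces to a bounded-degree polynomial condition on $\delta$ that cannot be an identity, so no idea beyond those already used in Theorems~\ref{penteq1} and~\ref{quadeq1} is needed; the effort lies in organizing the cases and confirming the residual small-field instances by computer.
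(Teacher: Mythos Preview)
Your proposal is correct and matches the paper's own approach: the paper's proof of Theorem~\ref{penteq2} consists of the single sentence ``The proof follows along the similar lines as of Theorem~\ref{penteq1},'' and what you have written is precisely a detailed elaboration of that strategy with the coefficients of system~\eqref{C61} in place of~\eqref{C60}.
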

\begin{proof}
 The proof follows along the similar lines as of Theorem~\ref{penteq1}.
\end{proof}

Finally, we show that the obtained classes of permutation binomials in Theorem \ref{T70} are new by studying their QM equivalence with the known permutation binomials listed in Table \ref{Table2} in Appendix~\ref{bino}.

\begin{thm}\label{binomialq1}
 The permutation binomials obtained in Theorem~\ref{T70} are QM inequivalent to the all permutation binomials $H_i$ listed in Table~\ref{Table2} for $1 \leq i \leq 8$ over $\F_{q^2}$.
\end{thm}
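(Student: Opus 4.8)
The plan is to adapt, to the two-term setting, the exponent-then-coefficient strategy already used in Theorems~\ref{quadeq1} and~\ref{penteq1}. Recall from Theorem~\ref{T70} that the obtained binomials are $f_2(X)=AX^{q^2-q+2}+BX^{3q-1}$ and $f_5(X)=AX^{q^2-3q}+BX^{q-3}$ (exponents read modulo $q^2-1$), where $A=\beta^4(\delta^{4q}+\tilde{\delta})$ and $B=\delta^4+\tilde{\delta}$, with $\beta\in\mu_{q+1}$, $\delta,\tilde{\delta}\in\F_{q^2}\setminus\F_q$ and $\tilde{\delta}\notin\{\delta^4,\delta^{4q}\}$ (the last condition guaranteeing $A,B\neq0$). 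For each known binomial $H_i=c_1X^{n_1}+c_2X^{n_2}$, $1\le i\le 8$, of Table~\ref{Table2}, I would assume for contradiction that $f_2$ (and separately $f_5$) is QM equivalent to $H_i$; by Definition~\ref{QM-equivalence} this yields $u,v\in\F_{q^2}^*$ and an integer $d$ with $\gcd(d,q^2-1)=1$ and $uf_2(vX^d)=H_i(X)$.

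First I would run the exponent analysis. The identity above forces $\{(q^2-q+2)d,(3q-1)d\}\equiv\{n_1,n_2\}\pmod{q^2-1}$. Since multiplication by the unit $d$ preserves $\gcd(\cdot,q^2-1)$ of each exponent, comparing these gcd's (note $\gcd(q^2-q+2,q^2-1)=\gcd(3q-1,q^2-1)=1$ for even $q$) already rules out several $H_i$, and otherwise reduces to the linear congruence $(e_1-e_2)d\equiv\pm(n_1-n_2)\pmod{q^2-1}$, where $e_1-e_2=(q-1)(q-3)$; together with $\gcd(d,q^2-1)=1$ this pins $d$ to a small set of residues, exactly as the exponent sets were matched for $G_{18}$ in Theorem~\ref{penteq1}.

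For each admissible $d$ I would then pass to the coefficient analysis. Matching coefficients gives $uAv^{e_1}=c_j$ and $uBv^{e_2}=c_{j'}$; dividing eliminates $u$ to leave $\tfrac{A}{B}v^{e_1-e_2}=\tfrac{c_j}{c_{j'}}$, and since $e_1-e_2$ is a multiple of $q-1$, raising to the $(q+1)$-th power annihilates $v$ (as $v^{q^2-1}=1$) and yields the norm identity $(A/B)^{q+1}=(c_j/c_{j'})^{q+1}$. Using $\beta^{q+1}=1$ and $\delta^{q^2}=\delta$ one computes $(A/B)^{q+1}=\frac{(\delta^{4q}+\tilde{\delta})(\delta^4+\tilde{\delta}^q)}{(\delta^4+\tilde{\delta})(\delta^{4q}+\tilde{\delta}^q)}$, while the permutation constraints imposed on the coefficients of $H_i$ fix the right-hand side to a value independent of $\delta,\tilde{\delta}$. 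Clearing denominators converts the equivalence into a fixed polynomial relation $P(\delta,\tilde{\delta})=0$ over $\F_{q^2}$; I would verify that $P$ is not identically zero, so that for each admissible $\delta$ it restricts to a nonzero polynomial in $\tilde{\delta},\tilde{\delta}^{q}$ of degree $O(q)$, hence with at most $O(q)$ roots. As there remain $q^2-O(q)$ admissible values of $\tilde{\delta}$, for large even $q$ one can pick $\delta,\tilde{\delta}\in\F_{q^2}\setminus\F_q$ (with $\tilde{\delta}\notin\{\delta^4,\delta^{4q}\}$) violating the identity, a contradiction. The identical computation, with $e_1-e_2=(q-1)(q-3)$ again a multiple of $q-1$, settles $f_5$.

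The step I expect to be the main obstacle is the degenerate subcases. When a known binomial has $c_1=c_2$, or more generally when the relevant ratio $(c_j/c_{j'})^{q+1}$ collapses to $1$ or to a value that makes $P$ the zero polynomial, the norm identity carries no information about $\delta,\tilde{\delta}$; in those cases I would instead compare the coefficients individually, or track an auxiliary invariant such as $(A/c_j)^{q+1}$, to recover a nontrivial relation, mirroring the equal-coefficient treatment of $G_1$ and $G_{32}$ in Theorem~\ref{penteq1}. A secondary difficulty is that the generic counting argument breaks down for the smallest fields (here $q=2,4,8$), which I would dispatch by a direct finite verification in SageMath, as done elsewhere in this section.
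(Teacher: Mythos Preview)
Your two-stage exponent-then-coefficient plan would work, but the paper's proof is considerably simpler and \emph{purely exponent-based}: the coefficient step is never invoked. The decisive observation you did not exploit is that every $H_i$ in Table~\ref{Table2} carries a term $aX$ (or $aX^2$ for $H_6$), so one of its exponents equals $1$ (resp.\ $2$, a unit modulo the odd integer $q^2-1$). Matching this exponent with one of the exponents $3-q$, $3q-1$ of $f_2$ therefore determines $d$ \emph{exactly}, up to the obvious swap; substituting into the remaining exponent identity then yields a concrete congruence in $\mathbb{Z}_{q^2-1}$ that fails for all large $q$. For example, with $H_1(X)=X^{q+2}+bX$ the choice $d\equiv 3-q$ forces $(q+2)(3-q)\equiv 3q-1\pmod{q^2-1}$, i.e.\ $2(q-3)\equiv 0$, impossible for $q>2$; the same pattern dispatches $H_3,\ldots,H_6,H_8$, while for $H_2$ and $H_7$ a short divisibility argument on the resulting congruence (using that $q-1$ is coprime to $q+1$ and to $3-q$) finishes the job. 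No property of $A,B,\delta,\tilde\delta$ is ever used.

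By contrast, your linear congruence $(e_1-e_2)d\equiv\pm(n_1-n_2)$ discards the individual identities and leaves a residual set of $d$'s, pushing you into the coefficient analysis. There the assertion that ``the permutation constraints on $H_i$ fix $(c_j/c_{j'})^{q+1}$ to a value independent of $\delta,\tilde\delta$'' is not correct as stated: for $H_1$ the parameter $b$ is only required to satisfy $b^{3(q-1)}=1$ and $b\notin\F_q$, so $b^{q+1}$ ranges over many values, and you would have to show that $(A/B)^{q+1}$ avoids \emph{all} of them, not a single fixed target. The paper's direct exponent matching sidesteps this difficulty entirely and makes the degenerate and small-$q$ subcases you flagged unnecessary.
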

\begin{proof}
Here, we discuss the QM equivalence only for the first family of permutation binomials introduced in Theorem~\ref{T70} with the permutation binomials listed in Table \ref{Table2}. One can use the similar methodology to show that the second family of permutation binomials in Theorem~\ref{T70} is QM inequivalent to permutation binomials listed in Table \ref{Table2}. Recall that, $f(X)=(\beta^4 \delta^{4q} +\beta^4  {\tilde \delta}) X^{3-q}+(\delta^4+{\tilde \delta})X^{3q-1}$ is the first family of permutation binomials we obtained.  Suppose that $f(X)$ is QM equivalent to the permutation binomial $H_1(X)$ mentioned in Table \ref{Table2}, then there must exist a positive integer $1 \leq d \leq q^2-2$ with $\gcd(d,q^2-1)=1$ and $u, v \in \F_{q^2}^{*}$ such that $f(X)= uH_1(vX^d)$. This further implies that the set of exponents of $f(X)$ and $uH_1(vX^d)$ are equal in $\mathbb{Z}_{q^2-1}$, that is, either $d \equiv 3-q \pmod {q^2-1}$ or
$d \equiv 3q-1 \pmod {q^2-1}.$ WLOG, we assume $d \equiv 3-q \pmod {q^2-1}$ and thus $d(q+2) \equiv 3q-1 \pmod {q^2-1}$. This leads to the equation $(q+2)(3-q) \equiv 3q-1 \pmod {q^2-1}$. We further obtain $2(q-1) \equiv 4  \pmod {q^2-1}$, a contradiction to the fact that $q-1 \nmid 4$ for $q>2$. Hence, $f(X)$ is QM inequivalent to the permutation binomial $H_1(X)$. Using the similar approach, one can see that the permutation binomials
$H_i(X)$ for $i \in \{3,4,5,6,8\}$ listed in Table \ref{Table2} are QM inequivalent to $f(X)$. 

We next consider the permutation binomial  $H_2(X):=X^{\frac{2^n-1}{2^t-1}+1}+ aX$, $n=2^st$,$s \in \{1,2\}$, $t$ is odd and $a\in \omega \F_{2^{t}}^{*} \cup \omega^2 \F_{2^{t}}^{*}$, where $\omega$ is primitive third root of unity. Notice that for $s=1$, the sets of exponents of permutation binomials $H_1(X)$ and $H_2(X)$ are exactly same. Hence, we are done for $s=1$. Let $s=2$ and $f(X)$ is QM equivalent to $H_2(X)$. Therefore, the set of exponents $\{3-q,3q-1\}$ and $\{d(\frac{2^n-1}{2^t-1}+1), d\}$ of $f(X)$ and $uH_2(vX^d)$, respectively, are same in $\mathbb{Z}_{q^2-1}$, where $1 \leq d \leq q^2-2$ is an integer such that $\gcd(d,q^2-1)=1$. If $d \equiv 3-q \pmod {q^2-1}$, then we get the following expression
$$((2^t+1)(2^{2t}+1)+1)(3-q) \equiv 3q-1 \pmod {q^2-1},$$
or equivalently,
$$(2^t+1)(q+1)(3-q) \equiv 4q-4 \pmod {q^2-1}.$$
Since $q-1$ is relatively prime to $q+1$ and $3-q$, we obtain that $(q-1) \mid 2^t+1= q^{1/2}+1,$ which is not possible for sufficiently large $q$. A similar argument can be applied in the case when $d \equiv 3q-1 \pmod {q^2-1}$. Finally, we show that $f(X)$ is QM inequivalent to the permutation binomial $H_7(X)=X^{d'}+aX$, $n=rk$, $d'=\frac{2^{rk}-1}{2^k-1}, \gcd(d'-1,2^k-1)=\gcd(r,2^k-1)=1$ and $a \not \in \F_{2^k}^{*}$. On the contrary, let $f(X)$ and $H_7(X)$ be QM equivalent, which leads to existence of an interger $1 \leq d \leq q^2-2$ with $\gcd(d,q^2-1)=1$ such that  $\{3-q,3q-1\}=\{dd', d\}$ in $\mathbb{Z}_{q^2-1}$. WLOG, suppose that $d \equiv 3-q \pmod {q^2-1}$ and $dd'\equiv 3q-1 \pmod {q^2-1}$, which renders 
$$\frac{2^{rk}-1}{2^k-1}(3-q) \equiv 4(q-1) \pmod {q^2-1},$$
that is,
$$\frac{q^2-1}{2^k-1}(3-q) \equiv 4(q-1) \pmod {q^2-1}.$$
This implies $\frac{q^2-1}{2^k-1}(3-q) \equiv 0 \pmod {q-1}$ and we further obtain $(q-1) \mid \frac{q-1}{2^k-1}$, a contradiction.
\end{proof}

\section{Conclusion}\label{S6}
In this paper, we have constructed several classes of permutation polynomials with a few terms (two classes of binomials, six classes of quadrinomials and six classes of pentanomials) of nontrivial coefficients over the finite field $\F_{q^2}$. Our main ingredient was the well classified permutation rational functions of degree at most $4$, which permute the projective line $\bP_1(\F_q)$. Using these permutation rational functions, we have constructed bijections on the unit circle $\mu_{q+1}$ to produce new classes of permutation polynomials with a few terms over $\F_{q^2}$. We also discuss the QM-equivalence of the proposed classes of permutation polynomials with the known ones in the literature. For future work, one may be interested in constructing new classes of permutation polynomials using the remaining cases of the permutation rational functions of degree $4$, and by using the recent classification~\cite{CSze} of degree $5$ permutation rational functions.

\section*{Acknowledgments}
We sincerely thank the editors for handling our paper, and the referees for their careful reading, valuable comments, and constructive suggestions.

\appendix
\section{permutation quadrinomials}\label{quad}
 \begin{longtable}{|m{2em}|m{17em}|m{11em}|m{4em}|}
 \caption{Known classes of permutation quadrinomials over $\F_{q^2}$ where $q=p^m$ for an odd prime $p$ and the polynomials $F_i$ for $i=1,2,\dots, 16$ have the same form.\label{Table0}}
 \\
 \hline
$F_i$& Polynomials & Conditions & Reference\\
\hline
$F_1$& $X^3 + aX^{q+2}+bX^{2q+1}+cX^{3q}$  & $p = 3, b = -a, c = a \neq -1$ and $a^{\frac{p^m-1}{2}}=1$ & ~\cite{Gupta20}\\
\hline
 $F_2$&   & $p = 3, b = -a, c = a -1$ and $(-a)^{\frac{p^m-1}{2}}=1$ & ~\cite{Gupta20}\\
\hline
 $F_3$&   & $p = 3, b = -a, c = 1-a$, $a \neq -1$ and $m$ is even & ~\cite{Gupta20}\\
 \hline
  $F_4$&   & $p = 3, b = -a, c = 1$ & ~\cite{Gupta20}\\
  \hline
  $F_5$&    & $p = 3, b = a, c = a \neq 1$ and $(-a)^{\frac{p^m-1}{2}}=1$ & ~\cite{Gupta20}\\
\hline
$F_6$&    & $p = 3, b = a, c = a+1$ and $a^{\frac{p^m-1}{2}}=1$ & ~\cite{Gupta20}\\
\hline
$F_7$&    & $p = 3, b = a, c = 2a+1, a\neq1$ and $m$ is even  & ~\cite{Gupta20}\\
\hline
$F_8$&   & $p = 3, b = a, c =- 1$ & ~\cite{Gupta20}\\
\hline
$F_9$&    & $p = 5, c=4, b = a+2, a \neq -1$ and $m$ is odd & ~\cite{Gupta20}\\
\hline
$F_{10}$&   & $p = 5, c=1, b = 2-a, a \neq 1$ and $m$ is odd & ~\cite{Gupta20}\\
\hline
$F_{11}$&   & $p = 5, c=a+2, b = 2a$, $a+2$ is a square of $\F_{5^m}$ and $m$ is odd & ~\cite{Gupta20}\\
\hline
$F_{12}$&    & $p = 3, b = (-1)^t a, (c - (-1)^t)(c-a + (-1)^t) \neq 0$ and $\epsilon + \epsilon^{p^m} \neq 0$, where $\epsilon$ is a square root of $\frac{2a}{c-a+(-1)^t}$ & ~\cite{Gupta22}\\
\hline
$F_{13}$&    & $p = 3, b = (-1)^t a, c=(-1)^{t+1}$ and $a+a^{p^m} \neq 0$ & ~\cite{Gupta22}\\
\hline
$F_{14}$&    & $p = 3, b = (-1)^t a, c=-((-1)^{t}+a)$, $a+a^{p^m}+(-1)^{t}a^{p^m+1} \neq 0$ and $m$ is even & ~\cite{Gupta22}\\
\hline
$F_{15}$&   & $p = 5, b = (-1)^t a+2, c=(-1)^{t}+1$, $a+a^{p^m}+2(-1)^{t} \neq 0$ & ~\cite{Gupta22}\\
\hline 
$F_{16}$&    & $p$ is odd, $\gcd(3,p-1)=1$,$\theta_1(2\theta_4+\theta_3-3\theta_1)=\theta_4(\theta_3-\theta_4),\theta_1 \in \F_{p^m}^{*},\theta_2 \in \F_{p^m}$ and $\theta_2^2-4\theta_1\theta_4$ is a square in $\F_{p^m}^{*},$ where $\theta_1=a_1a_3^{p^m}-a_2,\theta_2= a_1a_2^{p^m}-a_3,\theta_3=a_1^{p^m+1}+a_2^{p^m+1}-a_3^{p^m+1}-1,\theta_4=a_1^{p^m+1}-1$ & ~\cite{Chen}\\
\hline 

$F_{17}$& $a_1X+a_2X^{s_1(q-1)+1}+X^{s_2(q-1)+1}+a_3X^{s_3(q-1)+1}$, where $(s_1,s_2,s_3)=( \frac{-1}{p^k-2},1,\frac{p^k-1}{p^k-2})$& $a_1 \notin U, a_2^{p^m}=\frac{a_3}{a_1}\in U$, and $\left(-\frac{a_3}{a_1}\right)^{\frac{p^{m}+1}{\gcd(p^k-1,p^m+1)}}\neq 1$   & ~\cite{Chen}\\
\hline
$F_{18}$& $a_1X+a_2X^{s_1(p^m-1)+1}+X^{s_2(p^m-1)+1}+a_3X^{s_3(p^m-1)+1}$,   where $(s_1,s_2,s_3)=( \frac{p^k+1}{p^k+2},1,\frac{1}{p^k+2})$& $a_1 \notin U, a_3^{p^m}=\frac{a_2}{a_1}\in U$, and $\left(-\frac{a_2}{a_1}\right)^{\frac{p^{m}+1}{\gcd(p^k+1,p^m+1)}}\neq 1$   & ~\cite{Chen}\\
 \hline
$F_{19}$&   $X^3 + aX^{q+2}+bX^{2q+1}+cX^{3q}$ & see Theorems 2 and 3   & ~\cite{FB_quad_2023}\\
 \hline
\end{longtable}

\section{permutation pentanomials}\label{pent}
\begin{longtable}{|m{2em}|m{21em}|m{9em}|m{2em}|}
\caption{Known permutation pentanomials over $\F_{2^{2m}}$ \label{Table1}}\\
 \hline
$G_{i}$ & Polynomials & Conditions & Ref.\\
\hline
$G_{1}$& $X^5 + X^{2^m+4} + X^{3 \cdot 2^m+2} + X^{4 \cdot 2^m +1} + X^{5 \cdot 2^m}$ & $m \not \equiv 0 \pmod 4$ & \cite{Xu_penta_18} \\
\hline
$G_{2}$& $X^3 + X^{2^{m+1}+1} + X^{3 \cdot 2^m} + X^{4 \cdot 2^m -1} + X^{-2^m+4}$ & $m$ is odd & \cite{Xu_penta_18}\\
\hline

$G_{3}$& $X^5 + X^{2^m+4} + X^{2 \cdot 2^m+3} + X^{4 \cdot 2^m +1} + X^{5 \cdot 2^m}$ & $m \not \equiv 0 \pmod 4$ & \cite{Xu_penta_18}\\
\hline
$G_{4}$& $X^3 + X^{2^m+2} + X^{3 \cdot 2^m} + X^{4 \cdot 2^m -1} + X^{-2^m+4}$ & $m$ is odd &\cite{Xu_penta_18} \\
\hline
$G_{5}$& $X^7 + X^{2 \cdot 2^m+5} + X^{3 \cdot 2^m+4} + X^{5 \cdot 2^m +2} + X^{6 \cdot 2^m+1}$ & $\gcd(m,3)=1$ & \cite{Xu_penta_18}\\
\hline
$G_{6}$& $X^5 + X^{2^m+4} + X^{3 \cdot 2^m+2} + X^{4 \cdot 2^m +1} + X^{6 \cdot 2^m-1}$ & see Theorem 3.6 & \cite{Xu_penta_18}\\
\hline
$G_{7}$& $X^5 + X^{3 \cdot 2^m+2} + X^{4 \cdot 2^m+1} + X^{5 \cdot 2^m} + X^{6 \cdot 2^m-1}$ & $m \equiv 2 \pmod 4$ & \cite{Xu_penta_18}\\
\hline
$G_{8}$& $X^7 + X^{3 \cdot 2^m+4} + X^{4 \cdot 2^m+3} + X^{5 \cdot 2^m+2} + X^{6 \cdot 2^m+1}$ & $m \equiv 0 \pmod 4$ & \cite{Xu_penta_18}\\
\hline
$G_{9}$& $X^9 + X^{3 \cdot 2^m+6} + X^{6 \cdot 2^m+3} + X^{7 \cdot 2^m+2} + X^{9 \cdot 2^m}$ & $m$ is odd & \cite{Xu_penta_18}\\
\hline
$G_{10}$ & $X^9 + X^{2 \cdot 2^m+7} + X^{3 \cdot 2^m+6} + X^{6 \cdot 2^m+3} + X^{9 \cdot 2^m}$ & $m$ is odd & \cite{Xu_penta_18}\\
\hline
$G_{11}$& $X^{2^m+6} + X^{2 \cdot 2^m+5} + X^{5 \cdot 2^m+2} + X^{8 \cdot 2^m-1} + X^{- 2^m+8}$ & see Theorem 4.5 & \cite{Xu_penta_18}\\
\hline
$G_{12}$& $X^{2 \cdot 2^m+5} + X^{5 \cdot 2^m+2} + X^{6 \cdot 2^m+1} + X^{8 \cdot 2^m-1} + X^{- 2^m+8}$ & see Theorem 4.6 &\cite{Xu_penta_18} \\
\hline
$G_{13}$& $X^{3q-2} + X^{2q-1} + X^{q^2-q+1} + X^{q^2-2q+2} + X$ & all $m$  & \cite{LQW}\\
\hline
$G_{14}$& $X^{7q-5} + X^{3q-1} + X^{q^2-q+2} + X^{q^2-5q+6} + X$ & $m \not \equiv 0 \pmod 7$  & \cite{LQW}\\
\hline
$G_{15}$& $X^{7} + X^{3q+4} + X^{4q+3} + X^{6q+1} + X^{7q}$ & see Example 1(1)  & \cite{Deng}\\
\hline
$G_{16}$& $X^{2^k+3} + X^{3q+2^k} + X^{2^kq+3} + X^{(2^k+1)q+2} + X^{(2^k+2)q+1}$ & see Example 1(4) & \cite{Deng}\\
\hline
$G_{17}$& $X^{2^k+3} + X^{3q+2^k} + X^{(2^k+1)q+2} + X^{(2^k+2)q+1} + X^{(2^k+3)q}$ & see Example 1(5) & \cite{Deng}\\
\hline
$G_{18}$& $X+a_1 X^{\frac{1}{4}(q-1)+1}+a_2 X^{\frac{1}{2}(q-1)+1}+a_3 X^{\frac{3}{4}(q-1)+1}+a_4 X^{q}$ & see Theorem 3.1 & \cite{Zhang_penta_23}\\
\hline
$G_{19}$& $X+X^{\frac{1}{17}(q-1)+1}+X^{\frac{8}{17}(q-1)+1}+ X^{\frac{16}{17}(q-1)+1}+X^{\frac{15}{17}(q-1)+1}$ & $m$ is odd and $\gcd(17,q+1)=1$  & \cite{Zhang_penta_23} \\
\hline
$G_{20}$& $X+X^{\frac{11}{13}(q-1)+1}+X^{\frac{9}{13}(q-1)+1}+ X^{\frac{2}{13}(q-1)+1}+X^{q}$ & $\gcd(13,q+1)=\gcd(5,q+1)=1$  & \cite{Zhang_penta_23}\\
\hline
$G_{21}$& $X^{7 \cdot 2^m+1} + X^{5 \cdot 2^m+3} + X^{3 \cdot 2^m+5} + X^{2^m+7} + X^{8}$ & $\gcd(m,7)=1$ & \cite{QLiu_penta_2023}\\
\hline
$G_{22}$& $X^{6 \cdot 2^m} + X^{4 \cdot 2^m+2} + X^{2 \cdot 2^m+4} + X^{-2^m+5} + X^{6}$ & $m$ is odd and $\gcd(m,7)=1$ & \cite{QLiu_penta_2023} \\
\hline
$G_{23}$& $X^{7 \cdot 2^m-1} + X^{6 \cdot 2^m} + X^{4 \cdot 2^m+2} + X^{2\cdot 2^m+4} + X^{6}$ & $m$ is odd and $\gcd(m,7)=1$ & \cite{QLiu_penta_2023}\\
\hline
$G_{24}$& $X^{6 \cdot 2^m-2} + X^{5 \cdot 2^m-1} + X^{3 \cdot 2^m+1} + X^{2^m+3} + X^{-2^m+5}$ & $\gcd(m,7)=1$ & \cite{QLiu_penta_2023}\\
\hline
$G_{25}$& $X^{9 \cdot 2^m+1} + X^{8 \cdot 2^m+2} + X^{6 \cdot 2^m+4} + X^{4\cdot 2^m+6} + X^{10}$ & $m \equiv 2 \pmod 4$ and $\gcd(m,7)=1$ & \cite{QLiu_penta_2023}\\
\hline
$G_{26}$& $X^{8 \cdot 2^m} + X^{7 \cdot 2^m+1} + X^{5 \cdot 2^m+3} + X^{3\cdot 2^m+5} + X^{-2^m+9}$ & $m $ is even and $\gcd(m,7)=1$ & \cite{QLiu_penta_2023}\\
\hline
$G_{27}$& $X^{9 \cdot 2^m-1} + X^{5 \cdot 2^m+3} + X^{3 \cdot 2^m+5} + X^{\cdot 2^m+7} + X^{8}$ & $m $ is even and $\gcd(m,7)=1$ & \cite{QLiu_penta_2023}\\
\hline
$G_{28}$& $X^{8 \cdot 2^m+1} + X^{7 \cdot 2^m+2} + X^{5 \cdot 2^m+4} + X^{3\cdot 2^m+6} + X^{9}$ & $m $ is odd & \cite{QLiu_penta_2023} \\
\hline
$G_{29}$& $X^{7 \cdot 2^m} + X^{6 \cdot 2^m+1} + X^{4 \cdot 2^m+3} + X^{2\cdot 2^m+5} + X^{-2^m+8}$ &  $\gcd(m,3)=1$ and $m \not \equiv 2 \pmod 4$& \cite{QLiu_penta_2023}\\
\hline
$G_{30}$& $X^{8 \cdot 2^m-1} + X^{5 \cdot 2^m+2} + X^{3 \cdot 2^m+2} + X^{\cdot 2^m+6} + X^{7}$ &  $\gcd(m,3)=1$ and $m \not \equiv 2 \pmod 4$& \cite{QLiu_penta_2023}\\
\hline
$G_{31}$& $X^{7 \cdot 2^m-2} + X^{4 \cdot 2^m+1} + X^{2 \cdot 2^m+3} + X^{- 2^m+6} + X^{5}$ &  $m$ is odd &\cite{QLiu_penta_2023} \\
\hline
$G_{32}$& $a^2 X^{i(6q^2-6q)+1}+a^2 X^{i(6q-6)+1}+(a^2+b^2) X^{i(2q^2-2q)+1}+(a^2+b^2) X^{i(2q-2)+1}+c^2 X$ & see Theorem 2 & \cite{Shen} \\
\hline
$G_{33}$& $X^t +X^{r_1(q-1)+t} +X^{r_2(q-1)+t} +X^{r_3(q-1)+t} +X^{r_4(q-1)+t} $ & see Theorems 1-4 & \cite{KX} \\
\hline
$G_{34}$& $X^{4q}+aX^{3q+1} +bX^{2q+2} +cX^{q+3} +dX^{4} $ & see Theorem 3.3 & \cite{Rai} \\
\hline
$G_{35}$& $X^{q^2-q+3}+aX^{4q-1} +X^{3q} +bX^{2q+1} +aX^{3} $ & see Theorem 3.4 & \cite{Rai} \\
\hline
$G_{36}$& $X^{q^2-q+3}+aX^{4q-1} +bX^{3q} +bX^{q+2} +aX^{3} $ & see Theorem 3.5 & \cite{Rai} \\
\hline
\end{longtable}

 \section{permutation binomials}\label{bino}
\begin{longtable}{|m{3em}|m{10em}|m{15em}|m{4em}|}
\caption{Known permutation binomials over $\F_{2^{n}}$ \label{Table2}}\\
 \hline
$H_{i}$ & Polynomials & Conditions & Reference\\
\hline
$H_{1}$& $X^{q+2}+ b X$ & $n=2m, b \in \F_{q^2}\setminus\F_q$, $b^{3(q-1)}=1$ and $m>1$ is odd& \cite{BZ} \\
\hline
$H_{2}$& $X^{\frac{2^n-1}{2^t-1}+1}+ aX$ & $n=2^st$,$s \in \{1,2\}$, $t$ is odd and $a\in \omega \F_{2^{t}}^{*} \cup \omega^2 \F_{2^{t}}^{*}$, where $\omega$ is primitive third root of unity & \cite{BS} \\
\hline
$H_{3}$& $X^{3q-2}+ aX$ & see Theorem 1.1  & \cite{HL} \\
\hline
$H_{4}$& $X^{r(q-1)+1}+ aX$ & $n=2m, r\in\{5,7\}$, see Theorems 1.1-1.2 & \cite{LS} \\
\hline
$H_{5}$& $X^{2q+3}+ aX$ & see Theorem 3.5 & \cite{SG} \\
\hline
$H_{6}$& $X^{2q+4}+ aX^2$ & see Theorem 3.6 & \cite{SG} \\
\hline
$H_{7}$& $X^{d'}+aX$ & $n=rk$, $d'=\frac{2^{rk}-1}{2^k-1}, \gcd(d'-1,2^k-1)=\gcd(r,2^k-1)=1$ and $a \not \in \F_{2^k}^{*}$  & \cite{Wu} \\
\hline
$H_{8}$& $X^{6q-5}+aX$ & see Theorem 3.1  & \cite{LFW}\\
\hline
\end{longtable}

\end{document}